\newtheorem{theorem}{Theorem}
\newtheorem{lemma}[theorem]{Lemma}
\newtheorem{corollary}[theorem]{Corollary}
\newtheorem{definition}[theorem]{Definition}
\newtheorem{proposition}[theorem]{Proposition}
\begin{document}

\title{Manipulation of social choice correspondences \\ under incomplete information}

\author{
\textbf{Raffaele Berzi} \\
{\small {Dipartimento di Scienze per l'Economia e  l'Impresa} }\\
\vspace{-6mm}\\
{\small {Universit\`{a} degli Studi di Firenze}}\\
\vspace{-6mm}\\
{\small {via delle Pandette 9, 50127, Firenze, Italy}}\\
\vspace{-6mm}\\
{\small {e-mail: raffaele.berzi@unifi.it}}\\
\vspace{-6mm}\\
{\small https://orcid.org/0009-0008-1489-6975}\\
\and 
\textbf{Daniela Bubboloni} \\
{\small {Dipartimento di Matematica e Informatica U.Dini} }\\	
\vspace{-6mm}\\
{\small {Universit\`{a} degli Studi di Firenze} }\\
\vspace{-6mm}\\
{\small {viale Morgagni 67/a, 50134 Firenze, Italy}}\\
\vspace{-6mm}\\
{\small {e-mail: daniela.bubboloni@unifi.it}}\\
\vspace{-6mm}\\
{\small https://orcid.org/0000-0002-1639-9525}\\
 \and \textbf{Michele Gori}
 \\
{\small {Dipartimento di Scienze per l'Economia e  l'Impresa} }\\
\vspace{-6mm}\\
{\small {Universit\`{a} degli Studi di Firenze} }\\
\vspace{-6mm}\\
{\small {via delle Pandette 9, 50127, Firenze, Italy}}\\
\vspace{-6mm}\\
{\small {e-mail: michele.gori@unifi.it}}\\
\vspace{-6mm}\\
{\small https://orcid.org/0000-0003-3274-041X}}

\maketitle

\begin{abstract}
\noindent We study the manipulability of social choice correspondences in situations where individuals have incomplete information about others' preferences. We propose a general concept of manipulability that depends on the extension rule used to derive preferences over sets of alternatives from preferences over alternatives, as well as on individuals' level of information. We then focus on the manipulability of social choice correspondences when the Kelly extension rule is used, and individuals are assumed to have the capability to anticipate the outcome of the collective decision. Under these assumptions, we introduce some monotonicity and sensitivity properties for social choice correspondences that combined imply manipulability.  Then we prove a result of manipulability for unanimous positional social choice correspondences, and present a detailed analysis of the manipulability properties for the Borda, the plurality, the negative plurality and the Copeland social choice correspondences. 
\end{abstract}

\vspace{4mm}

\noindent \textbf{Keywords:} Social choice correspondence; Manipulability; Strategy-proofness; Extension rule; Incomplete information; Monotonicity.

\vspace{2mm}

\noindent \textbf{JEL classification:} D71, D72.

\vspace{2mm}

\noindent \textbf{MSC classification:} 91B12, 91B14.

\section{Introduction}

Consider a group of individuals who must select one or more alternatives from a given set. Assume that such a selection must be based solely on individuals' preferences, expressed through rankings of alternatives. Any procedure that associates a nonempty set of alternatives with each preference profile, that is, a complete list of individual preferences, is called a social choice correspondence ({\sc scc}). A {\sc scc} is manipulable if there are situations where an individual has an incentive to misrepresent her preferences because doing so makes the {\sc scc} produce an outcome she prefers more; a {\sc scc} is strategy-proof if it is not manipulable. If a {\sc scc} is resolute, which means that it always selects a singleton, there is no ambiguity in understanding whether, for a certain individual, an outcome is better than another. Indeed, that fact can be naturally deduced by her preferences. The well-known Gibbard-Satterthwaite Theorem (Gibbard 1973; Satterthwaite 1975) shows that any resolute {\sc scc} must be manipulable, provided that there are at least three alternatives, each alternative can potentially be an outcome of the {\sc scc}, and the {\sc scc} is not dictatorial.

When a {\sc scc} is not resolute, the definition of strategy-proofness depends on how the outcomes of the {\sc scc}s, which are, in principle, sets of alternatives of any size, are compared by individuals. The way an individual compares sets of alternatives clearly depends on her preferences over alternatives, but there are several reasonable possibilities to specify that dependence. In other words, it is possible to figure out a variety of reasonable extension rules, namely mechanisms that associate with any preference relation on the set of alternatives a preference relation on the set of the nonempty sets of alternatives.\footnote{An analysis of extension rules can be found in Barberà et al.(2004).} The use of different extension rules has led to different definitions of strategy-proofness and several impossibility results have been proved (Pattanaik 1975; G\"ardenfors 1976; Kelly 1977; Barberà 1977a, 1977b; Duggan and Schwartz 2000; Barberà et al. 2001; Taylor 2002; Ching and Zhou 2002; Sato 2008; for a survey, see also Taylor 2005). As observed by Barberà (2011), most contributions establish results for {\sc scc}s that are analogous to the classic impossibility theorem for resolute {\sc scc}s. That suggests that, even when the assumption of resoluteness is removed, there is still no significant room for strategy-proofness.

A notable way to extend individual preferences, which will be considered in this paper, was proposed by Kelly (1977). Specifically, the Kelly extension rule states that an individual prefers one set of alternatives to another if and only if every alternative in the first set is at least as good as every alternative in the second set for that individual.
This requirement is strict and, consequently, makes manipulation difficult for individuals, thereby making impossibility results for strategy-proofness even more significant. At the same time, it also offers a strong and meaningful basis for positive results. In the literature, several possibility and impossibility results involving this extension rule have been established (MacIntyre and Pattanaik 1981; Bandyopadhyay 1982, 1983; Brandt 2015; Brandt et al. 2022a, 2022b).

In this paper, we focus on versions of strategy-proofness for {\sc scc}s that explicitly take into account the informational limitations of individuals. The classic definition of manipulability for resolute {\sc scc}s, as well as its generalization to not necessarily resolute {\sc scc}s, requires that there is an individual who could potentially misreport her preferences based on the knowledge of others' reported preferences.  Thus, the failure of strategy-proofness implies the existence of an individual who has the capability to precisely know others’ preferences. That implicit assumption is definitely unrealistic in many contexts, as it is unlikely that anyone could access such detailed information. As a consequence, violating strategy-proofness may not always be a significant issue.  On the other hand, Nurmi (1987) interestingly observes that an individual might decide to deviate based on a smaller amount of information about others’ preferences. Thus,  manipulability issues become much more significant if the information an individual needs for being profitable to misrepresent her preferences is small enough and easy to obtain. That suggests the possibility to consider notions of strategy-proofness where individual information about the other's preferences is incomplete and only limited to some specific features.

This approach has been developed for resolute {\sc scc}s by Andjiga et al. (2008), Conitzer et al. (2011), Reijngoud and Endriss (2012), and Gori (2021).\footnote{Earlier contributions on preference misrepresentation under restricted beliefs on voters' actions are due to Farquharson (1969), Sengupta (1978, 1980), Bebchuk (1980), Moulin (1981).} Andjiga et al. (2008), given a positive integer $k$, associate with each individual the family of the sets of size $k$ whose elements are lists of the others' preferences, and assume that each individual is able to identify one of these sets containing the true list of others' preferences; Conitzer et al. (2011) generalize the approach by  Andjiga et al. (2008) associating with each individual a general family of sets, called information sets; Reijngoud and Endriss (2012) assume instead that individuals are given only some pieces of information extracted by an opinion poll and described via a so-called poll information function; Gori (2021) generalizes the aforementioned frameworks by letting the information sets depend on individual preferences, as well. The general idea behind the concept of strategy-proofness with limited information is similar for all the described approaches: an individual is not fully aware of the others' preferences but she only knows that the partial preference profile built using the preferences of the others belongs to a specific set; an individual decides to report false preferences if, for every partial preference profile in that set, false preferences cannot make her worse off and, for at least one partial preference profile in that set, they make her better off. Other contributions in the framework of resolute {\sc scc}s are due to Endriss et al. (2016) and Veselova (2020).\footnote{Terzopoulou and Endriss (2019) investigate the problem of manipulation under partial information in the framework of judgment aggregation.}

In this paper, we extend the approach by Gori (2021) to the framework of not necessarily resolute {\sc scc}s. We propose a general definition of manipulability for {\sc scc}s under incomplete information based on two fundamental parameters: an extension rule, which describes how to get a preference over sets of alternatives from any relation over alternatives, and a so-called information function profile, which describes the level of knowledge of each individual (Definition \ref{defpimanip}).\footnote{It is also worth mentioning that the analysis of strategy-proofness under incomplete information for multi-valued voting rule has been recently considered by Tsiaxiras (2021) in a framework different from that of {\sc scc}s.}

After presenting a comparison between the general definition and other standard definitions of strategy-proofness, we focus on the Kelly extension rule and the so-called winner information function profile. The winner information function profile formalizes the assumption that, for a given {\sc scc}, each individual is able to anticipate the set of selected alternatives. Consequently, each individual knows that the partial preference profile formed by the others' preferences has the property that, once combined with her own preferences, it generates a preference profile that leads the {\sc scc} to determine a given outcome. The idea of winner information function profile was introduced and studied by Conitzer et al. (2011) in the framework of resolute {\sc scc}s, and then analyzed in the same framework by Reijngoud and Endriss (2012), Endriss et al. (2016), Veselova and Karabekyan (2023). 

As the main result of the paper, we identify two sensible properties for {\sc scc}s which, if combined, imply manipulability (Theorem \ref{general}). The first property is a monotonicity condition that weakens the set-monotonicity property proposed by Brandt (2015). The second is a very weak sensitivity condition requiring the existence of a preference profile where raising a non-selected alternative in the preferences of a suitable individual affects the outcome. This result implies that, whenever there are at least three alternatives and four individuals, every unanimous and positional {\sc scc} is manipulable (Theorem \ref{main-positional}). 

In addition, we carry out a detailed analysis of manipulability for four well-known {\sc scc}s, namely the Borda {\sc scc}, the plurality {\sc scc}, the negative plurality {\sc scc}, and the Copeland {\sc scc}. We show that these {\sc scc}s exhibit significant differences and that their manipulability properties strongly depend on the number of individuals and the number of alternatives (Theorems \ref{main-borda}, \ref{main-plurality}, \ref{main-neg-plurality}, and \ref{Copeland-main}).

\section{Preliminary results}

Given $k\in\mathbb{N}$, we set $\ldbrack k \rdbrack\coloneq\{x\in \mathbb{N}: x\le k\}$. 
Let $X$ be a nonempty and finite set. We denote by $|X|$ the size of $X$; by $P(X)$ the set of the subsets of $X$; by $P_0(X)$ the set of the nonempty subsets of $X$; by $\mathrm{Sym}(X)$ the set of bijective functions from $X$ to $X$. For $x,y \in X$, the bijection $\psi \in \mathrm{Sym}(X)$ such that $\psi(x) = y$, $\psi(y) = x$ and, for every $z \in X \setminus \{x,y\}$, $\psi(z) = z$, is called the transposition that exchanges $x$ and $y$. A relation on $X$ is a subset of $X^2$, that is, an element of $P(X^2)$. The set of relations on $X$ is denoted by $\mathcal{R}(X)$. 

Let $R\in\mathcal{R}(X)$.  Given $x,y\in X$, we sometimes write $x\succeq_R y$ instead of $(x,y)\in R$; $x\succ_R y$ instead of $(x,y)\in R$ and $(y,x)\not\in R$. Note that $x\succ_R y$ implies $x\succeq_R y$ and $x\neq y$.
We say that $x$ and $y$ are $R$-comparable if at least one between $x\succeq_R y$ and $y\succeq_R x$ holds true.

We say that $R$ is 
\begin{itemize}
\item reflexive if, for every $x\in X$, $x\succeq_R x$;
\item complete if, for every $x,y\in X$, $x\succeq_R y$ or $y\succeq_R x$;
\item transitive if, for every $x,y,z\in X$, $x\succeq_R y$ and  $y\succeq_R z$ imply $x\succeq_R z$;
\item antisymmetric if, for every $x,y\in X$, $x\succeq_R y$ and  $y\succeq_R x$ imply $x=y$;
\item a partial order if $R$ is reflexive, transitive and antisymmetric;
\item a linear order on $X$ if $R$ is complete, transitive and antisymmetric. 
\end{itemize}
Note that, if $R$ is antisymmetric, then, for every $x,y\in X$, $x\succ_R y$  if and only if $x\succeq_R y$ and $x\neq y$;
if $R'\in\mathcal{R}(X)$ is antisymmetric and $R\subseteq R'$, then, for every $x,y\in X$, $x\succ_R y$ implies $x\succ_{R'} y$.

For every $\psi\in\mathrm{Sym}(X)$, we set $\psi R=\{(x,y)\in X^2:(\psi^{-1}(x),\psi^{-1}(y))\in R\}$. Hence, for every $x,y\in X$, $x\succeq_R y$ if and only if $\psi(x)\succeq_{\psi R} \psi(y)$. 
We denote by $\mathcal{L}(X)$ the set of linear orders on $X$. 
Let $R\in \mathcal{L}(X)$ and $|X|=n$ with $n\in\mathbb{N}$. Then $\mathrm{rank}_R:X\to \ldbrack n \rdbrack$ is the bijective function defined, for every $x\in X$, by $\mathrm{rank}_R(x)=|\{y\in X: y\succeq_R x\}|$. 
For every $i\in \ldbrack n \rdbrack$, let $x_i\in X$ be the unique element in $X$ such that $\mathrm{rank}_R(x_i)=i$. Then $R$ is completely determined by the ordered list $(x_i)_{i=1}^n\in X^n$ and thus we represent $R$ by the writing $[x_1,\dots, x_n]$.

\section{Social choice correspondences}

Let us fix two nonempty and finite sets $A$ and $I$ with $|A| \geq 2$ and $|I| \geq 2$. We interpret $A$ as set of alternatives and $I$ as set of individuals.
For every $J\subseteq I$, we denote by $\mathcal{L}(A)^J$ the set of functions from $J$ to $\mathcal{L}(A)$;
the elements of $\mathcal{L}(A)^J$ are called preference profiles of individuals in $J$; any $p\in \mathcal{L}(A)^J$ represents a complete description of the preferences on $A$ of the individuals in $J$ by interpreting, for every $i\in J$, $p(i)\in \mathcal{L}(A)$ as the preferences on $A$ of individual $i$. If, for a given $i\in J$, $p(i)=[x_1,\ldots, x_{|A|}]$,  we refer to $x_1$ as the best alternative for individual $i$ and to $x_{|A|}$ as the worst alternative for individual $i$.
The elements of $\mathcal{L}(A)^I$ are simply called preference profiles.
In order to simplify the reading, in the rest of the paper the elements of $\mathcal{L}(A)^{I}$ will be usually denoted by $p$, possibly with suitable superscripts, and the elements of $\mathcal{L}(A)^{I\setminus\{i\}}$, where $i\in I$, will be usually denoted by $\overline{p}$, possibly with suitable superscripts. If $i\in I$ and $q\in \mathcal{L}(A)$, we denote by $q[i]$ the element of $\mathcal{L}(A)^{\{i\}}$ such that $q[i](i)=q$. Given $i\in I$, $\overline{p}\in \mathcal{L}(A)^{I\setminus\{i\}}$ and $q\in \mathcal{L}(A)$, the writing $(\overline{p},q[i])$ represents the element of $\mathcal{L}(A)^{I}$ such that $(\overline{p},q[i])(i)=q$ and $(\overline{p},q[i])(j)=\overline{p}(j)$ for all $j\in I\setminus \{i\}$.

A social choice correspondence ({\sc scc}) is a function from $\mathcal{L}(A)^I$ to $P_0(A)$. Thus, a social choice correspondence is a procedure that associates with every preference profile a nonempty subset of $A$. A {\sc scc} $F$ is resolute if, for every $p  \in \mathcal{L}(A)^I$, $|F(p)|=1$. For simplicity,  if $F$ is a resolute {\sc scc}, we identify $F(p)$ with the unique element of $F(p)$.

Let us recall now the definition of a very notable family of {\sc scc}s, namely the so-called positional {\sc scc}s. Consider a scoring vector, namely a vector $w=(w_1,\ldots,w_{|A|})\in\mathbb{R}^{|A|}$ such that $w_1\ge w_2\ge \ldots\ge w_{|A|}$ and $w_1>w_{|A|}$. Given  $p \in \mathcal{L}(A)^I$ and  $x \in A$, the $w$-score of $x$ at $p$ is defined by
\[
\mathrm{sc}_w(x,p)\coloneq\sum_{i\in I} w_{\mathrm{rank}_{p(i)}(x)}.
\]
The positional {\sc scc} with scoring vector $w$, or briefly $w$-positional {\sc scc}, is the {\sc scc} that associates, with every $p\in \mathcal{L}(A)^I$, the set
\[
\underset{x \in A}{\mathrm{argmax}}\;\mathrm{sc}_w(x,p).
\]
The Borda {\sc scc}, the plurality {\sc scc}, and the negative plurality {\sc scc}, respectively denoted by $BO$, $PL$ and $NP$, are well-known positional {\sc scc}s respectively defined using the scoring vectors $w_{\mathrm{bo}}=(|A|-1,|A|-2,\ldots,0)$, $w_{\mathrm{pl}}=(1,0,\ldots,0)$ and $w_{\mathrm{np}}=(1,1,\ldots,1,0)$. The $w_{\mathrm{bo}}$-score, the  $w_{\mathrm{pl}}$-score, and the  $w_{\mathrm{np}}$-score are respectively called the Borda score, the plurality score, and the negative plurality score and are simply denoted by $\mathrm{bo}$,  $\mathrm{pl}$ and  $\mathrm{np}$. Observe that, for every  $p\in \mathcal{L}(A)^I$ and $x\in A$, we have
\[
\begin{array}{l}
\mathrm{bo}(x,p)\coloneq\mathrm{sc}_{w_{\mathrm{bo}}}(x,p)=\displaystyle{\sum_{i\in I} \left( |A|-\mathrm{rank}_{p(i)}(x)\right)},\\
\\
\mathrm{pl}(x,p) \coloneq \mathrm{sc}_{w_{\mathrm{pl}}}(x,p)=|\{i \in I : \mathrm{rank}_{p(i)}(x) =1\}|,\\
\vspace{2mm}\\
\mathrm{np}(x,p)\coloneq\mathrm{sc}_{w_{\mathrm{np}}}(x,p)=|\{i \in I : \mathrm{rank}_{p(i)}(x) \neq |A|\}|.\\
\end{array}
\]
The Copeland {\sc scc} is another  well studied {\sc scc} (Copeland 1951; Fishburn 1977). Let $p \in \mathcal{L}(A)^I$. For every $x,y \in A$, we set $c_p(x,y) \coloneq |\{i \in I : x \succ_{p(i)} y\}|$.
For every $x \in A$, we also set 
\begin{align*}
    w_p(x) \coloneq |\{y \in A \setminus \{x\} : c_p(x,y) > c_p(y,x)\}|;\\
    l_p(x) \coloneq |\{y \in A \setminus \{x\} : c_p(x,y) < c_p(y,x)\}|,
\end{align*}
and we define the Copeland score of $x$ at $p$ as   
\[
\mathrm{co}(x,p) \coloneq w_p(x) - l_p(x).
\]
The Copeland {\sc scc}, denoted by $CO$, is the the {\sc scc} defined, for every $p \in \mathcal{L}(A)^I$, by
\[
CO(p) \coloneq \underset{x \in A}{\mathrm{argmax}}\;\mathrm{co}(x,p).
\]
It is easily checked that $BO$, $PL$, $NP$ and $CO$ are not resolute unless $|A|=2$ and $|I|$ is odd.

A {\sc scc} $F$ is called unanimous if, for every $p\in \mathcal{L}(A)^I$ and $x\in A$, the fact that $\mathrm{rank}_{p(i)}(x)=1$ for all $i\in I$  implies $F(p)=\{x\}$. It is a simple exercise to prove that a positional {\sc scc} with a scoring vector $w$ is unanimous if and only if $w_1>w_2$ and the Copeland {\sc scc} is unanimous.

\section{Extension rules}

Following Barberà et al. (2004), an extension rule is a function $\mathbf{E}$  from $\mathcal{L}(A)$ to $\mathcal{R}(P_0(A))$ such that
\begin{equation}\label{def-ext}\hbox{for every} \ q\in \mathcal{L}(A)\  \hbox{and}\  x,y \in A, \{x\} \succeq_{\mathbf{E}(q)} \{y\} \ \hbox{if and only if}\  x \succeq_q y.
\end{equation}
If $\mathbf{E}$ is an extension rule and $q\in\mathcal{L}(A)$ represents the preferences on $A$ of an individual, we interpret the relation $\mathbf{E}(q)$ as a description of the preferences of that individual on the set $P_0(A)$. The problem of reasonably extending the preferences of an individual from a set of alternatives to the set of subsets of alternatives is crucial and largely investigated, and there are a variety of extension rules considered in the literature, each of them based on a specific rationale. 

In this paper, we focus on the well-known Kelly extension rule (Kelly 1977), defined, for every $q \in \mathcal{L}(A)$, as 
\[
\mathbf{K}(q) \coloneq \{(B,C)\in P_0(A)^2:B=C\}\cup \left\{(B,C)\in P_0(A)^2 : x \succeq_q y \mbox{ for all } x \in B \mbox{ and } y \in C \right\}.
\]
It is simple to prove that $\mathbf{K}$ is actually an extension rule. Moreover, for every $q\in\mathcal{L}(A)$, $\mathbf{K}(q)$ is a partial order on $P_0(A)$ that, in general, is far from being complete. Proposition \ref{properties} in Appendix \ref{appendix-A} collects some basic properties of $\mathbf{K}$ that will be used throughout the paper without reference.

\section{Manipulation of {\sc scc}s}

Let us recall the well-known definitions of manipulability and strategy-proofness for resolute {\sc scc}.

\begin{definition} \label{defmanip-scf}
Let $F$ be a resolute {\sc scc}. $F$ is manipulable if there exist  $i \in I$, $q,q' \in \mathcal{L}(A)$ and $\overline{p} \in \mathcal{L}(A)^{I\setminus\{i\}}$  such that
$F(\overline{p},q'\left[i\right]) \succ_{q} F(\overline{p},q\left[i\right])$.
$F$ is strategy-proof if it is not manipulable.
\end{definition}

The following definition, which corresponds to Definition 2 in Brandt and Brill (2011), extends the standard concept of strategy-proofness, originally introduced for resolute {\sc scc}s, to the case of {\sc scc}s that are not necessarily resolute. 

\begin{definition} \label{defmanip-scc}
Let $F$ be a {\sc scc} and $\mathbf{E}$ be an extension rule. $F$ is $\mathbf{E}$-manipulable if there exist $i \in I$, $q,q' \in \mathcal{L}(A)$ and $\overline{p} \in \mathcal{L}(A)^{I\setminus\{i\}}$  such that
$F(\overline{p},q'\left[i\right]) \succ_{\mathbf{E}(q)} F(\overline{p},q\left[i\right])$.
$F$ is $\mathbf{E}$-strategy-proof if it is not $\mathbf{E}$-manipulable.
\end{definition}

Note that if $F$ is a resolute {\sc scc} and $\mathbf{E}$ an extension rule, then, by  \eqref{def-ext}, $F$ is $\mathbf{E}$-manipulable if and only if $F$ is manipulable.

In recent years, some authors have started considering the problem of manipulations of resolute social choice correspondences under the assumptions that individuals are not able to exactly know the preferences of the others, as assumed in Definitions \ref{defmanip-scf} and \ref{defmanip-scc}, but only have limited information about them (Andjiga et al. 2008, Conitzer et al. 2011; Reijngoud and Endriss 2012; Endriss et al. 2016; Veselova 2020; Gori 2021). An effective way to model the different levels of information of individuals is the use of the so-called information function profiles, which are introduced in Gori (2021), and generalize both the concept of information sets by Conitzer et al. (2011) and the one of poll information function by  Reijngoud and Endriss (2012).

Let $i\in I$. An information function for individual $i$ is a function $\Omega_i: \mathcal{L}(A) \to P_0(P_0(\mathcal{L}(A)^{I\setminus \{i\}}))$. Thus, $\Omega_i$ associates  a nonempty set of nonempty subsets of $\mathcal{L}(A)^{I\setminus \{i\}}$ with each $q\in \mathcal{L}(A)$. The idea is that if individual $i$ has preference $q$, then the type of information that she has about the preferences of the others realizes in her capability to identify a set $\omega \in \Omega_i(q)$ which surely contains among its elements the preference profile of the other individuals.
An information function profile is a list $\Omega=(\Omega_i)_{i\in I}$ that collects the information functions of all the individuals. Let us now present three basic examples of information function profiles, which have been first introduced and investigated by Conitzer et al. (2011). 

The {\it complete information function profile}, denoted by $\Omega^{c}$, is defined, for every $i\in I$ and $q\in \mathcal{L}(A)$, by
\[
\Omega^{c}_i(q)\coloneq\Big\{\{\overline{p}\}: \overline{p}\in \mathcal{L}(A)^{I\setminus \{i\}}\Big\}.
\]
When this information function profile is considered, every $\omega \in \Omega^{c}_i(q)$ consists of exactly one preference profile of individuals in $I\setminus \{i\}$. Thus, $\Omega^{c}$ describes the situation where each individual exactly knows the preferences of the other individuals.

The {\it zero information function profile}, denoted by $\Omega^{0}$, is defined, for every $i\in I$ and $q\in \mathcal{L}(A)$, by
\[
\Omega^0_i(q)\coloneq\Big\{\mathcal{L}(A)^{I\setminus \{i\}}\Big\}.
\]
In this case,  every $\omega \in \Omega^{0}_i(q)$ consists of the whole set $\mathcal{L}(A)^{I\setminus \{i\}}$. Thus, $\Omega^{0}$ describes the situation where each individual only knows the obvious fact that the preference profile of individuals in $I\setminus\{i\}$ belongs to $\mathcal{L}(A)^{I\setminus \{i\}}$.

Note that, for every $q,q'\in \mathcal{L}(A)$, $\Omega^{c}(q)=\Omega^{c}(q')$ and  $\Omega^{0}(q)=\Omega^{0}(q')$. Thus, for those two information function profiles the dependence on $q$ is fictitious. We now introduce a significant example where the dependence of $q$ does play a role.

Given a {\sc  scc} $F$, the $F$-{\it winner information function profile}, denoted by $\Omega^{F}$, is defined, for every $i\in I$ and $q\in \mathcal{L}(A)$, by
\[
\Omega^F_i(q)\coloneq\Big\{\{\overline{p}\in \mathcal{L}(A)^{I\setminus \{i\}}: F(\overline{p},q[i])=X\}: X\in \mathrm{Im}(F)\Big\}\setminus\{\varnothing\}.
\]
In this case, each $\omega\in \Omega^F_i(q)$ is formed by all the preference profiles of individuals in $I\setminus\{i\}$ that, completed with $q$ as the preference of individual $i$, give the same outcome. Thus, $\Omega^{F}$ describes the situation where the information that each individual has is the knowledge of the final outcome obtained applying $F$ to the preferences of the individuals in the society. Note that $\Omega^F$ is actually an information function profile, that is, for every $i\in I$ and  $q\in \mathcal{L}(A)$, $\Omega^F_i(q)\neq\varnothing$. Indeed, pick $\overline{p}^*\in \mathcal{L}(A)^{I\setminus \{i\}}$  and let $X^*=F(\overline{p}^*,q[i])$. Then $X^*\in \mathrm{Im}(F)$ and  $\overline{p}^*\in\{\overline{p}\in \mathcal{L}(A)^{I\setminus \{i\}}: F(\overline{p}^*,q[i])=X^*\}\neq\varnothing$. Thus, $\{\overline{p}\in \mathcal{L}(A)^{I\setminus \{i\}}: F(\overline{p}^*,q[i])=X^*\}\in \Omega^F_i(q)\neq\varnothing$.

The following definition, which corresponds to Definition 3 in Gori (2021), describes the meaning of manipulability and strategy-proofness for resolute {\sc scc}s when individuals' information about the preferences of the others is described by a suitable information function profile. 

\begin{definition}\label{manip-scf-inf}
Let $F$ be a resolute {\sc scc} and $\Omega$ be an information function profile. $F$ is $\Omega$-manipulable if there exist $i \in I$, $q,q'\in \mathcal{L}(A)$ and $\omega\in \Omega_i(q)$ such that
\begin{itemize}
\item there exists $\overline{p}'\in \omega$ with $F(\overline{p}',q'\left[i\right]) \succ_{q} F(\overline{p}',q\left[i\right])$;
\item for every $\overline{p} \in \omega$, we have $F(\overline{p},q'\left[i\right]) \succeq_{q} F(\overline{p},q\left[i\right])$.
\end{itemize} 
$F$ is $\Omega$-strategy-proof if it is not $\Omega$-manipulable.
\end{definition}

By combining Definitions \ref{defmanip-scc} and \ref{manip-scf-inf} it is possible to introduce the concepts of manipulability and strategy-proofness for {\sc scc}s that are not necessarily resolute, under the assumption of incomplete information. 

\begin{definition} \label{defpimanip}
Let $F$ be a {\sc scc}, $\mathbf{E}$ be an extension rule and $\Omega$ be an information function profile. $F$ is $\Omega$-$\mathbf{E}$-manipulable if there exist $i \in I$, $q,q'\in \mathcal{L}(A)$ and $\omega\in \Omega_i(q)$ such that
\begin{itemize}
\item there exists $\overline{p}'\in \omega$ with $F(\overline{p}',q'\left[i\right]) \succ_{\mathbf{E}(q)} F(\overline{p}',q\left[i\right])$; 
\item for every $\overline{p} \in \omega$, we have $F(\overline{p},q'\left[i\right]) \succeq_{\mathbf{E}(q)} F(\overline{p},q\left[i\right])$.
\end{itemize} 
$F$ is $\Omega$-$\mathbf{E}$-strategy-proof if it is not $\Omega$-$\mathbf{E}$-manipulable.
\end{definition}

Thus, $F$ is $\Omega$-$\mathbf{E}$-strategy proof if, every time an individual $i$, whose preferences are described by $q\in \mathcal{L}(A)$, knows that the preferences of the others are surely described by some element of $\omega\in \Omega_i(q)$ and observes that for an element $\overline{p}'\in \omega$ it is convenient for her to report the false preferences described by $q'$, then there exists another element $\overline{p}\in \omega$ for which 
$F(\overline{p},q'\left[i\right]) \not\succeq_{\mathbf{E}(q)} F(\overline{p},q\left[i\right])$, that is, one of the two following situation holds true:
\begin{itemize}
    \item $F(\overline{p},q\left[i\right]) \succ_{\mathbf{E}(q)} F(\overline{p},q'\left[i\right])$, meaning that, for the element $\overline{p}\in \omega$, she would be better off if she told the truth;
    \item $F(\overline{p},q\left[i\right])$ and $ F(\overline{p},q'\left[i\right])$ are not $\mathbf{E}(q)$-comparable, meaning that, for the element $\overline{p}\in \omega$, she is not able to compare the two sets $F(\overline{p},q\left[i\right])$ and $ F(\overline{p},q'\left[i\right])$.
\end{itemize}
Note that
\begin{equation}\label{1}
    \mbox{$F$ is $\Omega^{c}$-$\mathbf{E}$-manipulable if and only if $F$ is $\mathbf{E}$-manipulable}.
\end{equation} 
Moreover, if $F$ is resolute, $F$ is $\Omega$-$\mathbf{E}$-manipulable if and only if $F$ is $\Omega$-manipulable, and $F$ is $\Omega^{c}$-$\mathbf{E}$-manipulable if and only if $F$ is manipulable.

We end this section with some propositions that allow to deduce the $\Omega'$-$\mathbf{E}'$-manipulability of a social choice correspondence from its 
$\Omega$-$\mathbf{E}$-manipulability, provided that suitable properties of the information function profiles $\Omega$ and $\Omega'$ and the extension rules $\mathbf{E}$ and $\mathbf{E}'$ are satisfied.

Given two extension rules $\mathbf{E}$ and $\mathbf{E}'$, we say that $\mathbf{E}$ is a refinement of $\mathbf{E}'$ if, for every $q \in \mathcal{L}(A)$, $\mathbf{E}(q) \subseteq \mathbf{E}'(q)$; if $\mathbf{E}$ is a refinement of $\mathbf{E}'$, we write $\mathbf{E} \subseteq \mathbf{E}'$.

\begin{proposition}\label{lemimpepiman}
Let $F$ be a {\sc scc}, $\mathbf{E}$ and $\mathbf{E}'$ be extension rules with  $\mathbf{E}\subseteq \mathbf{E}'$, and $\Omega$ be an information function profile. Assume that $F$ is  $\Omega$-$\mathbf{E}$-manipulable and that, for every $q\in\mathcal{L}(A)$, $\mathbf{E}'(q)$ is antisymmetric. Then $F$ is $\Omega$-$\mathbf{E}'$-manipulable.
\end{proposition}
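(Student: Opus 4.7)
The plan is to directly transfer the witnesses of $\Omega$-$\mathbf{E}$-manipulability to witnesses of $\Omega$-$\mathbf{E}'$-manipulability, using the elementary observation about antisymmetric relations recorded in the Preliminaries (specifically: if $R'$ is antisymmetric and $R\subseteq R'$, then $x\succ_R y$ implies $x\succ_{R'} y$).

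Concretely, by $\Omega$-$\mathbf{E}$-manipulability of $F$, I fix $i\in I$, $q,q'\in\mathcal{L}(A)$, $\omega\in\Omega_i(q)$, and $\overline{p}'\in\omega$ such that $F(\overline{p}',q'[i])\succ_{\mathbf{E}(q)}F(\overline{p}',q[i])$ and, for every $\overline{p}\in\omega$, $F(\overline{p},q'[i])\succeq_{\mathbf{E}(q)}F(\overline{p},q[i])$. I claim these same data witness $\Omega$-$\mathbf{E}'$-manipulability.

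For the weak condition, let $\overline{p}\in\omega$. From $F(\overline{p},q'[i])\succeq_{\mathbf{E}(q)}F(\overline{p},q[i])$, that is, $(F(\overline{p},q'[i]),F(\overline{p},q[i]))\in\mathbf{E}(q)$, together with the hypothesis $\mathbf{E}(q)\subseteq\mathbf{E}'(q)$, I immediately get $F(\overline{p},q'[i])\succeq_{\mathbf{E}'(q)}F(\overline{p},q[i])$. For the strict condition at $\overline{p}'$, since $\mathbf{E}'(q)$ is antisymmetric and $\mathbf{E}(q)\subseteq\mathbf{E}'(q)$, the Preliminaries observation gives that $F(\overline{p}',q'[i])\succ_{\mathbf{E}(q)}F(\overline{p}',q[i])$ implies $F(\overline{p}',q'[i])\succ_{\mathbf{E}'(q)}F(\overline{p}',q[i])$.

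There is essentially no obstacle, since the antisymmetry of $\mathbf{E}'(q)$ is exactly the hypothesis that forces strict preference to be preserved when passing from the smaller relation $\mathbf{E}(q)$ to the larger one $\mathbf{E}'(q)$: without antisymmetry the pair $(F(\overline{p}',q[i]),F(\overline{p}',q'[i]))$ could be added in $\mathbf{E}'(q)$ and destroy the strict inequality. The only care needed is to verify that Definition \ref{defpimanip} is matched verbatim by the two displayed conditions above, which it is.
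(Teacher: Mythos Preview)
Your proof is correct and follows essentially the same approach as the paper's: fix the witnesses of $\Omega$-$\mathbf{E}$-manipulability, use $\mathbf{E}(q)\subseteq\mathbf{E}'(q)$ to transfer the weak inequalities, and use antisymmetry of $\mathbf{E}'(q)$ to transfer the strict one. Your explicit appeal to the Preliminaries observation about antisymmetric superrelations is exactly what the paper uses implicitly.
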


\begin{proof}
Since $F$ is $\Omega$-$\mathbf{E}$-manipulable, we have that there exist $i \in I$, $q,q'\in \mathcal{L}(A)$, $\omega\in \Omega_i(q)$ and $\overline{p}'\in \omega$ such that $F(\overline{p}',q'\left[i\right]) \succ_{\mathbf{E}(q)} F(\overline{p}',q\left[i\right])$;
for every $\overline{p} \in \omega$, we have $F(\overline{p},q'\left[i\right]) \succeq_{\mathbf{E}(q)} F(\overline{p},q\left[i\right])$.
Since $\mathbf{E} \subseteq \mathbf{E}'$ we immediately have that, for every $\overline{p} \in \omega$, $F(\overline{p},q'\left[i\right]) \succeq_{\mathbf{E}'(q)} F(\overline{p},q\left[i\right])$. Since $F(\overline{p}',q'\left[i\right]) \succ_{\mathbf{E}(q)} F(\overline{p}',q\left[i\right])$ and  $\mathbf{E}'(q)$ is antisymmetric, we have $F(\overline{p}',q'\left[i\right]) \succ_{\mathbf{E}'(q)} F(\overline{p}',q\left[i\right])$. Then, $F$ is $\Omega$-$\mathbf{E}'$-manipulable.
\end{proof}

Given $\Omega$ and $\Omega'$ information function profiles, we say that $\Omega$ is at least as informative as $\Omega'$, and we write $\Omega \trianglerighteq \Omega'$, if, for every $i \in I$, $q \in \mathcal{L}(A)$ and $\omega' \in \Omega'_i(q)$, there exists $\mathcal{A} \subseteq \Omega_i(q)$ with $\mathcal{A} \neq \varnothing$ such that $\omega' = \bigcup_{\omega \in \mathcal{A}}\omega$ (Gori 2021, Definition 4). 
Note that, for every information function profile $\Omega$, we have $\Omega^c\trianglerighteq \Omega$.

\begin{proposition}
\label{lemimppisp}
	Let $F$ be a {\sc scc}, $\mathbf{E}$ be an extension rule, and $\Omega$ and $\Omega'$ be information function profiles such that $\Omega \trianglerighteq \Omega'$. Assume that $F$ is $\Omega'$-$\mathbf{E}$-manipulable. Then $F$ is $\Omega$-$\mathbf{E}$-manipulable.
\end{proposition}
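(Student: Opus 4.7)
The plan is to unpack the two definitions and observe that the set witnessing $\Omega'$-manipulability can be ``shrunk'' to a set in $\Omega_i(q)$ that still contains the profitable deviation.

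More precisely, I would first use the hypothesis that $F$ is $\Omega'$-$\mathbf{E}$-manipulable to extract $i \in I$, $q, q' \in \mathcal{L}(A)$, $\omega' \in \Omega'_i(q)$ and a witness $\overline{p}' \in \omega'$ such that $F(\overline{p}', q'[i]) \succ_{\mathbf{E}(q)} F(\overline{p}', q[i])$, while $F(\overline{p}, q'[i]) \succeq_{\mathbf{E}(q)} F(\overline{p}, q[i])$ for every $\overline{p} \in \omega'$. Next, I would invoke $\Omega \trianglerighteq \Omega'$ applied to this $i$, $q$ and $\omega'$ to obtain a nonempty family $\mathcal{A} \subseteq \Omega_i(q)$ with $\omega' = \bigcup_{\omega \in \mathcal{A}} \omega$.

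The key step is then to locate the witness: since $\overline{p}' \in \omega'$, there exists $\omega^* \in \mathcal{A} \subseteq \Omega_i(q)$ such that $\overline{p}' \in \omega^*$. I claim that $\omega^*$ certifies $\Omega$-$\mathbf{E}$-manipulability of $F$ via the same $i$, $q$, $q'$ and the same witness $\overline{p}'$. The strict comparison $F(\overline{p}', q'[i]) \succ_{\mathbf{E}(q)} F(\overline{p}', q[i])$ is inherited at once, since $\overline{p}' \in \omega^*$. For the weak dominance condition, I would observe that $\omega^* \subseteq \omega'$ (because $\omega^* \in \mathcal{A}$ and $\omega'$ is the union of the sets in $\mathcal{A}$), so every $\overline{p} \in \omega^*$ belongs to $\omega'$ and hence satisfies $F(\overline{p}, q'[i]) \succeq_{\mathbf{E}(q)} F(\overline{p}, q[i])$.

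There is no real obstacle here; the argument is essentially a direct translation of the definitions, and the only substantive point is recognizing that $\mathcal{A} \subseteq \Omega_i(q)$ together with $\omega' = \bigcup_{\omega \in \mathcal{A}} \omega$ forces every element of $\mathcal{A}$ to be contained in $\omega'$, which makes the weak-dominance clause transfer automatically. The step that must not be overlooked is the existence of $\omega^* \in \mathcal{A}$ containing the specific profile $\overline{p}'$, which relies on $\mathcal{A}$ covering $\omega'$ rather than merely being contained in it.
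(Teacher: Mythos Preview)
Your proof is correct and follows essentially the same approach as the paper's own proof: extract the witnessing data from $\Omega'$-$\mathbf{E}$-manipulability, use $\Omega \trianglerighteq \Omega'$ to write $\omega'$ as a union over some $\mathcal{A}\subseteq\Omega_i(q)$, pick $\omega^*\in\mathcal{A}$ containing $\overline{p}'$, and observe that $\omega^*\subseteq\omega'$ transfers the weak-dominance clause while the strict inequality is inherited directly.
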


\begin{proof}
Since $F$ is $\Omega'$-$\mathbf{E}$-manipulable, then there exist $i \in I$, $q,q'\in \mathcal{L}(A)$ and $\omega' \in \Omega'_i(q)$ such that, for every $\overline{p} \in \omega'$, we have $F(\overline{p},q'\left[i\right]) \succeq_{\mathbf{E}(q)} F(\overline{p},q\left[i\right])$, and there exists $\overline{p}'\in \omega'$ such that $F(\overline{p}',q'\left[i\right]) \succ_{\mathbf{E}(q)} F(\overline{p}',q\left[i\right])$. Since $\Omega \trianglerighteq \Omega'$, we have $\omega' = \bigcup_{\omega \in \mathcal{A}} \omega$, for some $\mathcal{A}\subseteq \Omega_i(q)$ with  $ \mathcal{A}  \neq\varnothing$. Then there exists $\omega^* \in \mathcal{A}$ such that $\overline{p}' \in \omega^* $. Since $\omega^*  \subseteq \omega'$ we have that, for every $\overline{p} \in \omega^* $, $F(\overline{p},q'\left[i\right]) \succeq_{\mathbf{E}(q)} F(\overline{p},q\left[i\right])$. Since we know that $F(\overline{p}',q'\left[i\right]) \succ_{\mathbf{E}(q)} F(\overline{p}',q\left[i\right])$ and $\overline{p}' \in \omega^* $, we conclude that $F$ is $\Omega$-$\mathbf{E}$-manipulable.
\end{proof} 

It is also worth mentioning the two following corollaries that are immediate consequences of \eqref{1}, Propositions \ref{lemimpepiman} and \ref{lemimppisp}, and the properties of $\Omega^c$.

\begin{corollary} 
Let $F$ be a {\sc scc}, and $\mathbf{E}$ and $\mathbf{E}'$ be extension rules with $\mathbf{E} \subseteq \mathbf{E}'$. Assume that $F$ is $\mathbf{E}$-manipulable and that, for every $q \in \mathcal{L}(A)$, $\mathbf{E}'(q)$ is antisymmetric. Then $F$ is $\mathbf{E}'$-manipulable. 
\end{corollary}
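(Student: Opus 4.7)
The plan is to reduce the statement to a direct application of Proposition \ref{lemimpepiman} by passing through the complete information function profile $\Omega^c$. First I would invoke \eqref{1}, which states that $\mathbf{E}$-manipulability is equivalent to $\Omega^c$-$\mathbf{E}$-manipulability, so the hypothesis that $F$ is $\mathbf{E}$-manipulable immediately yields that $F$ is $\Omega^c$-$\mathbf{E}$-manipulable.

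Next I would apply Proposition \ref{lemimpepiman} with the information function profile $\Omega = \Omega^c$: since $\mathbf{E} \subseteq \mathbf{E}'$ and $\mathbf{E}'(q)$ is antisymmetric for every $q \in \mathcal{L}(A)$, all the hypotheses of that proposition are satisfied, and hence $F$ is $\Omega^c$-$\mathbf{E}'$-manipulable.

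Finally, I would apply \eqref{1} once more, now with the extension rule $\mathbf{E}'$ in place of $\mathbf{E}$, to translate $\Omega^c$-$\mathbf{E}'$-manipulability back into $\mathbf{E}'$-manipulability, which is exactly the desired conclusion. There is no substantial obstacle: the argument is a straightforward chain consisting of two uses of the equivalence in \eqref{1} sandwiching a single invocation of Proposition \ref{lemimpepiman}, precisely as the sentence preceding the corollary anticipates. Note that Proposition \ref{lemimppisp} is not actually needed for this particular corollary, although the remark in the paper mentions it since it feeds into other immediate consequences of the same circle of results.
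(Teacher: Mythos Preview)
Your proof is correct and matches the paper's intended argument exactly: the corollary is obtained by sandwiching Proposition \ref{lemimpepiman} (applied with $\Omega=\Omega^c$) between two uses of the equivalence \eqref{1}. Your remark that Proposition \ref{lemimppisp} is not needed here is also accurate; the paper's reference to it covers the second corollary rather than this one.
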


\begin{corollary}
    Let $F$ be a {\sc scc}, $\mathbf{E}$ an extension rule and $\Omega$ an information function profile. Assume that $F$ is $\Omega$-$\mathbf{E}$-manipulable. Then $F$ is $\mathbf{E}$-manipulable.
\end{corollary}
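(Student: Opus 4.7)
The plan is to chain together the two already established facts that (i) for every information function profile $\Omega$ one has $\Omega^c \trianglerighteq \Omega$, and (ii) by equation \eqref{1}, $\Omega^c$-$\mathbf{E}$-manipulability coincides with ordinary $\mathbf{E}$-manipulability. The vehicle that connects these is Proposition \ref{lemimppisp}, which says that manipulability is preserved when moving to a more informative profile.

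Concretely, I would start from the hypothesis that $F$ is $\Omega$-$\mathbf{E}$-manipulable. Since $\Omega^c \trianglerighteq \Omega$ (noted right before the statement of Proposition \ref{lemimppisp}), I apply Proposition \ref{lemimppisp} with the informative profile taken to be $\Omega^c$ and the less informative profile taken to be $\Omega$. This immediately yields that $F$ is $\Omega^c$-$\mathbf{E}$-manipulable. Then a single invocation of equivalence \eqref{1} converts $\Omega^c$-$\mathbf{E}$-manipulability into $\mathbf{E}$-manipulability, finishing the proof.

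There is essentially no obstacle here: the result is a two-line deduction from prior material, and the only thing to be careful about is matching the direction of the relation $\trianglerighteq$ correctly when applying Proposition \ref{lemimppisp} (the hypothesis of that proposition requires the starting manipulability to hold for the \emph{less} informative profile $\Omega'$, and concludes manipulability for the \emph{more} informative profile $\Omega$, which is exactly the direction needed here).
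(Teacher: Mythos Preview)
Your proposal is correct and matches the paper's approach exactly: the paper states this corollary as an immediate consequence of \eqref{1}, Proposition \ref{lemimppisp}, and the property $\Omega^c \trianglerighteq \Omega$, which is precisely the chain you spell out.
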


\section{Main results}

Suppose that one is interested in using a certain {\sc scc} $F$, and suppose that the extension rule $\mathbf{E}$ carefully describes the way individuals extend their preferences from $A$ to $P_0(A)$. Given an information function profile $\Omega$, one and only one of the following three situations can occur:
\begin{itemize}
\item $F$ is $\mathbf{E}$-strategy-proof (and then $\Omega$-$\mathbf{E}$-strategy-proof), 
\item $F$ is $\mathbf{E}$-manipulable and $\Omega$-$\mathbf{E}$-strategy-proof, 
\item $F$ is $\Omega$-$\mathbf{E}$-manipulable (and then $\mathbf{E}$-manipulable).
\end{itemize}
If $F$ is $\mathbf{E}$-strategy-proof, then we know that no possible strategic behavior may be implemented; if $F$ is $\mathbf{E}$-manipulable and $\Omega$-$\mathbf{E}$-strategy-proof, then we know that full information may cause a possible deviation but if the level of information of individuals corresponds at most to the one described by $\Omega$ nobody has an incentive to report false preferences; if $F$ is $\Omega$-$\mathbf{E}$-manipulable, then individuals have an incentive to report false preferences even if their level of information corresponds at least to the one described by $\Omega$.

In the rest of the paper, we focus on the Kelly extension rule and the winner information function profile. We first identify two properties for {\sc scc}s that combined guarantee sufficient conditions for manipulability. We then apply this result to get information about positional and unanimous {\sc scc}s. Finally we develop a full-fledged analysis for the Borda {\sc scc}, the plurality {\sc scc}, and the negative plurality {\sc scc}. Those results are described in Sections \ref{sec-teo-gen} and \ref{positional}.

\subsection{Sufficient conditions for \texorpdfstring{$\Omega^F$}{ΩF}-\texorpdfstring{$\mathbf{K}$}{K}-manipulability}\label{sec-teo-gen}

In this section, we propose a result that gives conditions on a {\sc scc} $F$ that are sufficient for its $\Omega^F$-$\mathbf{K}$-manipulability. We start by defining a property of monotonicity that, to the best of our knowledge, is new.

\begin{definition}\label{wsm}
Let $F$ be a {\sc scc}. $F$ satisfies weak set-monotonicity ({\sc wsm}) if, for every $i\in I$, $\overline{p}\in \mathcal{L}(A)^{I\setminus\{i\}}$, $q\in\mathcal{L}(A)$ and $x,y,z\in A$ such that 
\begin{itemize}
\item $F(\overline{p},q[i])=\{z\}$,  
\item $\mathrm{rank}_{q}(x)+1=\mathrm{rank}_{q}(y)<\mathrm{rank}_{q}(z)$,
\end{itemize}
we have that $F(\overline{p},\psi q\left[i\right])\subseteq \{z,y\}$, where $\psi\in \mathrm{Sym}(A)$ is the transposition that exchanges $x$ and $y$.
\end{definition}

Thus, a {\sc scc} satisfies {\sc wsm} if, for every preference profile for which exactly one winner is selected, if an individual exchanges in her preferences the position of two consecutive alternatives that are ranked above the winner, then one of the following facts occurs: the set of winners does not change; the raised alternative becomes the unique winner; the raised alternative becomes a winner together with the former winner.  Note that if $|A|=2$, then every {\sc scc} satisfies  {\sc wsm}.
Moreover,  as proved in Proposition \ref{pos-wum}, every positional {\sc scc} satisfies  {\sc wsm} (see Appendix \ref{appendix-B}).

Several notions of monotonicity are present in the literature. The reader may refer to Sanver and Zwicker (2012) for a detailed discussion of these properties for {\sc scc}s. Our definition can be seen in the spirit of the stability concept introduced by Campbell et al. (2018) for resolute {\sc scc}s, in which certain improvements of a non-selected alternative can cause this alternative to be selected.

Brandt (2015) introduces instead a more static property, called set-monotonicity, and proves that it implies $\mathbf{K}$-strategy-proofness (Brant 2015, Theorem 1).

\begin{definition}\label{set-mon}
Let $F$ be a {\sc scc}. $F$ satisfies  set-monotonicity ({\sc sm}) if, for every $i\in I$, $\overline{p}\in \mathcal{L}(A)^{I\setminus\{i\}}$, $q\in\mathcal{L}(A)$ and $x,y\in A$ such that 
\begin{itemize}
\item $x\not\in F(\overline{p},q[i])$,  
\item $\mathrm{rank}_{q}(x)+1=\mathrm{rank}_{q}(y)$,
\end{itemize}
we have that $F(\overline{p},\psi q\left[i\right])=F(\overline{p},q[i])$, where $\psi\in \mathrm{Sym}(A)$ is the transposition that exchanges $x$ and $y$.
\end{definition}

Thus, a {\sc scc} satisfies {\sc sm} if, for every preference profile, whenever an individual exchanges two consecutive alternatives in her preferences, where the higher-ranked alternative is not part of the choice set, this change does not affect the choice set. Set monotonicity is a strong requirement, and well-known {\sc scc}s fail to fulfill it. Assume, for example, that $A = \{x,y,z\}$ and $I = \ldbrack 4 \rdbrack$, and take $p,p' \in \mathcal{L}(A)^I$ such that
\[
p(1) = p'(1) =[z,x,y],\;  p(2) =p'(2) = [y,z,x],
\]
\[
p(3) =p'(3) = [z,y,x],\; p(4) = [x,y,z],\; p'(4) = [y,x,z].
\]
A computation shows that we have that $CO(p) = \{z\}$ and $CO(p')=\{y,z\}$, which shows that  the Copeland {\sc scc} can fail {\sc sm}.

The following proposition motivates the use of the term weak set-monotonicity for the property introduced in Definition \ref{wsm}.

\begin{proposition}\label{implies}
Let $F$ be a {\sc scc}. If $F$ satisfies {\sc sm}, then it satisfies {\sc wsm}.
\end{proposition}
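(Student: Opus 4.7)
The plan is to simply notice that under the hypotheses of {\sc wsm}, the hypotheses of {\sc sm} are automatically satisfied for the same individual $i$, the same profile $\overline{p}$, the same preference $q$, and the same pair $x,y$, so {\sc sm} delivers an equality that is in particular a subset of $\{z,y\}$.

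In detail, I would assume that $F$ satisfies {\sc sm}, fix $i\in I$, $\overline{p}\in\mathcal{L}(A)^{I\setminus\{i\}}$, $q\in\mathcal{L}(A)$ and $x,y,z\in A$ as in Definition \ref{wsm}, so that $F(\overline{p},q[i])=\{z\}$ and $\mathrm{rank}_q(x)+1=\mathrm{rank}_q(y)<\mathrm{rank}_q(z)$. The key observation is that the rank inequalities give $\mathrm{rank}_q(x)<\mathrm{rank}_q(z)$, hence $x\neq z$, hence $x\notin F(\overline{p},q[i])=\{z\}$. Together with the adjacency condition $\mathrm{rank}_q(x)+1=\mathrm{rank}_q(y)$, this says exactly that the hypotheses of Definition \ref{set-mon} are met for the transposition $\psi$ that exchanges $x$ and $y$.

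Applying {\sc sm} then yields $F(\overline{p},\psi q[i])=F(\overline{p},q[i])=\{z\}\subseteq\{z,y\}$, which is the conclusion of {\sc wsm}. No obstacle arises: the only thing to check is that $x\notin F(\overline{p},q[i])$, and this is immediate from $\mathrm{rank}_q(x)<\mathrm{rank}_q(z)$ once one recalls that $F(\overline{p},q[i])=\{z\}$.
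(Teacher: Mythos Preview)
Your proposal is correct and follows essentially the same argument as the paper's proof: both note that $x\neq z$ (hence $x\notin F(\overline{p},q[i])=\{z\}$), apply {\sc sm} to conclude $F(\overline{p},\psi q[i])=\{z\}\subseteq\{z,y\}$, and are done. Your write-up just makes the reason for $x\neq z$ slightly more explicit via the rank inequality.
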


\begin{proof}
Assume that $F$ satisfies {\sc sm}. Consider $i\in I$, $\overline{p}\in \mathcal{L}(A)^{I\setminus\{i\}}$, $q\in\mathcal{L}(A)$ and $x,y,z\in A$ such that 
$F(\overline{p},q[i])=\{z\}$ and $\mathrm{rank}_{q}(x)+1=\mathrm{rank}_{q}(y)<\mathrm{rank}_{q}(z)$.
Then $x\not\in F(\overline{p},q[i])$ and thus, using {\sc sm}, we have $F(\overline{p},\psi q\left[i\right])=F(\overline{p},q[i])=\{z\}\subseteq \{z,y\}$. That proves that $F$ satisfies {\sc wsm}, as well. 
\end{proof}

We emphasize that, from a conceptual point of view, moving from {\sc sm} to {\sc wsm} represents a significant step. Indeed, while {\sc sm} requires, among other things, the output to remain fixed under improvements of non-winning alternatives, {\sc wsm} allows such improvements, in certain circumstances, to influence the output.

Let us now introduce a further new property, which basically states that individuals should have a minimal degree of decision power. More exactly, there exists at least an individual and  a special preference configuration such that a change in her preferences actually influences the outcome. This seems to be a quite natural and weak requirement to which it is hard to renounce.

\begin{definition}
Let $F$ be a {\sc scc}. We say that $F$ satisfies upward sensitivity ({\sc us}) if
there exist $i\in I$, $\overline{p}\in \mathcal{L}(A)^{I\setminus\{i\}}$, $q\in\mathcal{L}(A)$
and $x,y,z \in A$ such that
\begin{itemize}
\item $F(\overline{p},q[i])=\{z\}$,
\item $\mathrm{rank}_{q}(x)+1=\mathrm{rank}_{q}(y)<\mathrm{rank}_{q}(z)$,
\item $y \in F(\overline{p},\psi q\left[i\right])$, where $\psi\in \mathrm{Sym}(A)$ is the transposition that exchanges $x$ and $y$.
\end{itemize}
\end{definition}

Thus, a {\sc scc} satisfies {\sc us} if there exists a preference profile for which exactly one winner is selected, and an individual who, by exchanging the position of two consecutive alternatives that are not winners and are ranked above the winner in her preferences, makes the set of winners include the raised alternative.
The property of being {\sc us} naturally complements {\sc wsm}, as it assures that, in at least one scenario, the raised alternative is indeed included in the output.

It is worth noting that {\sc wsm} and {\sc us} are independent conditions. Indeed, if $|A|=2$, any {\sc scc} satisfies {\sc wsm} but fails to satisfy {\sc us}. On the other hand, there exists a {\sc scc} that satisfies {\sc us} but not {\sc wsm}. Consider $A=\{x,y,z\}$ and $I=\ldbrack 7 \rdbrack$.
Let $F$ be the {\sc scc} defined, for every $p\in \mathcal{L}(A)^I$, as $F(p)= \{x^*\}$, where $x^*$ is the alternative ranked first more times than all others, if such an alternative exists, and otherwise is the alternative first ranked by individual $1$. Clearly, $F$ is resolute. Let $p\in \mathcal{L}(A)^{I}$ be such that
\[
p(1)=p(7)=[x,y,z],\;p(2)=p(3)=p(4)=[z,x,y],\;p(5)=p(6)=[y,z,x].
\]
Then $F(p)=\{z\}$. If $p'$ is the preference profile obtained by $p$ by exchanging $x$ and $y$ in the preferences of individual $7$, we have that  
$F(p')=\{x\}\not\subseteq \{z,y\}$ and this shows that $F$ does not satisfy {\sc wsm}. Let now $p''$ be the preference profile obtained by 
$p'$ by exchanging $y$ and $z$ in the preferences of individual $6$. Thus, passing from $p'(6)$ to $p''(6)$, $z$ improves its ranking. Since $z\in F(p'')=\{z\}$, we conclude that $F$ satisfies {\sc us}.

In the next result, we show that {\sc us} and {\sc sm} are mutually exclusive properties, whereas {\sc us} and {\sc wsm} are not.
\begin{proposition}\label{application} 
There exists no {\sc scc} satisfying both {\sc sm} and {\sc us}. There are {\sc scc}s satisfying both {\sc wsm} and {\sc us}.
\end{proposition}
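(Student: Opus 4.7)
The plan is to handle the two claims independently. The first is a short argument by contradiction; the second requires exhibiting an explicit example based on the Borda {\sc scc}.

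For the incompatibility of {\sc sm} and {\sc us}, I would suppose, toward contradiction, that some {\sc scc} $F$ satisfies both, and then apply {\sc sm} to a witness of {\sc us}. Fix $i \in I$, $\overline{p} \in \mathcal{L}(A)^{I\setminus\{i\}}$, $q \in \mathcal{L}(A)$, and $x,y,z \in A$ provided by {\sc us}, so that $F(\overline{p}, q[i]) = \{z\}$, $\mathrm{rank}_q(x) + 1 = \mathrm{rank}_q(y) < \mathrm{rank}_q(z)$, and $y \in F(\overline{p}, \psi q[i])$, where $\psi$ is the transposition of $x$ and $y$. Observe that $\mathrm{rank}_q(x) < \mathrm{rank}_q(z)$ forces $x \neq z$, hence $x \notin F(\overline{p}, q[i])$. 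The hypotheses of {\sc sm} applied to the pair $(x,y)$ are therefore satisfied, so {\sc sm} yields $F(\overline{p}, \psi q[i]) = F(\overline{p}, q[i]) = \{z\}$. But $\mathrm{rank}_q(y) < \mathrm{rank}_q(z)$ forces $y \neq z$, contradicting $y \in F(\overline{p}, \psi q[i]) = \{z\}$.

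For the coexistence of {\sc wsm} and {\sc us}, I would take $F$ to be the Borda {\sc scc} $BO$. Since $BO$ is positional with $w_1 = |A|-1 > |A|-2 = w_2$, Proposition~\ref{pos-wum} guarantees that it satisfies {\sc wsm}. To verify {\sc us}, I would exhibit a concrete preference profile. Consider $A = \{x,y,z\}$ and $I = \ldbrack 3 \rdbrack$, and define $p \in \mathcal{L}(A)^I$ by $p(1) = [x,y,z]$ and $p(2) = p(3) = [z,y,x]$. A brief Borda score computation gives $\mathrm{bo}(x,p) = 2$, $\mathrm{bo}(y,p) = 3$, $\mathrm{bo}(z,p) = 4$, hence $BO(p) = \{z\}$. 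Let $\psi$ be the transposition of $x$ and $y$, and let $p' \in \mathcal{L}(A)^I$ be the profile agreeing with $p$ on individuals $2$ and $3$ but with $p'(1) = [y,x,z]$. An analogous computation yields $\mathrm{bo}(x,p') = 1$, $\mathrm{bo}(y,p') = 4$, $\mathrm{bo}(z,p') = 4$, so $y \in BO(p') = \{y,z\}$. Since $\mathrm{rank}_{p(1)}(x) + 1 = 2 = \mathrm{rank}_{p(1)}(y) < 3 = \mathrm{rank}_{p(1)}(z)$, the tuple $(1, (p(2), p(3)), p(1), x, y, z)$ witnesses {\sc us} for $BO$.

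The only piece requiring actual verification is the Borda score computation in the second part, which is routine. The first part follows directly from the definitions, and I anticipate no genuine obstacle in either case.
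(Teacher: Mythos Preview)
Your proof is correct. The first part matches the paper's argument essentially verbatim: both take a witness of {\sc us}, observe that $x\neq z$ so $x\notin F(\overline{p},q[i])$, and note that {\sc sm} would force $F(\overline{p},\psi q[i])=\{z\}$, contradicting $y\in F(\overline{p},\psi q[i])$ with $y\neq z$.

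For the second part, the paper proceeds differently: it simply cites Propositions~\ref{pos-wum} and~\ref{us-unanimous}, which together show that every unanimous positional {\sc scc} satisfies both {\sc wsm} and {\sc us} whenever $|A|\ge 3$ and $|I|\ge 4$, $|I|\neq 5$. Your approach instead produces an explicit $3$-alternative, $3$-individual Borda example and verifies the scores by hand. Both are valid existence proofs; yours is more self-contained and, incidentally, covers a parameter pair $(|A|,|I|)=(3,3)$ not handled by Proposition~\ref{us-unanimous}, while the paper's route yields the conclusion for a broad family of $(|A|,|I|)$ at once without any computation. Your concrete example is in the same spirit as the one the paper later uses in the $m=3$ case of the proof of Theorem~\ref{main-borda}.
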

\begin{proof} 
We show that if $F$ is a {\sc scc} satisfying {\sc us}, then $F$ does not satisfy {\sc sm}. Let $F$ be a {\sc scc} satisfying {\sc us}. 
Then there exist $i\in I$, $\overline{p}\in \mathcal{L}(A)^{I\setminus\{i\}}$, $q\in\mathcal{L}(A)$,
and $x,y,z \in A$ such that
\begin{itemize}
\item $F(\overline{p},q[i])=\{z\}$,
\item $\mathrm{rank}_{q}(x)+1=\mathrm{rank}_{q}(y)<\mathrm{rank}_{q}(z)$,
\item $y \in F(\overline{p},\psi q\left[i\right])$, where $\psi\in \mathrm{Sym}(A)$ is the transposition that exchanges $x$ and $y$.
\end{itemize}
Then, $x,y, z$ are distinct. In particular, $x\notin F(\overline{p},q[i]).$ Now 
$F(\overline{p},\psi q\left[i\right])\neq F(\overline{p},q[i])$ because $y\in F(\overline{p},\psi q\left[i\right])$ while $y\notin F(\overline{p},q[i])=\{z\}.$ Thus, $F$ fail to satisfy {\sc sm}.

The fact that there are {\sc scc}s satisfying both {\sc wsm} and {\sc us} is a consequence of Propositions \ref{pos-wum} and \ref{us-unanimous} (see Appendix \ref{appendix-B}). 
\end{proof}

While set-monotonic {\sc scc}s can be appreciated since they are $\mathbf{K}$-strategy-proof, the fact that they never satisfy a natural property such as upward sensitivity can be regarded, in some situations,  as a serious drawback.

We now state and prove the main result of this section. It shows that weakening {\sc sm} by replacing it with {\sc wsm}, and further adding  {\sc us}, entails not only $\mathbf{K}$-manipulability but even $\Omega^F$-$\mathbf{K}$-manipulability.

\begin{theorem} \label{general}
Let $F$ be a {\sc scc}. Assume that $F$ satisfies {\sc wsm} and {\sc us}. Then $F$ is $\Omega^F$-$\mathbf{K}$-manipulable.
\end{theorem}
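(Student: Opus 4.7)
The plan is to feed the witness data of {\sc us} directly into the definition of $\Omega^F$-$\mathbf{K}$-manipulability, using {\sc wsm} to propagate control of the outcome across the whole winner information set, not just at the single profile supplied by {\sc us}.

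First I would invoke {\sc us} to obtain $i\in I$, $\overline{p}^{*}\in \mathcal{L}(A)^{I\setminus\{i\}}$, $q\in \mathcal{L}(A)$ and $x,y,z\in A$ with $F(\overline{p}^{*},q[i])=\{z\}$, $\mathrm{rank}_{q}(x)+1=\mathrm{rank}_{q}(y)<\mathrm{rank}_{q}(z)$ (so in particular $y\succ_{q}z$), and $y\in F(\overline{p}^{*},\psi q[i])$, where $\psi\in\mathrm{Sym}(A)$ is the transposition exchanging $x$ and $y$. I set $q'\coloneq\psi q$ as the misreport and fix the information set
\[
\omega\coloneq\{\overline{p}\in\mathcal{L}(A)^{I\setminus\{i\}}: F(\overline{p},q[i])=\{z\}\}.
\]
Since $\overline{p}^{*}\in\omega$, the set $\omega$ is nonempty and, as $\{z\}\in\mathrm{Im}(F)$, it lies in $\Omega^{F}_{i}(q)$.

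Next I would check the weak inequality on all of $\omega$. For every $\overline{p}\in\omega$ the hypotheses of {\sc wsm} are met with the transposition $\psi$, yielding $F(\overline{p},q'[i])\subseteq\{y,z\}$; thus $F(\overline{p},q'[i])$ is one of $\{z\}$, $\{y\}$, or $\{y,z\}$. In the first case $F(\overline{p},q'[i])=F(\overline{p},q[i])$ and reflexivity of $\mathbf{K}(q)$ gives $F(\overline{p},q'[i])\succeq_{\mathbf{K}(q)} F(\overline{p},q[i])$. In the remaining two cases the condition $y\succ_{q}z$ plus the definition of $\mathbf{K}$ immediately yields $\{y\}\succeq_{\mathbf{K}(q)}\{z\}$ and $\{y,z\}\succeq_{\mathbf{K}(q)}\{z\}$. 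Hence the weak Kelly comparison holds uniformly over $\omega$.

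Finally I would isolate $\overline{p}^{*}$ for the strict inequality. By {\sc us}, $y\in F(\overline{p}^{*},q'[i])$, and by the previous step $F(\overline{p}^{*},q'[i])\subseteq\{y,z\}$, so $F(\overline{p}^{*},q'[i])\in\{\{y\},\{y,z\}\}$. In both cases, $y\succ_{q}z$ makes the comparison with $\{z\}$ strict under $\mathbf{K}(q)$: for $\{y\}$ one notes $\{z\}\not\succeq_{\mathbf{K}(q)}\{y\}$ as $z\not\succeq_{q}y$, and for $\{y,z\}$ the fact that $z\not\succeq_{q}y$ prevents $\{z\}\succeq_{\mathbf{K}(q)}\{y,z\}$. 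Thus $F(\overline{p}^{*},q'[i])\succ_{\mathbf{K}(q)}F(\overline{p}^{*},q[i])$, and together with the previous paragraph the tuple $(i,q,q',\omega,\overline{p}^{*})$ witnesses $\Omega^{F}$-$\mathbf{K}$-manipulability. The only delicate point is the choice of $\omega$: declaring it in terms of the singleton $\{z\}$ is what allows {\sc wsm} to be applied uniformly, and prevents the strict gain at $\overline{p}^{*}$ from being washed out at other members of the information set.
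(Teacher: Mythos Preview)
Your proposal is correct and follows essentially the same route as the paper: use the {\sc us} witness to define $q'=\psi q$ and the winner information set $\omega=\{\overline{p}:F(\overline{p},q[i])=\{z\}\}$, apply {\sc wsm} uniformly on $\omega$ to force $F(\overline{p},q'[i])\in\{\{z\},\{y\},\{y,z\}\}$ and hence the weak Kelly comparison, and then use $y\in F(\overline{p}^{*},q'[i])$ to get strictness at $\overline{p}^{*}$. The paper's argument is identical in structure and detail.
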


\begin{proof}
In order to prove that $F$ is $\Omega^F$-$\mathbf{K}$-manipulable, we need to show that there exist  $i \in I$, $q,q' \in \mathcal{L}(A)$ and $\omega \in \Omega^F_i(q)$ such that, for every $\overline{p} \in \omega$, we have $F(\overline{p},q'[i]) \succeq_{\mathbf{K}(q)} F(\overline{p},q[i])$ and there exists $\overline{p}' \in \omega$ such that $F(\overline{p}',q'[i]) \succ_{\mathbf{K}(q)} F(\overline{p}',q\left[i\right])$.

Since $F$ satisfies {\sc us} there exist
$i\in I$, $\overline{p}'\in \mathcal{L}(A)^{I\setminus\{i\}}$, $q\in\mathcal{L}(A)$
and $x,y,z \in A$ such that
\begin{itemize}
\item $F(\overline{p}',q[i])=\{z\}$,
\item $\mathrm{rank}_{q}(x)+1=\mathrm{rank}_{q}(y)<\mathrm{rank}_{q}(z)$,
\item $y \in F(\overline{p}',\psi q[i])$, where $\psi$ is the transposition that exchanges $x$ and $y$.
\end{itemize}
In particular, since $F$ satisfies {\sc wsm}, we have that $F(\overline{p}',\psi q[i])=\{y,z\}$ or $F(\overline{p}',\psi q\left[i\right])=\{y\}$.

Let now $i\in I$ and $q\in\mathcal{L}(A)$ be the ones previously considered, and set $q'= \psi q$ and $\omega = \{\overline{p} \in \mathcal{L}(A)^{I\setminus\{i\}} : F(\overline{p},q[i]) = \{z\}\}$. 
Since $F(\overline{p}',q[i])=\{z\}$, we have that $\omega\neq \varnothing$ and so $\omega\in \Omega^F_i(q)$.
Of course, we have that $F(\overline{p}',q'[i]) \succ_{\mathbf{K}(q)} \{z\}= F(\overline{p}',q[i])$, because $y\succ _q z$.
We are then left with proving that, for every $\overline{p} \in \omega$, we have $F(\overline{p}, q'[i]) \succeq_{\mathbf{K}(q)} F(\overline{p},q[i])$. Let $\overline{p} \in \omega$. Thus, we have $F(\overline{p}, q[i]) =\{z\}$. Moreover, by {\sc wsm}, we have $F(\overline{p}, \psi  q[i]) = F(\overline{p}, q'[i])\in \{\{z\},\{y\}, \{y,z\}\}$. Since, for every $B\in \{\{z\},\{y\}, \{y,z\}\}$ we have $ B \succeq_{\mathbf{K}(q)} \{z\}$, we conclude that $F(\overline{p}, q'[i]) \succeq_{\mathbf{K}(q)} F(\overline{p}, q[i])$.
\end{proof}

When $|A|=2$ no {\sc scc} can satisfy both {\sc wsm} and {\sc us}. Thus, Theorem \ref{general} is not informative when  $|A|=2$. As we will see, it is instead a useful tool to analyze unanimous positional {\sc scc}s when $|A|\ge 3$.

\subsection{Positional social choice correspondences}\label{positional}

A complete analysis of the property of $\Omega^F$-$\mathbf{K}$-manipulability for a general positional {\sc scc} seems prohibitive due to insidious arithmetical technicalities. However, many things can be said, especially concerning the most commonly used {\sc scc}s within such a class. First, let us observe that the case with two alternatives is trivial. Indeed, as a consequence of Proposition \ref{qualified}, stated and proved in Appendix \ref{sec-A2}, we have that each positional {\sc scc}, which coincides with the simple majority rule, is $\mathbf{K}$-strategy-proof and then $\Omega^F$-$\mathbf{K}$-strategy-proof. Restricting our attention to the case with at least three alternatives, as a consequence of Theorem \ref{general}, we obtain the following result whose proof is given in Appendix \ref{appendix-B}.

\begin{theorem}\label{main-positional}
Assume that $|A|\ge 3$ and $|I|\ge 4$. If $F$ is an unanimous positional {\sc scc}, then $F$ is $\Omega^{F}$-$\mathbf{K}$-manipulable.  
\end{theorem}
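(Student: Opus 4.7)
The plan is to apply Theorem~\ref{general}, so the task reduces to verifying that every unanimous positional {\sc scc} $F$ with $|A|\ge 3$ and $|I|\ge 4$ satisfies both {\sc wsm} and {\sc us}. Weak set-monotonicity comes for free from Proposition~\ref{pos-wum}: swapping two consecutive alternatives that sit above the unique winner $z$ in some ballot only redistributes $w$-score between the two swapped alternatives, so the new outcome must be contained in $\{y,z\}$. The whole content of the theorem is therefore the verification of upward sensitivity under the hypothesis $w_1>w_2$.

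For {\sc us}, set $d\coloneq w_1-w_2>0$ and take $q=[x,y,z,a_4,\ldots,a_m]$ as the preference of individual~$1$. The key observation is that the transposition $\psi$ exchanging $x$ and $y$ in $q$ shifts $\mathrm{sc}_w(y,\cdot)$ upward by exactly $d$ and $\mathrm{sc}_w(x,\cdot)$ downward by $d$, while leaving every other alternative's score invariant. I would therefore look for $\overline p$ satisfying (i)~$z$ is the unique $w$-winner at $(\overline p,q[1])$, (ii)~$\mathrm{sc}_w(z,\cdot)-\mathrm{sc}_w(y,\cdot)=d$ there, and (iii)~no alternative outside $\{y,z\}$ sits within $d$ of $\mathrm{sc}_w(z,\cdot)$; together these force $y\in F(\overline p,\psi q[1])$. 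For $n=4$ the choice $p(2)=[z,x,y,a_4,\ldots,a_m]$, $p(3)=[z,y,x,a_4,\ldots,a_m]$, $p(4)=[y,z,x,a_4,\ldots,a_m]$ does the job: a direct score count gives $\mathrm{sc}_w(x,p)=w_1+w_2+2w_3$, $\mathrm{sc}_w(y,p)=w_1+2w_2+w_3$, $\mathrm{sc}_w(z,p)=2w_1+w_2+w_3$, so (ii) holds exactly, (i) follows from $w_1>w_3$, and the alternatives $a_4,\ldots,a_m$ are placed in identical trailing positions in every ballot, hence they contribute at most $4w_4\le 4w_3<2w_1+w_2+w_3$ and are strictly dominated.

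To cover arbitrary $n\ge 4$, I would append groups of $m$ voters whose ballots form the $m$ cyclic shifts of a fixed linear order on $A$: each such block contributes the constant $\sum_{k=1}^m w_k$ to every alternative and therefore preserves both the winner $\{z\}$ and the critical gap $d$. The main obstacle is the residual case $(n-4)\bmod m\neq 0$, because adding a handful of extra ballots will generically shift some score by a quantity not related to $d$; the remedy is a finite case analysis in which either a few of the four base ballots are tweaked, or the residual ballots are chosen so that the aggregate gap between $z$ and $y$ returns to $d$ and no other alternative climbs within $d$ of $z$. This is exactly the ``insidious arithmetical technicality'' alluded to before the theorem, and it is the step that consumes Appendix~\ref{appendix-B}. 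Once {\sc us} is established, Theorem~\ref{general} immediately yields that $F$ is $\Omega^F$-$\mathbf{K}$-manipulable.
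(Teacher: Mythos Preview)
Your overall plan (Theorem~\ref{general} via Proposition~\ref{pos-wum} plus {\sc us}) and your four-voter construction are both correct. The genuine gap is in the extension to $|I|>4$: {\sc us} simply does not hold for every unanimous positional rule with $|I|\ge 4$, so no residual case analysis can close it. Concretely, take $|A|=3$, $|I|=5$, and $w_2=w_3$ (equivalently, plurality). With three alternatives the constraint $\mathrm{rank}_q(x)+1=\mathrm{rank}_q(y)<\mathrm{rank}_q(z)$ forces $q=[x,y,z]$, so the swap $\psi$ changes the manipulator's top choice from $x$ to $y$. For $\{z\}$ to be the unique winner before the swap and for $y$ to enter afterwards one needs $\mathrm{pl}(z)=\mathrm{pl}(y)+1$ and $\mathrm{pl}(x)<\mathrm{pl}(z)$; combined with $\mathrm{pl}(x)+\mathrm{pl}(y)+\mathrm{pl}(z)=5$ and $\mathrm{pl}(x)\ge 1$ (the manipulator ranks $x$ first), this system has no integer solution. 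Hence {\sc us} fails outright in this configuration, and your cyclic-shift-plus-tweaks strategy cannot be rescued regardless of how the residual ballots are chosen.

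The paper's Appendix~\ref{appendix-B} does not carry out the residual analysis you describe. Instead it proves {\sc us} only for $|I|$ even and for odd $|I|\ge 7$, via two separate constructions parametrized directly by $|I|$ (Proposition~\ref{us-unanimous}), and then treats $|I|=5$ on its own (Proposition~\ref{main-positional-5}). In that last case it splits once more: if $w_2>w_3$ the odd-$|I|$ construction still yields {\sc us}, but if $w_2=w_3$ the paper abandons Theorem~\ref{general} altogether and exhibits a direct $\Omega^F$-$\mathbf{K}$-manipulation on the information set $\{\overline p:F(\overline p,q[5])=\{1,2\}\}$. A minor point: you have interchanged the paper's conventions for $n$ and $m$ (the paper uses $n=|A|$, $m=|I|$), which makes your write-up harder to match against the appendix.
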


We stress that Theorem \ref{main-positional} is analogous to Theorem 3 in Reijngoud and Endriss (2012), where the authors focus on unanimous and positional {\sc scc}s made resolute by an agenda for breaking ties. Also, the proofs of those theorems share some similarities. However, neither of the two is a corollary of the other. In fact, it is well-established in the literature that breaking ties can be performed in several ways, respecting or not respecting specific properties (see, for instance, Bubboloni and Gori 2016, 2021), and that there is then no obvious way to transfer a result for resolute {\sc scc}s to  general {\sc scc}s and conversely. Note also that, as an immediate consequence of Theorem 1 in Brandt (2015) and of Theorem \ref{main-positional}, if $|A|\geq 3$ and $|I|\geq 4$, every unanimous positional {\sc scc} fail to satisfy set-monotonicity. 
The analysis of positional rules when $|I|\in\{2,3\}$ or when they are not unanimous is more difficult and seems to lead to a variety of different situations. That clearly emerges from the detailed analysis of the Borda, the plurality and the negative plurality {\sc scc}, as summarized in the following three theorems.

\begin{theorem}\label{main-borda}
If $|A|\ge 3$, then $BO$ is $\Omega^{BO}$-$\mathbf{K}$-manipulable.
\end{theorem}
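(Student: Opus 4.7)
The plan is to split the argument on the number of individuals. When $|I|\ge 4$, the Borda scoring vector $w_{\mathrm{bo}}$ satisfies $w_1>w_2$, so $BO$ is unanimous and positional, and I would invoke Theorem \ref{main-positional} directly.

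When $|I|=3$, $BO$ satisfies {\sc wsm} by Proposition \ref{pos-wum} (being positional), so to apply Theorem \ref{general} it suffices to exhibit {\sc us}. Writing $A=\{a_1,\ldots,a_n\}$, I would have individual $1$ report $q=[a_1,a_2,\ldots,a_n]$ while both remaining individuals report the reverse $[a_n,\ldots,a_1]$. A direct Borda-score computation gives $\mathrm{bo}(a_i,p)=n+i-2$, so $BO(p)=\{a_n\}$. Exchanging $a_{n-2}$ and $a_{n-1}$ in individual $1$'s preference leaves the scores of $a_1,\ldots,a_{n-3}$ unchanged, lowers $a_{n-2}$'s score to $2n-5$, and raises $a_{n-1}$'s score to $2n-2$, matching that of $a_n$. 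Thus $a_{n-1}$ enters the winning set, confirming {\sc us} with $z=a_n$, $y=a_{n-1}$, $x=a_{n-2}$.

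When $|I|=2$, a direct argument is needed, since {\sc us} can fail for $BO$ here: already when $|A|=3$, the bottom alternative of $q$ receives total Borda score at most $0+2=2$, which is insufficient to strictly beat the top alternative (with score at least $2$). I would fix $q=[a_1,\ldots,a_n]$ and consider the misreport $q'=[a_1,a_3,a_2,a_4,\ldots,a_n]$, together with the winner-information set
\[
\omega\coloneq\{\overline{p}\in\mathcal{L}(A)^{I\setminus\{1\}}:BO(\overline{p},q[1])=\{a_1,a_2\}\}\in\Omega^{BO}_1(q),
\]
which is non-empty since $\overline{p}_0(2)=[a_2,a_1,a_3,\ldots,a_n]$ produces the outcome $\{a_1,a_2\}$. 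Writing $r=\mathrm{rank}_{\overline{p}(2)}$, the Borda formulas show that $\overline{p}\in\omega$ if and only if $r(a_2)=r(a_1)-1$ and $r(a_i)\ge r(a_1)+2-i$ for every $i\ge 3$. I would then compute each alternative's post-misreport Borda score under these constraints and verify that $a_1$ strictly dominates the rest, so that $BO(\overline{p},q'[1])=\{a_1\}$ for every $\overline{p}\in\omega$. Since $\{a_1\}\succ_{\mathbf{K}(q)}\{a_1,a_2\}$, this gives strict improvement throughout $\omega$ and hence $\Omega^{BO}$-$\mathbf{K}$-manipulability.

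The main obstacle is the $|I|=2$ analysis, and specifically the treatment of $a_3$: the info-set bound $r(a_3)\ge r(a_1)-1$, combined with the fact that the ranks $r(a_1)-1$ and $r(a_1)$ are already occupied (by $a_2$ and $a_1$), forces $r(a_3)\ge r(a_1)+1$, which is exactly the bound needed for $a_3$'s post-misreport score $2n-2-r(a_3)$ to fall strictly below $a_1$'s score $2n-1-r(a_1)$. For $i\ge 4$ the info-set inequality $r(a_i)\ge r(a_1)+2-i$ yields the required score bound directly, closing the argument.
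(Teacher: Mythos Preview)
Your proposal is correct and follows essentially the same architecture as the paper's proof: invoke Theorem~\ref{main-positional} for $|I|\ge 4$, establish {\sc us} (and hence apply Theorem~\ref{general} via Proposition~\ref{pos-wum}) for $|I|=3$, and give a direct manipulation argument on the information set $\omega=\{\overline{p}:BO(\overline{p},q[i])=\{a_1,a_2\}\}$ for $|I|=2$. Your specific witnesses differ from the paper's---for $|I|=3$ the paper uses $\overline{p}(1)=\overline{p}(2)=[3,1,2,\ldots]$ with $q=[2,1,3,\ldots]$ rather than your reversed-orders profile, and for $|I|=2$ the paper derives the key gap $\mathrm{bo}(a_1,p)\ge\mathrm{bo}(a_3,p)+2$ by a contradiction on equal ranks rather than your rank-occupancy argument---but these are cosmetic variations on the same idea, and your added observation that {\sc us} genuinely fails for $BO$ when $|I|=2$ and $|A|=3$ is a helpful justification the paper omits.
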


\begin{theorem}\label{main-plurality}
If $|A|\ge 3$ and $|I|\in\{2,3\}$, then $PL$ is $\mathbf{K}$-strategy-proof. If $|A|\ge 3$ and $|I|\ge 4$, then $PL$ is $\Omega^{PL}$-$\mathbf{K}$-manipulable.
\end{theorem}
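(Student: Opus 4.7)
The theorem has two independent claims. For the second ($|I|\ge 4$), my plan is simply to invoke Theorem \ref{main-positional}: the plurality scoring vector $w_{\mathrm{pl}}=(1,0,\ldots,0)$ satisfies $w_1=1>0=w_2$, making $PL$ a unanimous positional {\sc scc}, so the hypotheses are met and $\Omega^{PL}$-$\mathbf{K}$-manipulability is immediate.

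For the first claim ($|I|\in\{2,3\}$), my plan is to prove $\mathbf{K}$-strategy-proofness by contradiction. Fix $i\in I$, $q,q'\in\mathcal{L}(A)$, and $\overline{p}\in\mathcal{L}(A)^{I\setminus\{i\}}$; let $b$ and $b'$ denote the $q$-top and $q'$-top of voter $i$, and set $W=PL(\overline{p},q[i])$, $W'=PL(\overline{p},q'[i])$. Assume $W'\succ_{\mathbf{K}(q)} W$, so that $W\ne W'$ and $x\succeq_q y$ for every $x\in W'$ and $y\in W$. Since plurality scores depend only on top votes, $W$ and $W'$ are functions of $\overline{p}$ together with $b$ and $b'$. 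I would then split on whether $b\in W$.

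If $b\in W$, since $b$ is the unique $q$-top the Kelly condition forces $W'=\{b\}$, so $W\ne W'$ gives $|W|\ge 2$. Picking $c\in W\setminus\{b\}$, I would compare the plurality scores $s$ at $(\overline{p},q[i])$ and $s'$ at $(\overline{p},q'[i])$: switching $i$'s top from $b$ to any $b'\ne b$ decreases $s(b)$ by $1$ and leaves $s(c)$ unchanged or increased (the latter when $b'=c$), yielding $s'(c)>s'(b)$ and thus $b\notin W'$, a contradiction; this sub-argument imposes no upper bound on $|I|$. If instead $b\notin W$, the bound $|I|\le 3$ becomes essential. For $|I|=2$ the case is vacuous: $s(b)\ge 1$ and the maximum score is at most $2$, attained only when the two voters share a top, which forces $b\in W$. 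For $|I|=3$, a short count shows the other two voters must top a common $d\ne b$ (otherwise $b$ ties the maximum), so $W=\{d\}$ with $d$ scoring $2$; a direct check over $b'\in\{b,d\}$ and $b'\notin\{b,d\}$ then gives $W'=\{d\}=W$, contradicting $W\ne W'$.

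The main obstacle I anticipate is the case $b\notin W$: one must verify that the rigid two-for-$d$ configuration and the scarcity of voters jointly prevent any deviation from altering $\{d\}$. This is precisely the structural feature that collapses when $|I|\ge 4$, where a fourth voter on a third alternative can leave $i$ room to engineer a Kelly-dominating tie, consistent with the second claim.
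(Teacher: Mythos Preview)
Your proof is correct. The second claim is handled exactly as in the paper, by invoking Theorem~\ref{main-positional}. For the first claim, the paper (Proposition~\ref{main-plu-23}) treats $|I|=2$ and $|I|=3$ separately, in each case running a direct exhaustion over the possible top-vote configurations of the remaining voters; for $|I|=3$ it first observes that $|PL(p)|\in\{1,3\}$. You instead organise the argument uniformly around whether the manipulator's sincere top $b$ lies in $W$: when $b\in W$ you use the Kelly relation to pin down $W'=\{b\}$ and then derive a score contradiction, an argument that works for arbitrary $|I|$; when $b\notin W$ you invoke the smallness of $|I|$ to force a rigid configuration in which the outcome cannot move. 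The two approaches rely on the same elementary facts about plurality scores, but your decomposition is slightly cleaner in that it isolates precisely where the bound $|I|\le 3$ is needed, which dovetails nicely with your remark about why the argument breaks at $|I|\ge 4$.
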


\begin{theorem}\label{main-neg-plurality}
If $|A|=3$ and $3$ divides $|I|-1$, then $NP$ is $\mathbf{K}$-manipulable and $\Omega^{NP}$-$\mathbf{K}$-strategy-proof. If $|A|=3$ and $3$ does not divide $|I|-1$, then $NP$ is $\Omega^{NP}$-$\mathbf{K}$-manipulable.
If $|A|\ge  4$ and $|I|< |A|-1$, then $NP$ is $\mathbf{K}$-strategy-proof.
If $|A|\ge  4$ and $|I|\ge |A|-1 $, then $NP$ is $\mathbf{K}$-manipulable and $\Omega^{NP}$-$\mathbf{K}$-strategy-proof. 
\end{theorem}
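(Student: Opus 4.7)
The plan is to exploit the key reduction that, for negative plurality, the score $\mathrm{np}(x,p)$ depends only on the last-ranked alternative in each individual's preference. Writing $n_x(p) = |\{i \in I : \mathrm{rank}_{p(i)}(x) = |A|\}|$, one has $\mathrm{np}(x,p) = |I| - n_x(p)$, so $NP(p)$ selects the alternatives minimizing $n_x(p)$; for a deviation by individual $i$ from $q$ (with last alternative $a$) to $q'$ (with last alternative $a' \neq a$), only the pair $(a,a')$ and the multiset of others' last alternatives matter. I will write $S(\overline{p}) = \{x \in A : \exists j \in I \setminus \{i\},\ \mathrm{rank}_{\overline{p}(j)}(x) = |A|\}$ and $m_x(\overline{p}) = |\{j \in I \setminus \{i\} : \mathrm{rank}_{\overline{p}(j)}(x) = |A|\}|$. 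The four statements of the theorem are then handled in turn.

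For $|A| \ge 4$, I first analyze an arbitrary $\mathbf{K}$-manipulation at $\overline{p}$. Setting $W = NP(\overline{p},q[i])$ and $W' = NP(\overline{p},q'[i])$, a case split on whether $a, a' \in S(\overline{p})$ shows that $W' \succ_{\mathbf{K}(q)} W$ can hold only in the sub-case $a \in S(\overline{p})$ and $a' \notin S(\overline{p})$, yielding $W' = W \setminus \{a'\}$; the remaining sub-cases either give $W = W'$ or force $a \in W'$ which is incompatible with strict Kelly preference since $a$ is $q$-worst. Because antisymmetry of $q$ makes Kelly strictness between $W' \subsetneq W$ require $|W'|=1$, we must have $|W|=2$ and $a' = \min_q W$, hence $|S(\overline{p})| = |A|-2$, so $|I|-1 \ge |A|-2$. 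This settles case 3 ($|I|<|A|-1$) by contradiction. For case 4 ($|I| \ge |A|-1$), an explicit profile realizing these conditions---assign each alternative of $A \setminus \{b, a'\}$ (with $a$ among them) as last of at least one other individual, and set $q = [b, \ldots, a', a]$---witnesses $\mathbf{K}$-manipulability via $W = \{b,a'\} \to \{b\}$. For $\Omega^{NP}$-$\mathbf{K}$-strategy-proofness in case 4, any putative manipulation must fit this template at some witness $\overline{p}' \in \omega$; but inside $\omega = \{\overline{p} : NP(\overline{p}, q[i]) = W\}$ I plan to construct a ``Type II'' profile $\overline{p}^* $ with $a \notin S(\overline{p}^*)$ and $S(\overline{p}^*) = A \setminus (W \cup \{a\})$, feasible because $|I|-1 \ge |A|-3$. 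At $\overline{p}^*$, the deviation produces $\{b,a\}$, and since $a$ is $q$-worst, $\{b,a\} \not\succeq_{\mathbf{K}(q)} W$, blocking the uniform Kelly improvement required by $\Omega^{NP}$-manipulability.

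For $|A|=3$, write $A = \{x,y,z\}$ and WLOG take $q = [x,y,z]$ (the case $q = [y,x,z]$ follows by symmetry). A direct enumeration over the at most seven possible outcomes $X$ and the two deviation directions $z' \in \{x,y\}$ shows that Kelly-strict $\mathbf{K}$-manipulation occurs only via two schemes, both with $z' = y$: (A) $X = \{x,y\}$ realized when $m_x = m_y \le m_z$, with $X' = \{x\}$ when additionally $m_x < m_z$; and (B) $X = \{y\}$ realized when $m_y < m_x$ and $m_y \le m_z$, with $X' = \{x, y\}$ when $m_x = m_y + 1 < m_z$. For case 2 ($3 \nmid |I|-1$), the $\omega$ associated with scheme~(A) consists of the $\overline{p}$ with $m_x(\overline{p}) = m_y(\overline{p}) = k$ and $m_z(\overline{p}) = |I|-1-2k$ for $k \in \{0, \ldots, \lfloor(|I|-1)/3\rfloor\}$; since $3 \nmid |I|-1$, the strict inequality $3k < |I|-1$ holds for every admissible $k$, so $m_x < m_z$ uniformly and the deviation yields $X' = \{x\}$ on all of $\omega$, establishing $\Omega^{NP}$-$\mathbf{K}$-manipulability. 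For case 1 ($3 \mid |I|-1$), $\mathbf{K}$-manipulability is witnessed by scheme~(A) at $(m_x, m_y, m_z) = (0, 0, |I|-1)$; for $\Omega^{NP}$-$\mathbf{K}$-strategy-proofness, the balanced profile $m_x = m_y = m_z = (|I|-1)/3$ lies in the $\omega$ of scheme~(A) and yields under deviation $X' = \{x, z\}$, while the profile $(|I|-1, 0, 0)$ lies in the $\omega$ of scheme~(B) and yields $X' = \{z\}$; in both cases $z \in X'$ and, since $z$ is $q$-worst, Kelly comparability with $X$ is destroyed.

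The main obstacle is case 1: establishing $\Omega^{NP}$-$\mathbf{K}$-strategy-proofness requires a systematic enumeration over all possible anticipated outcomes, both deviation directions, and both orderings of $q$ on $\{x,y\}$, ruling out in each case either the very existence of a strict Kelly witness or the possibility of a uniform Kelly improvement across $\omega$ by exhibiting a blocker that introduces the $q$-worst alternative $z$ into the deviation output. The divisibility $3 \mid |I|-1$ enters precisely as the integrality condition for the balanced blocking profile, which explains why the congruence is exactly the dividing line between cases 1 and 2.
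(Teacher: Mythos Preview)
Your analysis for $|A|=3$ is essentially correct and parallels the paper's Propositions \ref{propNPKM}, \ref{propNPomegaKSP}, and \ref{prop3notdivideNP}, though organized around an explicit manipulation taxonomy rather than the paper's direct case split on $(z,z',B)$.

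The genuine gap is in your $|A|\ge 4$ analysis. The claim that ``$W'\succ_{\mathbf{K}(q)}W$ can hold only in the sub-case $a\in S(\overline p)$ and $a'\notin S(\overline p)$, yielding $W'=W\setminus\{a'\}$'' is false in two ways. First, the sub-case $a,a'\in S(\overline p)$ can also produce strict Kelly improvement: with $|A|=4$, $q=[1,2,3,4]$, $a=4$, $a'=3$, and $(m_1,m_2,m_3,m_4)=(1,2,1,2)$ (so $|I|=7$), one gets $W=\{1,3\}$ and $W'=\{1\}$, a strict improvement with both $a,a'\in S(\overline p)$. Second, even inside your ``main'' sub-case the conclusion $W'=W\setminus\{a'\}$ fails when $|W|=1$: with $(m_1,m_2,m_3,m_4)=(1,2,0,2)$ (so $|I|=6$, $a\in S$, $a'\notin S$) one has $W=\{3\}$ and $W'=\{1,3\}\supsetneq W$, again a strict improvement. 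So manipulations exist with $|W|=1$ as well as $|W|=2$, and your characterization misses an entire template.

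This invalidates your derivation of $|S(\overline p)|=|A|-2$ for case~3 and leaves case~4's $\Omega^{NP}$-$\mathbf K$-strategy-proofness argument incomplete, since you construct the blocker $\overline p^{*}$ only for the $|W|=2$ template. The repair is not hard: the correct invariant is that any strict improvement forces $a'\in W$, $a\notin W$, and $\min(|W|,|W'|)=1$; combined with Lemma~\ref{lemappres} (which gives $|W|,|W'|\ge n-m$) this immediately yields case~3, and your blocker $S(\overline p^{*})=A\setminus(W\cup\{a\})$ in fact works verbatim for the $|W|=1$ template as well (it forces $a\in W'$ at $\overline p^{*}$). The paper avoids this classification entirely: Lemma~\ref{lemquasiGSP} and Proposition~\ref{propNPomegaKSP} argue directly over all $(q,q',B)$ by splitting on whether $z=z'$, $z\in B$, $|B|\le n-2$, or $|B|=n-1$, never needing to enumerate which $(\overline p,q,q')$ actually realize a manipulation.
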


The aforementioned theorems state that, if the alternatives are at least three, some important differences among the three {\sc scc}s start emerging. Indeed, $BO$ is always $\Omega^{BO}$-$\mathbf{K}$-manipulable; $PL$ is $\Omega^{PL}$-$\mathbf{K}$-manipulable  unless the number of individuals is two or three; $NP$ exhibits instead a much more complex behavior. It is worth noting that, depending on the arithmetical relation between $|A|$ and $|I|$, all the three possible scenarios can occur for $NP$. In particular, if $|I|\ge |A|-1$ and the individuals have a level of information corresponding at most to the knowledge of the winners, then no individual can manipulate $NP$ in the sense of Kelly.
On the other hand, as is easily checked, if $|A|\geq 4$ and $|I|< |A|-1$, $NP$ is not set-monotonic. Thus there are examples beyond Theorem 1 in Brant (2015) for which $\mathbf{K}$-strategy proofness is possible.

The proofs of Theorems \ref{main-borda}, \ref{main-plurality} and \ref{main-neg-plurality} are given in Appendixes \ref{appendix-B} and \ref{appendix-C}.

\subsection{The Copeland social choice correspondence}

First, observe that in the case with two alternatives, the Copeland {\sc scc} is $\mathbf{K}$-strategy-proof and then $\Omega^F$-$\mathbf{K}$-strategy-proof. This follows from Proposition \ref{qualified} and the fact that the Copeland  {\sc scc} coincides with the simple majority rule when the alternatives are two. Restricting our attention to the case with at least three alternatives, we get the following result, the proof of which can be found in Appendix \ref{appendix-COP}.

\begin{theorem}\label{Copeland-main}
If $|A|=3$ and $|I|$ is odd, then $CO$ is $\mathbf{K}$-strategy-proof. 
If $|A|=3$ and $|I|$ is even, then $CO$ is $\Omega^{CO}$-$\mathbf{K}$-manipulable.
If $|A|\ge 4$, then $CO$ is $\Omega^{CO}$-$\mathbf{K}$-manipulable. 
\end{theorem}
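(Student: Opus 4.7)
The plan is to treat the three statements of Theorem \ref{Copeland-main} separately: the $\mathbf{K}$-strategy-proofness assertion by a direct structural analysis, and the two manipulability assertions by invoking Theorem \ref{general} wherever possible, supplemented by ad hoc constructions in a few arithmetic corner cases.

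For the first statement ($|A|=3$, $|I|$ odd), the key observation is that $|I|$ being odd forces every pairwise margin $c_p(a,b)-c_p(b,a)$ to be a nonzero odd integer, so $w_p(a)+l_p(a)=2$ and every Copeland score lies in $\{-2,0,2\}$. Since the three scores sum to zero, either exactly one alternative has score $2$ and is the unique Condorcet winner (so $CO(p)$ is a singleton) or all three scores are zero and $CO(p)=A$. Suppose for contradiction there are $i\in I$, $q,q'\in\mathcal{L}(A)$ and $\overline{p}\in\mathcal{L}(A)^{I\setminus\{i\}}$ with $CO(\overline{p},q'[i])\succ_{\mathbf{K}(q)}CO(\overline{p},q[i])$. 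A short case analysis over the admissible shapes of the Kelly-strict improvement — singleton-to-singleton, singleton-to-$A$, and $A$-to-singleton — shows that each case forces the pairwise majority on some ordered pair $(b,c)$ to flip from favoring $c$ to favoring $b$, where individual $i$'s true preference already satisfies $b\succ_q c$. But then $i$ is already voting for $b$ in the pair $\{b,c\}$, so her only unilateral action — flipping her vote — can only weaken the $b$ side, not reverse the majority. Hence no such manipulation exists.

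For the two manipulability assertions I first verify that $CO$ satisfies {\sc wsm} in full generality. The transposition $\psi$ of two adjacent alternatives $x,y$ in a single voter's ranking modifies only $m_{xy}$, shifting it by $-2$, while all other margins and hence every $\mathrm{co}(w)$ for $w\notin\{x,y\}$ are unchanged; moreover $\mathrm{co}(x)$ decreases by at most $1$ and $\mathrm{co}(y)$ increases by at most $1$. So if $CO(p)=\{z\}$ with $z$ ranked strictly below $y$ in $q$, only $y$ among the non-$z$ alternatives can catch up to $z$, giving $CO(p')\subseteq\{z,y\}$. To apply Theorem \ref{general} it then suffices to produce a {\sc us} witness. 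For $|A|=3$ with $|I|\ge 4$ even, the four-voter profile $([x,y,z],[y,z,x],[z,x,y],[z,y,x])$ gives $CO(p)=\{z\}$ with $\mathrm{co}(z)-\mathrm{co}(y)=1$ and $m_{xy}=0$; the swap of $x,y$ in $p(1)$ then flips $m_{xy}$ to $-2$, raises $\mathrm{co}(y)$ to $1$, and brings $y$ into the winner set, and adjoining balanced pairs $([x,y,z],[z,y,x])$ extends this to all even $|I|\ge 4$. For $|A|\ge 4$ the extra alternatives allow analogous constructions: for $|A|=4$, $|I|=2$, the profile $([x,y,z,a],[z,a,y,x])$ yields $CO(p)=\{z\}$ with $\mathrm{co}(z)=1$, $\mathrm{co}(y)=0$ and $m_{xy}=0$, and the swap again produces $\{y,z\}$; voter padding and placement of additional alternatives extends this baseline to larger $|I|$ and $|A|$.

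The main obstacle is the handful of cases where arithmetic parity forces $\mathrm{co}(z)-\mathrm{co}(y)$ to exceed the maximum amount a single swap can close, making {\sc us} unattainable and Theorem \ref{general} inapplicable. The clearest such case is $|A|=3$, $|I|=2$, where $\{z\}$ is never selected whenever some voter ranks $z$ at the bottom; another is $|A|=4$ with $|I|$ odd, where the Copeland scores all lie in $\{-3,-1,1,3\}$ and the joint constraints of an admissible configuration with $\mathrm{co}(z)-\mathrm{co}(y)=2$ and $m_{xy}=1$ are incompatible. For each such case my plan is to produce a direct $\Omega^{CO}$-$\mathbf{K}$-manipulation witness by identifying a preference $q$ whose associated information set $\omega\in\Omega^{CO}_i(q)$ is small enough that a single strict Kelly improvement extends across $\omega$. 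In the $|A|=3$, $|I|=2$ case, for instance, taking $q=[x,y,z]$ and $p(2)=[y,x,z]$ yields $CO(p)=\{x,y\}$, and reporting $q'=[x,z,y]$ produces $CO(p')=\{x\}$ with $\{x\}\succ_{\mathbf{K}(q)}\{x,y\}$; the information set associated with the outcome $\{x,y\}$ turns out to be the singleton $\{[y,x,z]\}$, so the single manipulation suffices. Analogous tailored constructions, each exploiting that an outcome's information set can be a small explicit set on which a uniform strict improvement can be verified, will handle the remaining exceptional combinations.
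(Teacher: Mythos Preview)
Your overall architecture matches the paper's: prove {\sc wsm} for $CO$, supply {\sc us} witnesses wherever possible and invoke Theorem~\ref{general}, and treat the residual parity obstructions by direct $\Omega^{CO}$-$\mathbf{K}$-manipulation. Your first-statement argument and your identification of $|A|=3,|I|=2$ and $|A|=4,|I|$ odd as exceptional cases also line up with the paper (Propositions~\ref{37}--\ref{39}).

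There are two issues, one minor and one substantive. First, the claim that a single adjacent swap changes $\mathrm{co}(x)$ and $\mathrm{co}(y)$ by ``at most $1$'' is false: when $c_p(x,y)-c_p(y,x)=1$ the swap flips the majority, so $\mathrm{co}(y)$ jumps by $2$. This does not harm your {\sc wsm} argument (what you actually need is that $\mathrm{co}(z)$ is unchanged and $\mathrm{co}(x)$ does not increase), and you implicitly use the correct bound later when you analyze the $m_{xy}=1$ case; but the statement should be fixed.

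The substantive gap is the case $|A|\ge 5$ with $|I|$ odd. Your {\sc us} witnesses are built from baselines with $|I|=2$ or $|I|=4$ and extended by adjoining reversed pairs of voters; this keeps $|I|$ even, so the construction never reaches odd $|I|$. You then list only $|A|=3,|I|=2$ and $|A|=4,|I|$ odd among the exceptions, so $|A|\ge 5$ with $|I|$ odd falls through both nets. In fact this case \emph{is} amenable to {\sc us} (the parity obstruction you describe for $|A|=4$ disappears because the Copeland scores now have even parity, and a configuration with $\mathrm{co}(z)-\mathrm{co}(y)=2$ together with $c_p(x,y)-c_p(y,x)=1$ and $x$ beating $y$ is realizable), but a separate explicit profile is needed. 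The paper supplies one (Proposition~\ref{40}): a five-alternative construction with $\lfloor |I|/2\rfloor$ voters on each of two rankings plus the manipulator, which yields $\mathrm{co}(z)-\mathrm{co}(y)=2$ and a margin of $1$ on the swapped pair. You should either add such a witness or move this case into your list of exceptions and supply a direct manipulation there.
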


\section{Conclusion}

After having introduced the definition of $\Omega$-$\mathbf{E}$-manipulability (Definition \ref{defpimanip}), we have focused on 
$\Omega^F$-$\mathbf{K}$-manipulability. We have then provided sufficient conditions for $\Omega^F$-$\mathbf{K}$-manipulability (Theorem \ref{general}) and analyzed $\Omega^F$-$\mathbf{K}$-manipulability when $F$ is positional (Theorems \ref{main-positional}, \ref{main-borda}, \ref{main-plurality}, and \ref{main-neg-plurality}) and when $F$ is the Copeland {\sc scc} (Theorem \ref{Copeland-main}). The results for $BO$ and $PL$ confirm that, even if we consider conditions that are favorable for positive results by using the Kelly extension rule and by reducing the information at the disposal of individuals, it is very difficult to avoid manipulation for those {\sc scc}s. On the other hand, the results for $NP$ show that, for this {\sc scc}, limiting the information can be an effective way to achieve strategy-proofness. 

This study can be extended and deepened by considering different extension rules, different information function profiles, and different families of social choice correspondences. Since different information function profiles may be comparable, meaning that one may be more or less informative than another, finding different results in terms of manipulability for different information function profiles might allow us to shed light on how much information is needed to manipulate a given {\sc scc}. By varying the extension rules, we modify our assumptions about how individuals evaluate sets of alternatives based on their preferences on alternatives. Of course, changing extension rules and information function profiles generally has a decisive impact on  manipulabilityresults.

As an example, consider the well-known Fishburn extension rule (G\"anderfors, 1976) defined, for every $q \in \mathcal{L}(A)$, by
\[
\mathbf{F}(q) \coloneq \left\{(B,C)\in P_0(A)^2 : x \succeq_q y\ \hbox{ for all}\  x \in B\setminus C\mbox{ and } y\in C \right\}
\]
\[ 
\cap \left\{(B,C)\in P_0(A)^2 : x \succeq_q y\mbox{ for all } x \in B\mbox{ and } y \in C\setminus B\right\}.
\]
Then the following result holds true.

\begin{proposition}\label{finale}
If $|A|\ge 3$ and $|A|$ does not divide $|I|-1$, then $NP$ is $\Omega^{NP}$-$\mathbf{F}$-manipulable.
\end{proposition}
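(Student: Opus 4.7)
The plan is to exhibit a concrete witness of $\Omega^{NP}$-$\mathbf{F}$-manipulability. The key observation driving the construction is that under $\mathbf{F}$ (unlike $\mathbf{K}$), simply \emph{removing} the $q$-worst element from the current winner set already yields a strictly better outcome; this is how I can defeat $\Omega^{NP}$-strategy-proofness in regimes where Theorem~\ref{main-neg-plurality} instead produced $\Omega^{NP}$-$\mathbf{K}$-strategy-proofness for $NP$.

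Writing $n=|I|$, $m=|A|$ and enumerating $A = \{a_1,\ldots,a_m\}$, I take individual $i = 1$ with sincere preference $q = [a_1,\ldots,a_m]$ and consider the deviation $q' = \psi q$, where $\psi$ is the transposition exchanging $z := a_m$ and $w := a_{m-1}$; so $q'$ ranks $w$ last and $z$ second-to-last. The information set I will use is
\[
\omega \;=\; \bigl\{\overline{p} \in \mathcal{L}(A)^{I\setminus\{1\}}: NP(\overline{p}, q[1]) = A\setminus\{z\}\bigr\},
\]
which is non-empty (e.g.\ any profile in which every other individual ranks $z$ last gives $NP = A\setminus\{z\}$), so $\omega\in\Omega^{NP}_1(q)$. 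Setting $W := A\setminus\{z\} = \{a_1,\ldots,a_{m-1}\}$, the task reduces to showing $NP(\overline{p},q'[1])\succ_{\mathbf{F}(q)} W$ for \emph{every} $\overline{p}\in\omega$.

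For any such $\overline{p}$, I would extract the common last-place count $k$ of $a_1,\ldots,a_{m-1}$ at $(\overline{p},q[1])$, use $\sum_{a\in A}\ell(a)=n$ to get $\ell(z) = n-(m-1)k$, and note that the winner-set condition forces $mk<n$. Passing from $q$ to $q'$ only relocates individual $1$'s bottom alternative from $z$ to $w$, so the updated last-place counts become $k$ for each of $a_1,\ldots,a_{m-2}$, $k+1$ for $w$, and $n-(m-1)k-1$ for $z$. The new winner set equals $W\setminus\{w\}$ precisely when $n-(m-1)k-1 > k$, i.e.\ $mk\le n-2$; since already $mk<n$, the only obstruction would be $mk = n-1$, which is exactly excluded by the hypothesis $|A|\nmid |I|-1$. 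Identifying this arithmetic equivalence is the heart of the argument: without it, the deviation can cause $z$ to re-enter the winner set and destroy the Fishburn comparison.

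Once uniform new winners $W\setminus\{w\}$ are established, verifying $W\setminus\{w\}\succ_{\mathbf{F}(q)} W$ is a short check. The inclusion $W\setminus\{w\}\subsetneq W$ makes the first clause in the definition of $\mathbf{F}(q)$ vacuous, and the second reduces to $a_j\succeq_q a_{m-1}$ for every $j<m-1$, which is immediate from $q=[a_1,\ldots,a_m]$; the reverse comparison fails because $w\not\succeq_q a_1$, so the strict inequality follows without any appeal to antisymmetry. The main potential obstacle throughout is choosing the right $q'$ and the right $W$ so that the ``dangerous'' profiles---those in which the deviation would make $z$ rejoin the winners---form exactly the set excluded by the hypothesis $|A|\nmid |I|-1$; everything else is either a routine count or a direct unfolding of the definition of $\mathbf{F}$.
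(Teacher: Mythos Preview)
Your argument is correct and follows essentially the same approach as the paper's proof: the same $q$, the same deviation $q'=\psi q$ swapping the two bottom alternatives, the same information set $\omega$ corresponding to the outcome $A\setminus\{z\}$, and the same arithmetic showing that the divisibility hypothesis is exactly what prevents $z$ from re-entering the winner set after the swap. One cosmetic point: you set $n=|I|$ and $m=|A|$, which is the reverse of the paper's appendix convention ($|A|=n$, $|I|=m$); your computations are internally consistent, but be aware of the clash if you align with the surrounding text.
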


\begin{proof} 
Let $|A|=n$ and $|I|=m$ and assume that $n\ge 3$, $m\ge 2$, and $n$ does not divide $m-1$. Assume also that $A=\{x_1,\ldots,x_n\}$, where $x_1,\ldots,x_n$ are distinct. Let $i \in I$, $q = [x_1,\ldots,x_n]$ and $\omega = \{\overline{p} \in \mathcal{L}(A)^{I \setminus \{i\}}: NP(\overline{p},q[i]) = A \setminus \{x_n\}\}$.
Note that $\omega\in\Omega_i^F(q)$. Indeed, let $\overline{p}' \in \mathcal{L}(A)^{I \setminus \{i\}}$ be such that $\overline{p}'(i)=q$ for all $i\in I \setminus \{i\}$. We have that $\overline{p}' \in\omega\neq\varnothing$ and so $\omega\in\Omega_i^F(q)$.

Let $\psi$ be the transposition that exchanges $x_n$ and $x_{n-1}$.
We show that $NP$ is $\Omega^{NP}$-$\mathbf{F}$-manipulable by proving that, for every $\overline{p}\in\omega$,
$NP(\overline{p},\psi q[i]) = A \setminus \{x_{n-1},x_{n}\}$. Indeed, since $N(\overline{p},q[i])=A\setminus \{x_n\}$, we deduce that, for every $\overline{p}\in\omega$, $NP(\overline{p},\psi q[i])\succ_{\mathbf{F}(q)} N(\overline{p},q[i])$, which clearly implies $\Omega^{NP}$-$\mathbf{F}$-manipulability.

Fix $\overline{p}\in\omega$ and set $p=(\overline{p},q[i])$.
For every $x \in A$, set $N(x) \coloneq \mathrm{np}(x,p)$,  $N'(x)\coloneq \mathrm{np}(x,(\overline{p},\psi q[i]))$, and $L(x)\coloneq |\{i\in I:\mathrm{rank}_{p(i)}(x)=n\}|$. We have that, for every $x\in A$, $L(x) = m - N(x)$. Moreover, $\sum_{x \in A} L(x) = m$ and, for every $x,y \in A \setminus \{x_n\}$, $L(x_n) > L(x) = L(y)$. Assume now by contradiction that $L(x_n) = L(x_{n-1}) + 1$. Then, we get 
\[
m = \sum_{x \in A} L(x) =L(x_n)+\sum_{x \in A \setminus \{x_n\}} L(x) = L(x_{n-1}) +1+(n-1) L(x_{n-1}) = n L(x_{n-1}) + 1.
\]
Thus $n$ divides $m-1$, a contradiction. Hence, $L(x_n) \ge L(x_{n-1}) +2$ and so $N(x_n) \le N(x_{n-1})-2$. 
We have $N'(x_n) = N(x_n) +1$, $N'(x_{n-1}) = N(x_{n-1}) - 1$ and, for every $x \in A \setminus \{x_{n-1},x_n\}$, $N'(x) = N(x)$. 
Note that, since $n\ge 3$, $ A \setminus \{x_{n-1},x_n\}\neq \varnothing$.
Then, $NP(\overline{p},\psi q[i]) = A \setminus \{x_{n-1},x_{n}\}$, as desired.
\end{proof}

Proposition \ref{finale} shows that, the use of the Fishburn extension rule instead of Kelly extension rule, makes $NP$ have a different behavior in terms of manipulability.  Indeed, if $|A|\ge 4$ and $|A|$ does not divide $|I|-1$, then, by Theorem \ref{main-neg-plurality} and Proposition \ref{finale}, we have that $NP$ is $\Omega^{NP}$-$\mathbf{K}$-strategy-proof and $\Omega^{NP}$-$\mathbf{F}$-manipulable.

\vspace{3mm}

\subsubsection*{Declarations of competing interest}

\vspace{-2mm}

The authors have no competing interests to declare that are relevant to the content of this article.

\subsubsection*{Data availability}

\vspace{-2mm}

No data was used for the research described in the article.

\subsubsection*{Declaration of Generative AI and AI-assisted technologies in the writing process}

\vspace{-2mm}

During the preparation of this work the authors used ChatGPT (chatgpt.com) for language refinement aimed at improving the clarity and readability of the manuscript. 
After using this tool, the authors reviewed and edited the content as needed and take full responsibility for the content of the article.

\subsubsection*{Acknowledgements}

\vspace{-2mm}

The authors thank the participants of the XLVIII AMASES Conference, for their comments. Raffaele Berzi and Michele Gori have been supported by local funding from the Universit\`a degli Studi di Firenze. Daniela Bubboloni has been supported by GNSAGA of INdAM (Italy), by the European Union - Next Generation EU, Missione 4 Componente 1, PRIN 2022-2022PSTWLB - Group Theory and Applications, CUP B53D23009410006, and by local funding from the Universit\`a degli Studi di Firenze.

\section*{Appendix}
\renewcommand{\thesubsection}{\Alph{subsection}}
\setcounter{subsection}{0}

This appendix is devoted to the proofs of Theorems \ref{main-borda}, \ref{main-plurality},  \ref{main-neg-plurality} and \ref{main-positional}. Those proofs are developed through some intermediate steps. For simplicity, in the rest of the paper we set $|A|=n$ and $|I|=m$ and, without loss of generality, we assume that $A=\ldbrack n \rdbrack$ and $I=\ldbrack m \rdbrack$.

\subsection{Properties of the Kelly extension rule}\label{appendix-A}

Recall that, for every $q \in \mathcal{L}(A)$, $\mathbf{K}(q)$ is a partial order. 
The following proposition collects some further basic properties of the Kelly extension rule. 

\begin{proposition}\label{properties} Let $q \in \mathcal{L}(A)$ with $q = [x_1,\ldots,x_{|A|}]$. Then, for every $B,C \in P_0(A)$, the following facts hold true.
\begin{itemize}
\item[$(i)$] If $B\succ_{\mathbf{K}(q)} C$, then $|B\cap C|\leq 1$.
\item[$(ii)$] If  $B\subsetneq C$ and $B$ and $C$ are $\mathbf{K}(q)$-comparable, then $B$ is a singleton.
\item[$(iii)$] $C\not\succ_{\mathbf{K}(q)}B$ if and only if $B=C$ or there exist $x\in B$ and $y\in C$ such that $x\succ_q y$.
\item[$(iv)$] If $\mathrm{rank}_q(y)>\mathrm{rank}_q(x)$, then $\{x\}\succ_{\mathbf{K}(q)} \{x,y\}\succ_{\mathbf{K}(q)} \{y\}$.
\item[$(v)$] If $B \neq \{x_1\}$, then $\{x_1\}\succ_{\mathbf{K}(q)} B$.
\item[$(vi)$] If $x_1 \in B$ and $x_1 \notin C$, then $C \nsucc_{\mathbf{K}(q)} B$.
\end{itemize}
\end{proposition}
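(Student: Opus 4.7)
The plan is to prove all six parts directly from the definition
\[
\mathbf{K}(q)=\{(B,C):B=C\}\cup\{(B,C): x\succeq_q y\ \text{for all}\ x\in B,\ y\in C\},
\]
using only the completeness and antisymmetry of the linear order $q$. The single tool I would use throughout is the following unfolding: $B\succ_{\mathbf{K}(q)}C$ is equivalent to $B\neq C$, together with $x\succeq_q y$ for all $x\in B,y\in C$, and the existence of some $x\in B,y\in C$ with $x\succ_q y$. The last existence condition is obtained from $C\not\succeq_{\mathbf{K}(q)}B$ via completeness of $q$: if it is not the case that $y\succeq_q x$ for all $y\in C,x\in B$, then some pair satisfies the strict reverse.

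For (i), I would assume $B\succ_{\mathbf{K}(q)}C$ and, supposing $|B\cap C|\geq 2$, pick distinct $a,b\in B\cap C$. The defining condition applied to $a\in B,\,b\in C$ and then to $b\in B,\,a\in C$ yields $a\succeq_q b$ and $b\succeq_q a$, contradicting antisymmetry of $q$. Part (ii) is then immediate: since $B\subsetneq C$ and $\mathbf{K}(q)$ is antisymmetric, comparability forces either $B\succ_{\mathbf{K}(q)}C$ or $C\succ_{\mathbf{K}(q)}B$; in either case (i) gives $|B|=|B\cap C|\leq 1$, and nonemptiness upgrades this to $|B|=1$. For (iii), I would split on whether $B=C$; the equality case is trivial on both sides. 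When $B\neq C$, $C\succ_{\mathbf{K}(q)}B$ reduces to $y\succeq_q x$ for all $y\in C,x\in B$, whose negation, via completeness of $q$, is exactly the existence of $x\in B,y\in C$ with $x\succ_q y$.

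Parts (iv)--(vi) are direct verifications exploiting maximality of $x_1$. For (iv), $\mathrm{rank}_q(y)>\mathrm{rank}_q(x)$ translates to $x\succ_q y$, so $\{x\}\succeq_{\mathbf{K}(q)}\{x,y\}\succeq_{\mathbf{K}(q)}\{y\}$ hold by inspection, and the two comparisons are strict because $y\not\succeq_q x$ rules out both reversed relations. For (v), every $y\in A$ satisfies $x_1\succeq_q y$, so $\{x_1\}\succeq_{\mathbf{K}(q)}B$; if $B\succeq_{\mathbf{K}(q)}\{x_1\}$ also held, then every $y\in B$ would satisfy $y\succeq_q x_1$, whence $y=x_1$ and $B=\{x_1\}$, against hypothesis. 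For (vi), if $C\succ_{\mathbf{K}(q)}B$ held, applying the defining condition to the distinguished element $x_1\in B$ would force $y\succeq_q x_1$, hence $y=x_1$, for every $y\in C$, so $C=\{x_1\}$, contradicting $x_1\notin C$.

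I do not expect any real obstacle: each part collapses to antisymmetry, completeness, or the fact that $x_1$ is $q$-maximal. The only minor care is bookkeeping when passing between $\succeq_{\mathbf{K}(q)}$ and $\succ_{\mathbf{K}(q)}$, specifically remembering to invoke $B\neq C$ where needed so that the ``strict'' part of the Kelly order reduces cleanly to the ``every-versus-every'' condition.
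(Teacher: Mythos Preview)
Your proposal is correct and follows essentially the same route as the paper: parts (i)--(iii) are argued exactly as in the paper via antisymmetry and completeness of $q$ (the paper phrases (iii) as two contradiction arguments and invokes antisymmetry of $\mathbf{K}(q)$, which amounts to your ``reduces to'' step), and for (iv)--(vi) the paper simply writes ``Straightforward'' where you spell out the verifications. The only point worth tightening when you write it up is your claim in (iii) that, for $B\neq C$, $C\succ_{\mathbf{K}(q)}B$ reduces to the single condition $y\succeq_q x$ for all $y\in C,\,x\in B$; this is true, but it quietly uses antisymmetry of $q$ (or equivalently of $\mathbf{K}(q)$) to absorb the strict-witness clause, so make that explicit.
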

\begin{proof} $(i)$ Let $B\succ_{\mathbf{K}(q)} C$. Suppose by contradiction that $|B\cap C|\ge 2$. Thus, there are $x,y\in B\cap C$ with $x\neq y$. Since $B\succeq_{\mathbf{K}(q)} C$ and $B\neq C$, we deduce that $x\succeq_q y$ and $y\succeq_q x$. By antisymmetry of $q$, we then get the contradiction $x=y$.

$(ii)$ Let $B\subsetneq C$ and suppose that $B\succeq_{\mathbf{K}(q)} C$ or $C\succeq_{\mathbf{K}(q)} B$. Since $B\neq C$, we have $B\succ_{\mathbf{K}(q)} C$ or $C\succ_{\mathbf{K}(q)} B$. Thus, by $(i)$, we get $|B\cap C|\leq 1$. Since $B\cap C=B$, we have $B\cap C\neq\varnothing$. As a consequence, we deduce $|B|=|B\cap C|=1$.

$(iii)$ Let $C\not\succ_{\mathbf{K}(q)}B$ and $B\neq C$.  Assume, by contradiction, that, for every $x\in B$ and $y\in C$, $x\not\succ_q y$. Since $q$ is complete, we deduce that, for every $x\in B$ and $y\in C$, $y\succeq_q x$. Thus, $C\succeq_{\mathbf{K}(q)} B$. Since $\mathbf{K}(q)$ is antisymmetric and $B\neq C$, we finally get the contradiction $C\succ_{\mathbf{K}(q)}B$.

Assume, conversely, that $B=C$ or that there exist $x\in B$ and $y\in C$ such that $x\succ_q y$. If $B=C$, we immediately have $C\not\succ_{\mathbf{K}(q)}B$. If $B\neq C$, there must exist $x\in B$ and $y\in C$ such that $x\succ_q y$. Then, $y\not\succeq_q x$ and therefore  $C\not \succeq_{\mathbf{K}(q)} B$. As a consequence, we also have $C\not\succ_{\mathbf{K}(q)}B$.

$(iv)$-$(vi)$ Straightforward.
\end{proof}

\subsection{The case \texorpdfstring{$|A|=2$}{|A|=2}}\label{sec-A2}

Assume that $A=\{a,b\}$ with $a\neq b$. Let $\alpha\ge \frac{|I|}{2}$. The $\alpha$-majority {\sc scc}, here denoted by $MAJ_\alpha$, is defined, for every $p\in\mathcal{L}(A)^I$, by
\[
MAJ_\alpha(p)\coloneq\left\{
\begin{array}{llll}
\{a\}&\mbox{ if }|\{i\in I: a\succeq_{p(i)}b\}|> \alpha\\
\{b\}&\mbox{ if }|\{i\in I: b\succeq_{p(i)}a\}|> \alpha\\
\{a,b\}&\mbox{ otherwise}\\
\end{array}
\right.
\]
It is easily observed that each positional {\sc scc} on two alternatives coincides with $MAJ\coloneq MAJ_\frac{|I|}{2}$. 

\begin{proposition}\label{qualified}
Assume that $n=2$ and $\alpha\ge \frac{m}{2}$. Then $MAJ_\alpha$ is $\mathbf{K}$-strategy-proof.
\end{proposition}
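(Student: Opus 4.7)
The plan is to argue by contradiction. Suppose $MAJ_\alpha$ is $\mathbf{K}$-manipulable: then there exist $i\in I$, $q,q'\in\mathcal{L}(A)$ and $\overline{p}\in\mathcal{L}(A)^{I\setminus\{i\}}$ with $MAJ_\alpha(\overline{p},q'[i])\succ_{\mathbf{K}(q)}MAJ_\alpha(\overline{p},q[i])$. Since $|A|=2$, write $A=\{a,b\}$ and assume without loss of generality $q=[a,b]$; since $q'\neq q$ (otherwise the two outcomes are equal), we must have $q'=[b,a]$. By item $(iv)$ of Proposition \ref{properties}, the restriction of $\mathbf{K}(q)$ to $P_0(A)=\{\{a\},\{b\},\{a,b\}\}$ is the linear order $\{a\}\succ_{\mathbf{K}(q)}\{a,b\}\succ_{\mathbf{K}(q)}\{b\}$. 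Setting $B=MAJ_\alpha(\overline{p},q[i])$ and $C=MAJ_\alpha(\overline{p},q'[i])$, the relation $C\succ_{\mathbf{K}(q)}B$ forces the pair $(B,C)$ to lie in the set
\[
\bigl\{(\{a,b\},\{a\}),\;(\{b\},\{a\}),\;(\{b\},\{a,b\})\bigr\}.
\]

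The second step is a short counting argument. Let $k=|\{j\in I\setminus\{i\}:a\succ_{\overline{p}(j)}b\}|$, so that under $p=(\overline{p},q[i])$ alternative $a$ receives $k+1$ votes and $b$ receives $m-1-k$, while under $p'=(\overline{p},q'[i])$ alternative $a$ receives $k$ and $b$ receives $m-k$. The three cases above translate into the inequalities: $k+1\le\alpha$ and $k>\alpha$; $m-1-k>\alpha$ and $k>\alpha$; $m-1-k>\alpha$ and $m-k\le\alpha$. The first and third pairs are immediately inconsistent; the second gives $m-1>2\alpha$, which contradicts the standing hypothesis $\alpha\ge m/2$.

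The only potential obstacle is bookkeeping in the case analysis, and this is genuinely routine because the Kelly order on $P_0(A)$ is linear when $|A|=2$. The argument also covers the degenerate subcase $\alpha\ge m$, since then $MAJ_\alpha$ is constantly $\{a,b\}$ and there is nothing to manipulate. This yields the $\mathbf{K}$-strategy-proofness of $MAJ_\alpha$ and, a fortiori, its $\Omega^F$-$\mathbf{K}$-strategy-proofness for any information function profile $\Omega$.
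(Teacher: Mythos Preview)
Your proof is correct and follows essentially the same approach as the paper: a short case analysis on the possible outcomes of $MAJ_\alpha$ under the two reports. The paper organizes the cases by the value of $MAJ_\alpha(\overline{p},q[i])$ and argues monotonically that the outcome cannot improve, while you enumerate the three pairs $(B,C)$ with $C\succ_{\mathbf{K}(q)}B$ and rule each out by a vote count; the underlying arithmetic is the same.
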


\begin{proof}
Let $i \in I$, $\overline{p} \in \mathcal{L}(A)^{I\setminus\{i\}}$ and $q,q'\in\mathcal{L}(A)$. Let us set $p=(\overline{p},q[i])$ and $p'=(\overline{p},q'[i])$ and assume that $q=[x_1,x_2]$, where $A=\{x_1,x_2\}$. We want to prove that
\begin{equation}\label{maj}
MAJ_\alpha(p')\nsucc_{\mathbf{K}(q)} MAJ_\alpha(p).
\end{equation}
If $q'=q$, \eqref{maj} is true. Assume $q'\neq q$ and so $q'=[x_2,x_1]$. If $MAJ_\alpha(p) = \left\{x_1\right\}$, we have that $MAJ_\alpha(p)\succeq_{\mathbf{K}(q)} B$ for all $B\in P_0(A)$ and thus \eqref{maj} holds true. Assume next that $MAJ_\alpha(p) \neq \left\{x_1\right\}.$ Thus, $x_2\in MAJ_\alpha(p)$ and $|\{i\in I: x_1\succeq_{p(i)}x_2\}|\leq \alpha$. Then, we have that $|\{i\in I: x_1\succeq_{p'(i)}x_2\}|< |\{i\in I: x_1\succeq_{p(i)}x_2\}|\leq \alpha$. As a consequence, $x_2\in MAJ_\alpha(p')$. If $x_1\in MAJ_\alpha(p)$, then \eqref{maj} is true. If $x_1\notin MAJ_\alpha(p)$, then $MAJ_\alpha(p)=\{x_2\}$. Thus, $|\{i\in I: x_2\succeq_{p'(i)}x_1\}|>|\{i\in I: x_2\succeq_{p(i)}x_1\}|> \alpha$ and hence $MAJ_\alpha(p')=\{x_2\}$. As a consequence, \eqref{maj} is true.
\end{proof}

\subsection{Proofs of Theorems  \ref{main-positional}, \ref{main-borda}, and  \ref{main-plurality}}\label{appendix-B}

\begin{proposition}\label{pos-wum}
Let $F$ be a positional {\sc scc}. Then $F$ satisfies {\sc wsm}.
\end{proposition}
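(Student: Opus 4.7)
The plan is to exploit the score-based definition of positional {\sc scc}s and to track exactly how the swap $\psi$ between $x$ and $y$ in individual $i$'s ballot affects the $w$-score of each alternative. Set $p = (\overline{p}, q[i])$ and $p' = (\overline{p}, \psi q[i])$, and let $k = \mathrm{rank}_q(x)$, so that $\mathrm{rank}_q(y) = k+1$. Since only individual $i$ changes her ranking, and within her ranking only the positions of $x$ and $y$ are interchanged, I would first observe that $\mathrm{rank}_{\psi q}(x) = k+1$, $\mathrm{rank}_{\psi q}(y) = k$, and $\mathrm{rank}_{\psi q}(u) = \mathrm{rank}_q(u)$ for every $u \in A \setminus \{x,y\}$.

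From this bookkeeping one reads off the score changes: for every $u \in A \setminus \{x,y\}$ one has $\mathrm{sc}_w(u, p') = \mathrm{sc}_w(u, p)$, while $\mathrm{sc}_w(y, p') - \mathrm{sc}_w(y, p) = w_k - w_{k+1} \geq 0$ and $\mathrm{sc}_w(x, p') - \mathrm{sc}_w(x, p) = w_{k+1} - w_k \leq 0$, the inequalities being exactly the monotonicity of the scoring vector.

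The conclusion then follows by comparing every alternative's score at $p'$ with $\mathrm{sc}_w(z, p')$. Since $\mathrm{rank}_q(z) > \mathrm{rank}_q(y) > \mathrm{rank}_q(x)$, in particular $z \notin \{x,y\}$, hence $\mathrm{sc}_w(z, p') = \mathrm{sc}_w(z, p)$. The hypothesis $F(p) = \{z\}$ says that $z$ is the unique $\mathrm{argmax}$ at $p$, so $\mathrm{sc}_w(z, p) > \mathrm{sc}_w(u, p)$ for every $u \neq z$. This immediately excludes every $u \in A \setminus \{x,y,z\}$ from $F(p')$ (its score is unchanged while $z$'s is unchanged too) and also excludes $x$ (whose score has only weakly decreased). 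Thus the only candidates that may maximise $\mathrm{sc}_w(\cdot, p')$ are $y$ and $z$, giving $F(\overline{p}, \psi q[i]) \subseteq \{y,z\}$, which is {\sc wsm}. There is essentially no obstacle: the argument is pure bookkeeping of score differences, combined with the strict winning gap provided by $F(p) = \{z\}$; nothing at all is required about whether $y$ actually catches up with $z$ or not.
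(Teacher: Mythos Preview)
Your proof is correct and follows essentially the same approach as the paper's: both arguments observe that the swap leaves the $w$-score of every alternative in $A\setminus\{x,y\}$ unchanged, that $x$'s score can only weakly decrease, and then use the strict gap $\mathrm{sc}_w(z,p)>\mathrm{sc}_w(u,p)$ for $u\neq z$ to conclude $F(\overline{p},\psi q[i])\subseteq\{y,z\}$. You are slightly more explicit about the rank bookkeeping and the sign of $w_k-w_{k+1}$, but the logical structure is identical.
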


\begin{proof}
Assume that $F$ is the positional {\sc scc} with scoring vector $w$.
Consider $i\in I$, $\overline{p}\in \mathcal{L}(A)^{I\setminus\{i\}}$, $q\in\mathcal{L}(A)$ and $x,y,z\in A$ such that 
\begin{itemize}
\item $F(\overline{p},q[i])=\{z\}$,  
\item $\mathrm{rank}_{q}(x)+1=\mathrm{rank}_{q}(y)<\mathrm{rank}_{q}(z)$,
\end{itemize}
and let $\psi\in \mathrm{Sym}(A)$ be the transposition that exchanges $x$ and $y$. 

We know that, for every $u\in A\setminus\{z\}$, $\mathrm{sc}_w(z,(\overline{p},q[i]))>\mathrm{sc}_w(u,(\overline{p},q[i]))$. It is immediately observed that $\mathrm{sc}_w(u,(\overline{p},\psi q[i]))=\mathrm{sc}_w(u,(\overline{p},q[i]))$ for all $u\in A\setminus\{x,y\}$. In particular, for every $u\in A\setminus\{x,y,z\}$,
we have $$\mathrm{sc}_w(z,(\overline{p},\psi q[i]))=\mathrm{sc}_w(z,(\overline{p},q[i]))>\mathrm{sc}_w(u,(\overline{p},q[i]))=\mathrm{sc}_w(u,(\overline{p},\psi q[i])).$$
Moreover, we have
 $$\mathrm{sc}_w(x,(\overline{p},\psi q[i]))\leq\mathrm{sc}_w(x,(\overline{p},q[i]))<\mathrm{sc}_w(z,(\overline{p},q[i]))=\mathrm{sc}_w(z,(\overline{p},\psi q[i])).$$
As a consequence, $F(\overline{p},\psi q[i])$ is a nonempty subset of $\{y,z\}$. 
\end{proof}

\begin{proposition}\label{us-unanimous}
Assume that $n\ge 3$ and $m\ge 4$, $m\neq 5$. If $F$ is a unanimous positional {\sc scc}, then $F$ is {\sc us}.   
\end{proposition}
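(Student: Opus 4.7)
I will exhibit, for any unanimous positional {\sc scc} $F$ with scoring vector $w=(w_1,\ldots,w_n)$ and any admissible $(n,m)$, an explicit profile witnessing {\sc us}. Set $x=1$, $y=2$, $z=3$, let individual $1$ have preference $q=[1,2,3,4,\ldots,n]$, and let $\psi$ be the transposition exchanging $1$ and $2$, so that $\psi q=[2,1,3,4,\ldots,n]$. The switch of individual $1$ from $q$ to $\psi q$ changes only $\mathrm{sc}_w(1,\cdot)$ and $\mathrm{sc}_w(2,\cdot)$, by $-(w_1-w_2)$ and $+(w_1-w_2)$ respectively. So it suffices to build $\overline{p}\in\mathcal{L}(A)^{I\setminus\{1\}}$ such that the profile $p=(\overline{p},q[1])$ satisfies (a)~$F(p)=\{3\}$ and (b)~$\mathrm{sc}_w(3,p)-\mathrm{sc}_w(2,p)=w_1-w_2$; then after the swap $\mathrm{sc}_w(2,\cdot)$ rises to exactly $\mathrm{sc}_w(3,\cdot)$ while the remaining alternatives stay strictly below, giving $\{2,3\}\subseteq F(\overline{p},\psi q[1])$ and hence $y=2\in F(\overline{p},\psi q[1])$, which is precisely {\sc us}.

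\textbf{The construction.} For the base case $m=4$, let the three remaining individuals have preferences $[3,2,1,4,\ldots,n]$, $[3,1,2,4,\ldots,n]$, $[2,3,1,4,\ldots,n]$. A direct computation yields $\mathrm{sc}_w(1,p)=w_1+w_2+2w_3$, $\mathrm{sc}_w(2,p)=w_1+2w_2+w_3$, $\mathrm{sc}_w(3,p)=2w_1+w_2+w_3$, and $\mathrm{sc}_w(j,p)=4w_j$ for $j\ge 4$. For arbitrary $m\ge 4$ with $m\ne 5$, the integer $m-4$ is nonnegative and different from $1$, hence lies in the numerical semigroup $\langle 2,3\rangle$ and can be written as $2a+3b$ with $a,b\in\mathbb{N}\cup\{0\}$; pad the base profile with $a$ copies of the \emph{pair} $\{[3,2,1,4,\ldots,n],\,[2,3,1,4,\ldots,n]\}$ and $b$ copies of the \emph{cyclic triple} $\{[1,2,3,4,\ldots,n],\,[2,3,1,4,\ldots,n],\,[3,1,2,4,\ldots,n]\}$ to produce a profile on $I$.

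\textbf{Verification and main obstacle.} The pair contributes $(2w_3,\,w_1+w_2,\,w_1+w_2)$ and the triple contributes $(w_1+w_2+w_3,\,w_1+w_2+w_3,\,w_1+w_2+w_3)$ to $(\mathrm{sc}_w(1),\mathrm{sc}_w(2),\mathrm{sc}_w(3))$; both blocks are symmetric in alternatives $2$ and $3$, so property~(b) is preserved. Property~(a) is then verified case by case: the comparison $\mathrm{sc}_w(3,p)>\mathrm{sc}_w(2,p)$ is $w_1-w_2>0$ by unanimity; the comparison with $1$ reduces to $(1+a)(w_1-w_3)+a(w_2-w_3)>0$, which holds since $w_1>w_3$ and $w_2\ge w_3$; and writing $\mathrm{sc}_w(3,p)=Aw_1+Bw_2+Cw_3$ with $A=2+a+b\ge 2$ and $A+B+C=m$, the comparison with any $u\ge 4$ reduces to $A(w_1-w_u)+B(w_2-w_u)+C(w_3-w_u)>0$ because $w_1>w_u$ while $w_2,w_3\ge w_u$. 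The only conceptual obstacle is the arithmetic one: the unique nonnegative integer missing from $\langle 2,3\rangle$ is $1$, which is exactly why $m=5$ is excluded from the hypothesis and why the padding scheme otherwise covers every admissible $m$ uniformly in $w$.
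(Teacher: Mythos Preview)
Your proof is correct and takes a genuinely different route from the paper's. The paper splits into two ad hoc cases ($m$ even and $m\ge 7$ odd), building in each case a profile with winner $z=2$ and then swapping alternatives $1$ and $3$ in the pivotal voter's preference to bring $y=1$ into a tie. You instead fix a single base construction at $m=4$ (with winner $z=3$ and exact gap $w_1-w_2$ to second place) and then pad by neutral blocks drawn from the numerical semigroup $\langle 2,3\rangle$: a pair that is symmetric in $\{2,3\}$ and a cyclic triple that is symmetric in $\{1,2,3\}$. Your verification that these blocks preserve both the identity of the unique winner and the exact margin $\mathrm{sc}_w(3)-\mathrm{sc}_w(2)=w_1-w_2$ is straightforward and correct, and the final inequality $\mathrm{sc}_w(3,p)>\mathrm{sc}_w(u,p)$ for $u\ge 4$ indeed follows since $A\ge 2$ and $w_1>w_u$. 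What your approach buys is a uniform argument across all admissible $m$ and a conceptual explanation of why $m=5$ is excluded: it corresponds exactly to the Frobenius gap of $\langle 2,3\rangle$. The paper's approach, by contrast, is shorter to write down in each case but obscures why $m=5$ is special.
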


\begin{proof}
Let $F$ be a unanimous positional {\sc scc} with scoring vector $w$, where  $w_1>w_2$.
We divide the proof into two different cases. 

    Assume first that $m$ is even. Let $\overline{p}\in \mathcal{L}(A)^{I\setminus\{m\}}$ be defined as follows: 
    \begin{itemize}
        \item for every $j\in \ldbrack\frac{m}{2}-1\rdbrack$,  $\overline{p}(j)\coloneq[1,2,3,(4),\dots,(n)]$;
        \item for every $j\in\{\frac{m}{2},\ldots, m-2\}$, $\overline{p}(j)\coloneq[2,1,3,(4),\dots,(n)]$;
        \item $\overline{p}(m-1)\coloneq[2,3,1,(4),\dots,n]$.
    \end{itemize}
     Consider then $q\coloneq[3,1,2,(4),\dots,(n)]$ and, for every $x\in A$, set $S(x)\coloneq\mathrm{sc}_w(x,(\overline{p},q[m]))$.
		We have that 
		\begin{eqnarray*}
	S(1)&=&\frac{m-2}{2}w_1+\frac{m-2}{2}w_2+w_2+w_3\\
	S(2)&=&\frac{m-2}{2}w_1+\frac{m-2}{2}w_2+w_1+w_3\\
	S(3)&=&(m-2)w_3+w_2+w_1
		\end{eqnarray*}
		and, for every $u\in A\setminus\{1,2,3\}$, $S(u)\le mw_3$.
		Recalling that $w_1>w_2$ and $m\ge 4$, we deduce that $F(\overline{p},q[m])=\{2\}$. Note also that
		$\mathrm{rank}_{q}(3)+1=\mathrm{rank}_{q}(1)<\mathrm{rank}_{q}(2)$.
		A computation finally shows that $F(\overline{p}, \psi q[m])=\{1,2\}$, where $\psi$ is the transposition that exchanges $1$ and $3$. Hence, $F$ satisfies {\sc us}.
		
	 	Assume now that $m$ is  odd. Then we have $m\ge 7$. Let $\overline{p}\in \mathcal{L}(A)^{I\setminus\{m\}}$ be defined as follows: 
        \begin{itemize}
            \item for every $j\in \ldbrack\frac{m-3}{2}\rdbrack$, $\overline{p}(j)\coloneq[1,2,3,(4),\dots,(n)]$;
            \item for every $j\in\{\frac{m-3}{2}+1,\ldots, m-3\}$, $\overline{p}(j)\coloneq[2,1,3,(4),\dots,(n)]$;
            \item $\overline{p}(m-2)\coloneq[3,2,1,(4),\dots,(n)]$;
            \item $\overline{p}(m-1)\coloneq[2,1,3,(4),\dots,(n)]$. 
        \end{itemize}
           Consider then $q\coloneq[3,1,2,(4),\dots,(n)]$ and, for every $x\in A$, set $S(x)\coloneq\mathrm{sc}_w(x,(\overline{p},q[m]))$.
		We have that 
		\begin{eqnarray*}
		S(1)&=&\frac{m-3}{2}w_1+\frac{m-3}{2}w_2+w_3+2w_2\\
		S(2)&=&\frac{m-3}{2}w_1+\frac{m-3}{2}w_2+w_1+w_2+w_3\\
		S(3)&=&(m-2)w_3+2w_1
		\end{eqnarray*}
		and, for every $u\in A\setminus\{1,2,3\}$, $S(u)\le m w_3$.
		Recalling that $w_1>w_2$ and $m\ge 7$, we deduce $F(\overline{p},q[m])=\{2\}$. Note also that
		$\mathrm{rank}_{q}(3)+1=\mathrm{rank}_{q}(1)<\mathrm{rank}_{q}(2)$.
		A computation finally shows that $F(\overline{p}, \psi q[m])=\{1,2\}$, where $\psi$ is the transposition that exchanges $1$ and $3$. Hence, $F$ satisfies {\sc us}.
\end{proof}

\begin{corollary}\label{corollary-pos}
Assume that $n\ge 3$, $m\ge 4$ and $m\neq 5$. If $F$ is a unanimous positional {\sc scc}, then $F$ is $\Omega^{F}$-$\mathbf{K}$-manipulable.  
\end{corollary}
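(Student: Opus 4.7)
The plan is to observe that this corollary is an immediate consequence of the three preceding results, with essentially no new work required. I would simply invoke them in the right order, so the corollary could arguably be stated as a remark, but it is worth recording because it isolates the manipulability conclusion for the natural class of unanimous positional {\sc scc}s.

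First, I would apply Proposition \ref{pos-wum}, which states without any restriction on $n$ or $m$ that every positional {\sc scc} satisfies {\sc wsm}. Hence, under our hypotheses, $F$ satisfies {\sc wsm}. Second, since by assumption $n \ge 3$, $m \ge 4$, and $m \neq 5$, and $F$ is a unanimous positional {\sc scc}, Proposition \ref{us-unanimous} applies and guarantees that $F$ satisfies {\sc us}. Third, having established both {\sc wsm} and {\sc us} for $F$, I would invoke Theorem \ref{general} to conclude that $F$ is $\Omega^F$-$\mathbf{K}$-manipulable.

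There is no real obstacle here: the hard work has already been absorbed into Propositions \ref{pos-wum} and \ref{us-unanimous} (which carry out the case analysis on the parity of $m$ and exhibit explicit witnessing profiles) and into Theorem \ref{general} (which builds the manipulating information set $\omega = \{\overline{p} : F(\overline{p},q[i]) = \{z\}\}$ from the three alternatives $x,y,z$ provided by {\sc us}). The role of the arithmetic assumptions $n \ge 3$, $m \ge 4$, $m \neq 5$ is exactly to ensure that the witness profiles constructed in the proof of Proposition \ref{us-unanimous} can indeed be built, with enough individuals to split across the three preference patterns used there; and the role of unanimity is to guarantee that the scoring vector satisfies $w_1 > w_2$, which is what makes the score computations in that proof go through. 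Once these are in place, the conclusion of Theorem \ref{general} transfers verbatim, and no further calculation is required.
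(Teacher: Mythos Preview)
Your proposal is correct and follows exactly the same approach as the paper: invoke Proposition~\ref{pos-wum} to obtain {\sc wsm}, Proposition~\ref{us-unanimous} to obtain {\sc us} under the stated hypotheses on $n$ and $m$, and then apply Theorem~\ref{general}. The paper's proof is a one-line reference to precisely these three results.
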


\begin{proof}
Apply Theorem \ref{general} and Propositions \ref{pos-wum} and \ref{us-unanimous}.
\end{proof}

\begin{proposition}\label{main-positional-5}
Assume that $n\ge 3$ and $m=5$. If $F$ is a unanimous positional {\sc scc}, then $F$ is $\Omega^{F}$-$\mathbf{K}$-manipulable.  
\end{proposition}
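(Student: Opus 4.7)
The strategy will be to apply Theorem~\ref{general}: since every positional {\sc scc} satisfies {\sc wsm} by Proposition~\ref{pos-wum}, the task reduces to verifying {\sc us} for the given unanimous positional {\sc scc} $F$ when $m = 5$. I will split into two cases depending on whether $F$ coincides with the plurality {\sc scc}.

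In the first case, suppose $w_2 = w_3 = \cdots = w_n$. Then $F$ and $PL$ agree as functions, so $\Omega^F = \Omega^{PL}$, and since $m = 5 \geq 4$, Theorem~\ref{main-plurality} immediately yields $\Omega^F$-$\mathbf{K}$-manipulability.

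In the complementary case I will exhibit a preference profile witnessing {\sc us}. For $n \geq 4$, I plan to use
\[
\overline p(1) = [1,2,3,4,\ldots,n],\;\; \overline p(2) = \overline p(3) = [2,1,3,4,\ldots,n],\;\; \overline p(4) = [4,2,1,3,5,\ldots,n],
\]
together with $q = [3,1,2,4,\ldots,n]$ for individual~$5$, and will set $x = 3$, $y = 1$, $z = 2$. For $n = 3$, where the non-plurality hypothesis forces $w_2 > w_3$, I will use instead
\[
\overline p(1) = [1,2,3],\;\; \overline p(2) = \overline p(4) = [2,1,3],\;\; \overline p(3) = [3,2,1],\;\; q = [3,1,2].
\]
A direct computation of $w$-scores at $p = (\overline p, q[5])$ gives $\mathrm{sc}_w(2,p) - \mathrm{sc}_w(1,p) = w_1 - w_2$, and using only the monotonicity $w_1 > w_2 \geq \cdots \geq w_n$ one checks the remaining strict inequalities $\mathrm{sc}_w(2,p) > \mathrm{sc}_w(u,p)$ for every $u \neq 2$, so $F(\overline p, q[5]) = \{z\}$. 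Letting $\psi$ be the transposition exchanging $1$ and $3$, the score of $1$ increases by exactly $w_1 - w_2$ under the swap, reaching $\mathrm{sc}_w(2,p)$, while the remaining comparisons still favor $\{1, 2\}$; hence $1 \in F(\overline p, \psi q[5])$, which establishes {\sc us} and, via Theorem~\ref{general}, the desired $\Omega^F$-$\mathbf{K}$-manipulability.

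The main obstacle will be choosing $\overline p(4)$ in the $n \geq 4$ construction so that a single profile works uniformly for every admissible scoring vector: the permutation $[4,2,1,3,5,\ldots,n]$ is carefully calibrated both to destroy the tie $\mathrm{sc}_w(2,p) = \mathrm{sc}_w(3,p)$ that would otherwise arise from symmetric constructions whenever $w_2 = w_3$, and to prevent the alternative $4$ (promoted to first place there) from surpassing $2$. It is precisely because no analogous single profile is available when $n = 3$ and $w_2 = w_3$ that the plurality subcase must be routed through Theorem~\ref{main-plurality} instead of handled by a direct construction.
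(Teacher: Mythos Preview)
Your constructions in Case~2 are correct and, for $n\ge 4$, arguably cleaner than the paper's: your single profile with $\overline p(4)=[4,2,1,3,5,\dots,n]$ witnesses {\sc us} for \emph{every} unanimous scoring vector (including plurality), whereas the paper splits on $w_2>w_3$ versus $w_2=w_3$ and handles the latter by a direct manipulation on the information set $\omega=\{\overline p:F(\overline p,q[5])=\{1,2\}\}$ rather than via {\sc us}. Your $n=3$ construction for $w_2>w_3$ is likewise fine and coincides in spirit with the paper's (it is essentially the odd case of Proposition~\ref{us-unanimous} specialized to $m=5$).

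The genuine gap is your Case~1. Invoking Theorem~\ref{main-plurality} here is circular: in this paper, the $m\ge 4$ half of Theorem~\ref{main-plurality} is obtained from Theorem~\ref{main-positional}, whose proof in turn rests on Proposition~\ref{main-positional-5}. For $n\ge 4$ this is harmless, since your own Case~2 profile already covers plurality. But for $n=3$ with $w_2=w_3$ the circularity is fatal, because plurality with three alternatives and five voters does \emph{not} satisfy {\sc us}: writing $q=[x,y,z]$ and $(\mathrm{pl}(x),\mathrm{pl}(y),\mathrm{pl}(z))=(a,b,c)$ at $p$, the conditions $F(p)=\{z\}$, $a\ge 1$, and $y\in F(\overline p,\psi q[i])$ force $c=b+1$, $a\le b$, and $a+2b=4$, which has no integer solution. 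Hence no {\sc us}-based argument can close this case, and you must supply a direct $\Omega^F$-$\mathbf{K}$-manipulation. The paper does exactly that: with $q=[3,1,2]$, $q'=[1,3,2]$ and $\omega=\{\overline p:F(\overline p,q[5])=\{1,2\}\}$, one checks (using only $w_2=w_3$) that every $\overline p\in\omega$ satisfies $F(\overline p,q'[5])=\{1\}\succ_{\mathbf{K}(q)}\{1,2\}$.
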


        \begin{proof}
        Let $F$ be a unanimous positional {\sc scc}. Then, there exists a scoring vector $w$ such that $F=PS_w$ and $w_1>w_2$.
        
	Assume first that $w_2 > w_3$. In such a case, we can use the same argument of the second part of the proof of Proposition \ref{us-unanimous} to obtain that $F$ satisfies {\sc us}. As a consequence, by Theorem \ref{general} and Propositions \ref{pos-wum}, we get that $F$ is $\Omega^{F}$-$\mathbf{K}$-manipulable.
    
    Assume now that $w_2 = w_3$. Consider $q\coloneq[3,1,2,(4),\dots,(n)]$, $q'\coloneq[1,3,2,(4),\dots,(n)]$ and $\omega\coloneq\{\overline{p}\in \mathcal{L}(A)^{I\setminus\{5\}}: F(\overline{p},q[5])=\{1,2\} \}$. First, let us prove that $\omega\in \Omega^F_{5}(q)$. Indeed, let $\overline{p}'\in \mathcal{L}(A)^{I\setminus\{5\}}$ be such that $\overline{p}'(1)\coloneq[1,2,3,(4),\dots,(n)]$,  $\overline{p}'(2)\coloneq[2,1,3,(4),\dots,(n)]$, $\overline{p}'(3)\coloneq[1,2,3,(4),\dots,(n)]$, $\overline{p}'(4)\coloneq[2,1,3,(4),\dots,(n)]$. Moreover, set, for every $x\in A$, $S'(x)\coloneq\mathrm{sc}_w(x,(\overline{p}',q[5]))$. Recalling that $w_2=w_3$, we have that $S'(1)=2w_1+3w_2$, $S'(2)=2w_1+3w_2$, $S'(3)=w_1+4w_2$, and, for every $u\in A\setminus\{1,2,3\}$, $S'(u)\le 5w_2$. Recalling also that $w_1>w_2$, we deduce $F(\overline{p}',q[5])=\{1,2\}$. Thus, $\overline{p}'\in\omega$, and so $\omega\neq\varnothing$ and $\omega\in \Omega^F_{5}(q)$. 
		
		Consider now any $\overline{p}\in\omega$. We have $\mathrm{sc}_w(1,(\overline{p},q[5]))=\mathrm{sc}_w(2,(\overline{p},q[5]))>\mathrm{sc}_w(u,(\overline{p},q[5]))$ for all $u\neq A\setminus \{1,2\}$. It easily follows that $F(\overline{p},q'[5])=\{1\}\succ_{\mathbf{K}(q)}\{1,2\}=F(\overline{p},q[5])$. Thus, we conclude that $F$ is $\Omega^{F}$-$\mathbf{K}$-manipulable. 
\end{proof}

\begin{proof}[Proof of Theorem \ref{main-positional}]
    Apply Corollary \ref{corollary-pos} and Proposition \ref{main-positional-5}.
\end{proof}

\begin{proof}[Proof of Theorem \ref{main-borda}]
If $n=2$, then apply Theorem \ref{qualified}. 
    
Assume now $n\ge 3$ and $m=2$. Let $q \coloneq [2,1,3,(4),\dots,(n)]$, $q'\coloneq[2,3,1,(4),\dots,(n)]$, $\omega \coloneq \{\overline{p} \in \mathcal{L}(A)^{I \setminus \{2\}} : F(\overline{p},q[2]) = \{1,2\}\}$, and $\overline{p}'\in\mathcal{L}(A)^{I \setminus \{2\}}$ be such that  $\overline{p}'(1) \coloneq[1,2,3,(4),\dots,(n)]$. We have $F(\overline{p}',q[2]) = \{1,2\}$, hence $\omega \neq \varnothing$ and $\omega \in \Omega^F_{2}(q)$. We show that $BO$ is $\Omega^{BO}$-$\mathbf{K}$-manipulable proving that, for every $\overline{p}\in \omega$, $BO(\overline{p},q'[2])\succ_{\mathbf{K}(q)} BO(\overline{p},q[2])$. Let $\overline{p}\in \omega$. Thus, $BO(\overline{p},q[2])=\{1,2\}$, and then $\mathrm{bo}(1,(\overline{p},q[2]))=\mathrm{bo}(2,(\overline{p},q[2]))>\mathrm{bo}(x,(\overline{p},q[2]))$ for all $x\in A \setminus \{1,2\}$. 
Suppose that $\mathrm{bo}(1,(\overline{p},q[2]))=\mathrm{bo}(3,(\overline{p},q[2]))+1$. We get 
\[
(n-\mathrm{rank}_{\overline{p}(1)}(1))+(n-2)=(n-\mathrm{rank}_{\overline{p}(1)}(3))+(n-3)+1,
\]
that is, $\mathrm{rank}_{\overline{p}(1)}(1)=\mathrm{rank}_{\overline{p}(1)}(3),$
a contradiction. Thus, we have $\mathrm{bo}(1,(\overline{p},q[2]))\ge \mathrm{bo}(3,(\overline{p},q[2]))+2$.
It is easily observed that  
$\mathrm{bo}(1,(\overline{p},q'[2]))=\mathrm{bo}(1,(\overline{p},q[2]))-1$;
$\mathrm{bo}(2,(\overline{p},q'[2]))=\mathrm{bo}(2,(\overline{p},q[2]))$;
$\mathrm{bo}(3,(\overline{p},q'[2]))=\mathrm{bo}(3,(\overline{p},q[2]))+1$;
for all $x\in A \setminus \{1,2,3\}$, $\mathrm{bo}(x,(\overline{p},q'[2]))=\mathrm{bo}(x,(\overline{p},q[2]))$.
That implies $BO(\overline{p},q'[2])=\{2\}$ and hence we conclude that $BO(\overline{p},q'[2])=\{2\}\succ_{\mathbf{K}(q)} 
\{1,2\}=BO(\overline{p},q[2])$.
    
Assume now $n\ge 3$ and $m=3$. We prove that $BO$ is {\sc us} and we complete the proof applying Proposition \ref{pos-wum} and Theorem \ref{general}. Let $\overline{p} \in \mathcal{L}(A)^{I \setminus \{3\}}$ be defined by $\overline{p}(1)=\overline{p}(2)\coloneq[3,1,2,(4),\dots,(n)]$ and
 $q\coloneq[2,1,3,(4),\dots,(n)]\in\mathcal{L}(A)$. Observe that 
$BO(\overline{p},q[3])=\{3\}$, $\mathrm{rank}_{q}(2)+1=\mathrm{rank}_{q}(1)<\mathrm{rank}_{q}(3)$,
and $BO(\overline{p},\psi q\left[3\right])=\{1,3\}$, where $\psi$ is the transposition that exchanges $1$ and $2$. Thus, $BO$ is {\sc us}.

Finally, if $n\ge 3$ and $m\ge 4$, then apply Theorem \ref{main-positional}. 
\end{proof}

\begin{proposition}\label{main-plu-23}
    Assume that $n\ge 3$ and $m\in \{2,3\}$. Then $PL$ is $\mathbf{K}$-strategy-proof.
\end{proposition}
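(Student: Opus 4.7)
The plan is to fix $i \in I$, $\overline{p} \in \mathcal{L}(A)^{I\setminus\{i\}}$, and $q, q' \in \mathcal{L}(A)$, set $p \coloneq (\overline{p}, q[i])$, $p' \coloneq (\overline{p}, q'[i])$, $B \coloneq PL(p)$, $B' \coloneq PL(p')$, and prove that $B' \not\succ_{\mathbf{K}(q)} B$. Denote by $a$ (resp.\ $a'$) the best alternative of $q$ (resp.\ $q'$). Since plurality scores depend only on who is ranked first, the passage from $p$ to $p'$ only affects $a$'s and $a'$'s scores: $\mathrm{pl}(a, p') = \mathrm{pl}(a, p) - 1$ and $\mathrm{pl}(a', p') = \mathrm{pl}(a', p) + 1$, while $\mathrm{pl}(x, p') = \mathrm{pl}(x, p)$ for every $x \in A \setminus \{a, a'\}$. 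If $a = a'$, then $B = B'$ and we are done; so I assume $a \neq a'$. The key observation is that $a$ is the best alternative of $q$, hence $a \succ_q y$ for every $y \in A \setminus \{a\}$; by Proposition \ref{properties}(iii), this forces $B' \not\succ_{\mathbf{K}(q)} B$ whenever $a \in B$ and $B' \not\subseteq \{a\}$.

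I would then split on whether $a \in B$. If $a \in B$, the only remaining worry is the subcase $B' = \{a\}$ with $B \neq B'$. In this subcase, applying $\mathrm{pl}(a, p') > \mathrm{pl}(a', p')$ gives $\mathrm{pl}(a, p) > \mathrm{pl}(a', p) + 2$, hence $\mathrm{pl}(a, p) \geq \mathrm{pl}(a', p) + 3 \geq 3$; combined with $\sum_{x \in A} \mathrm{pl}(x, p) = m \leq 3$ this forces $\mathrm{pl}(a, p) = m$, so every individual tops $a$ at $p$, yielding $B = \{a\} = B'$ and a contradiction. Therefore either $B = B'$ or $B'$ contains some $y \neq a$, and in the latter case the key observation applies.

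If instead $a \notin B$, then some alternative has a strictly higher score than $a$ at $p$, even though $\mathrm{pl}(a, p) \geq 1$. For $m = 2$ this is impossible, since any score of $2$ would require both individuals, in particular $i$, to rank that alternative first. For $m = 3$ it forces the two individuals in $I \setminus \{i\}$ to both top the same alternative $c \neq a$, yielding $B = \{c\}$, and a direct computation of scores at $p'$ gives $B' = \{c\} = B$ regardless of $a'$. The main obstacle is the borderline Case 1 subcase $B' = \{a\}$, since a priori $\{a\}$ could strictly $\mathbf{K}(q)$-dominate a larger set containing $a$; the conclusion is forced by the small-population bound $m \leq 3$, which makes this configuration collapse to $B = B' = \{a\}$.
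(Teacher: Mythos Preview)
Your proof is correct. The approach differs from the paper's: the paper treats $m=2$ and $m=3$ separately, for $m=2$ splitting on whether the other individual's top agrees with $\mathrm{top}(q)$, and for $m=3$ exploiting the structural observation that $|PL(p)|\in\{1,3\}$ and then splitting on this size. Your argument instead handles both population sizes uniformly by splitting on whether $a\in B$ and using the total-score identity $\sum_{x\in A}\mathrm{pl}(x,p)=m\le 3$ to rule out the dangerous subcase $B'=\{a\}$ with $a\in B\neq B'$. Your route is slightly more economical in that it avoids the $|PL(p)|\in\{1,3\}$ observation and the separate treatments of the two values of $m$; the paper's route, by contrast, makes the combinatorics of each case fully explicit and does not rely on the integer score-counting chain $\mathrm{pl}(a,p)\ge \mathrm{pl}(a',p)+3\ge 3$.
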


\begin{proof} 

For every $q \in \mathcal{L}(A)$, we denote by $\mathrm{top}(q)$ the best alternative in $q$. 

Assume first that $I=\{1,2\}$. Let $i \in I$, $\overline{p}\in \mathcal{L}(A)^{I\setminus\{i\}}$, and $q \in \mathcal{L}(A)$. Denote by $j$ the unique element in $I\setminus\{i\}$. Of course, we have $\mathrm{top}(q) \in PL(\overline{p},q[i])$. We split the proof in two cases. 
	\begin{itemize}
		\item If $\mathrm{top}(p(j)) = \mathrm{top}(q)$, then $PL(\overline{p},q[i]) = \{\mathrm{top}(q)\}$. In that case, for every $q' \in \mathcal{L}(A)$, we have $PL(\overline{p},q'[i])\nsucc_{\mathbf{K}(q)} PL(\overline{p},q[i])$.
            \item If $\mathrm{top}(p(j)) \neq \mathrm{top}(q)$, then we have $PL(p) = \{\mathrm{top}(p(j)),\mathrm{top}(q)\}$. Consider $q' \in \mathcal{L}(A)$. If $\mathrm{top}(q') = \mathrm{top}(q)$, we have $PL(\overline{p},q'[i]) = PL(\overline{p},q[i])$.
            If $\mathrm{top}(q') \neq \mathrm{top}(q)$, we have $PL(\overline{p},q'[i]) = \{\mathrm{top}(p(j)),\mathrm{top}(q')\}$.
		In both cases, we have $PL(\overline{p},q'[i]) \nsucc_{\mathbf{K}(q)} PL(\overline{p},q[i])$. 
	\end{itemize} 
Assume then that $I=\{1,2,3\}$. We first observe  that, for every $p \in \mathcal{L}(A)^I$, we have that $|PL(p)| = 1$ or $|PL(p)| = 3$. Indeed, assume first that, for every distinct $i,j \in I$, we have $\mathrm{top}(p(i)) \neq \mathrm{top}(p(j))$. Then $|PL(p)| = 3$. Assume next that there exist $x \in A$ and $i,j \in I$ distinct such that $x = \mathrm{top}(p(i)) = \mathrm{top}(p(j))$. Then $PL(p) = \{x\}$. Consider now $i \in I$, $\overline{p}\in \mathcal{L}(A)^{I\setminus\{i\}}$, and $q \in \mathcal{L}(A)$. 
	\begin{itemize}
		\item If $|PL(\overline{p},q[i])| = 3$, then we have $\mathrm{top}(q) \in PL(\overline{p},q[i])$. Let $q' \in \mathcal{L}(A)$. If $\mathrm{top}(q') = \mathrm{top}(q)$, then $PL(\overline{p},q'[i])=PL(\overline{p},q[i])$ and that implies $PL(\overline{p},q'[i]) \nsucc_{\mathbf{K}(q)} PL(\overline{p},q[i])$. If instead $\mathrm{top}(q') \neq \mathrm{top}(q)$, then  we have $\mathrm{top}(q) \not\in PL(\overline{p},q'[i])$. Since $\mathrm{top}(q) \in PL(\overline{p},q[i])$, we have $PL(\overline{p},q'[i]) \nsucc_{\mathbf{K}(q)} PL(\overline{p},q[i])$.
		\item If $|PL(\overline{p},q[i])| = 1$, then we have two possibilities: if $PL(\overline{p},q[i]) = \{\mathrm{top}(q)\}$, then, for every $C \in \mathcal{P}_0(A)$, we have that $C \nsucc_{\mathbf{K}(q)} \{\mathrm{top}(q)\}$. We deduce that, for every $q' \in \mathcal{L}(A)$, we have 
		$PL(\overline{p},q'[i])\nsucc_{\mathbf{K}(q)} PL(\overline{p},q[i])$. If $PL(\overline{p},q[i]) = \{x\}$, with $x\neq \mathrm{top}(q)$ then, for both the individuals in $I\setminus\{i\}$, $x$ is the best alternative. As a consequence, for any $q'$, we have  $\{x\}= PL(\overline{p},q'[i])$ and thus $PL(\overline{p},q'[i])\nsucc_{\mathbf{K}(q)} PL(\overline{p},q[i])$. 
	\end{itemize}
	That proves that $PL$ is $\mathbf{K}$-strategy-proof. 
\end{proof}

\begin{proof}[Proof of Theorem \ref{main-plurality}]
    Apply Theorem \ref{main-positional} and Proposition \ref{main-plu-23}.
\end{proof}

\subsection{Proof of Theorem \ref{main-neg-plurality}}\label{appendix-C}

 \begin{lemma} \label{lemappres}
 The following facts hold:
 \begin{itemize}
 \item[$(i)$] For every $p\in \mathcal{L}(A)^I,$ we have $|NP(p)|\geq \max\{1, n-m\}$;
 \item[$(ii)$] $|NP(p)|\geq 2$ for all $p\in \mathcal{L}(A)^I$  if and only if $n-m\geq 2$.
 \end{itemize}
 \end{lemma}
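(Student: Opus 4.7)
The plan is to reformulate both statements in terms of the \emph{last-place count}. For $p\in\mathcal{L}(A)^I$ and $x\in A$, set $L(x) \coloneq |\{i\in I : \mathrm{rank}_{p(i)}(x) = n\}|$. Since each $p(i)$ has exactly one alternative at rank $n$, we have $\sum_{x\in A} L(x) = m$. By definition of $\mathrm{np}$, $\mathrm{np}(x,p) = m - L(x)$, so $NP(p) = \mathrm{argmin}_{x\in A} L(x)$. All of the analysis reduces to understanding how small (and how uniquely small) the function $L$ can be.

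For part $(i)$, $|NP(p)|\geq 1$ holds trivially because $NP(p)$ is by definition nonempty. For the nontrivial bound when $n>m$, note that at most $m$ alternatives can satisfy $L(x)\geq 1$, for otherwise the sum $\sum L(x) = m$ would be exceeded; hence at least $n-m$ alternatives satisfy $L(x)=0$. These jointly minimize $L$, yielding $|NP(p)|\geq n-m$. Combining with the trivial bound gives $|NP(p)|\geq \max\{1,n-m\}$.

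For part $(ii)$, the implication $(\Leftarrow)$ is immediate from $(i)$. For $(\Rightarrow)$ I would argue by contrapositive: assuming $n-m\leq 1$, equivalently $m\geq n-1$, I exhibit a profile $p^*$ with $|NP(p^*)|=1$. Writing $A=\{x_1,\dots,x_n\}$, prescribe the last-place counts
\[
L(x_1)=0,\qquad L(x_2)=m-(n-2),\qquad L(x_i)=1\ \ \text{for}\ i\in\{3,\dots,n\}.
\]
These sum to $m$ and, since $m\geq n-1$ forces $L(x_2)\geq 1$, the value $0$ is attained uniquely at $x_1$. Any such distribution is realized by a preference profile: partition $I$ into sets $I_{x_j}$ with $|I_{x_j}|=L(x_j)$ and, for every $i\in I_{x_j}$, let $p^*(i)$ be any linear order with $x_j$ ranked last. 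Then $NP(p^*)=\{x_1\}$, which completes the contrapositive.

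The argument is essentially bookkeeping: the only subtlety worth double-checking is the boundary case $m=n-1$ of the construction, where $L(x_2)=1$ coincides with $L(x_i)$ for $i\geq 3$ but $L(x_1)=0$ still produces a unique minimum at $x_1$. I do not foresee any genuine obstacle beyond this edge check.
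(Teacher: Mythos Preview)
Your proof is correct and takes essentially the same approach as the paper: both arguments hinge on the set of alternatives never ranked last (the paper packages this as $Z(p)=\{z:\exists i,\ \mathrm{rank}_{p(i)}(z)=n\}$ and uses $A\setminus Z(p)\subseteq NP(p)$, which is exactly your observation that alternatives with $L(x)=0$ lie in $\mathrm{argmin}\,L$), and both construct for $(ii)$ a profile in which exactly one alternative is never ranked last while all others are.
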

\begin{proof} For every $p\in \mathcal{L}(A)^I$, define $Z(p)\coloneq\{z\in A: \exists i\in I \hbox{ such that } \mathrm{rank}_{p(i)}( z)=n\}.$

$(i)$ Clearly we have $|Z(p)|\leq \min\{n, m\}.$ Now, observe that $A\setminus Z(p)\subseteq NP(p)$ and thus 
\[
|NP(p)|\geq |A|-|Z(p)|\geq n-\min\{n, m\}=\left\{
\begin{array}{llll}
0 &\mbox{ if }n\leq m\\
n-m &\mbox{ if }n>m\\
\end{array}
\right.=\max\{0,n-m\}
\]
Since $NP(p)\neq \varnothing$, we deduce $|NP(p)|\geq \max\{1, n-m\}$.

$(ii)$ If $n-m\geq 2$, by $(i)$, we have $|NP(p)|\geq  n-m\geq 2$ for all $p\in \mathcal{L}(A)^I$. Assume next that $n-m\le 1$, that is, $m\geq n-1$.
Then, there exists $p\in \mathcal{L}(A)^I$ such that, for every $i\in I$,  $\mathrm{rank}_{p(i)}(n)=1$ and  $Z(p)=\ldbrack n-1 \rdbrack$. 
Of course, $\mathrm{np}(n,p)=m$ and $\mathrm{np}(x,p)\leq m-1$ for all $x\in \ldbrack n-1 \rdbrack$. As a consequence, $NP(p)=\{n\}$.
\end{proof}

\begin{proposition} \label{lemappsp}
Assume that $n-m\geq 2$. Then $NP$ is $\mathbf{K}$-strategy-proof.
\end{proposition}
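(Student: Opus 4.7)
The plan is to exploit that $n-m \geq 2$ forces every negative plurality winner set to have at least two elements, so manipulation under $\mathbf{K}(q)$ becomes geometrically impossible. Fix $i \in I$, $\overline{p} \in \mathcal{L}(A)^{I \setminus \{i\}}$, $q, q' \in \mathcal{L}(A)$, set $p = (\overline{p}, q[i])$ and $p' = (\overline{p}, q'[i])$, and write $L(x,r) = |\{j \in I : \mathrm{rank}_{r(j)}(x) = n\}|$ for any $r \in \mathcal{L}(A)^I$. Since $\mathrm{np}(x,r) = m - L(x,r)$ and $\sum_{x \in A} L(x,r) = m$, the pigeonhole principle together with $n - m \geq 2$ gives that there exist at least $n - m \geq 2$ alternatives with $L(x,r) = 0$; these are precisely the elements of $NP(r)$. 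Hence $NP(p)$ and $NP(p')$ each contain at least two elements, and $NP(r) = \{x \in A : L(x,r) = 0\}$.

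The next step is to understand how $NP(p)$ and $NP(p')$ differ. Let $a$ and $b$ be the bottom alternatives of $q$ and $q'$ respectively. Since $\mathrm{np}(\cdot, \cdot)$ depends only on which alternative each individual ranks last, $a = b$ would imply $NP(p) = NP(p')$ and we are done. So assume $a \neq b$. Then $L(a,p') = L(a,p) - 1$, $L(b,p') = L(b,p) + 1$, and $L(c, p') = L(c,p)$ for every $c \notin \{a,b\}$. Consequently $NP(p) \triangle NP(p') \subseteq \{a, b\}$; moreover $a \notin NP(p)$ (since $L(a,p) \geq 1$) and $b \notin NP(p')$ (since $L(b,p') \geq 1$).

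I then split into cases on this symmetric difference and show $NP(p') \not\succ_{\mathbf{K}(q)} NP(p)$ in each. If $NP(p) = NP(p')$, nothing to prove. If $a \in NP(p')$, then because $|NP(p)| \geq 2$ and $a \notin NP(p)$, there exists some $c \in NP(p)$ with $c \neq a$; since $a$ is the $q$-worst alternative, $c \succ_q a$, so $a \not\succeq_q c$, and therefore $NP(p') \not\succeq_{\mathbf{K}(q)} NP(p)$, which in particular gives $NP(p') \not\succ_{\mathbf{K}(q)} NP(p)$. The remaining possibility is $a \notin NP(p')$ and $b \in NP(p) \setminus NP(p')$, in which case $NP(p') \subsetneq NP(p)$; since $|NP(p')| \geq n - m \geq 2$, the set $NP(p')$ is not a singleton, so by Proposition \ref{properties}(ii), $NP(p')$ and $NP(p)$ are not $\mathbf{K}(q)$-comparable, and again $NP(p') \not\succ_{\mathbf{K}(q)} NP(p)$.

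This exhausts all cases, establishing $\mathbf{K}$-strategy-proofness. The only subtle point — and the main thing to verify carefully — is the case analysis of the symmetric difference $\{a,b\}$: the nontrivial inequality $a \not\succeq_q c$ uses that $a$ is the $q$-worst alternative together with the lower bound $|NP(p)| \geq n - m \geq 2$, which guarantees the witness $c \neq a$ exists. The hypothesis $n - m \geq 2$ enters exactly twice: once to produce such a witness, and once to ensure $NP(p')$ is not a singleton when $NP(p') \subsetneq NP(p)$, so that Proposition \ref{properties}(ii) applies.
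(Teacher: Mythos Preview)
Your proof is correct and takes a genuinely different route from the paper's. The key observation you exploit---that when $n-m\ge 2$ the winner set is exactly $\{x\in A: L(x,r)=0\}$---is sharper than anything the paper uses, and it lets you reduce the whole problem to tracking how this zero set changes when individual $i$ swaps her bottom alternative from $a$ to $b$. Since the symmetric difference $NP(p)\mathbin{\triangle}NP(p')$ is contained in $\{a,b\}$ and $a$ is the $q$-worst element, a short three-case split finishes the argument cleanly. The paper instead argues at the level of scores: it splits on the size of $B\cap B'$ (using Proposition~\ref{properties}(i) when $|B\cap B'|\ge 2$) and, for $|B\cap B'|\in\{0,1\}$, runs a more involved analysis comparing $\mathrm{np}$-values and the ranks of the intersection point under $q$ and $q'$, without ever invoking the zero-set characterization. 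Your argument is shorter and more transparent; the paper's score-based approach is more generic in flavor but pays for that with extra sub-cases. One cosmetic remark: in your Case~2 you do not actually need $|NP(p)|\ge 2$ to find $c\in NP(p)$ with $c\neq a$, since $a\notin NP(p)$ already; the hypothesis $n-m\ge 2$ is really only essential in Case~3, where it prevents $NP(p')$ from being a singleton so that Proposition~\ref{properties}(ii) applies.
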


\begin{proof}Let  $i \in I$, $q,q' \in \mathcal{L}(A)$ and $\overline{p} \in \mathcal{L}(A)^{I\setminus\{i\}}$. We set $p\coloneq (\overline{p},q\left[i\right])$, $p'\coloneq (\overline{p},q'\left[i\right])$, $B\coloneq NP(p)$ and $B'\coloneq NP(p')$. Moreover, for every $y\in A$, we set $N(y)\coloneq \mathrm{np}(y,p)$ and $N'(y)\coloneq \mathrm{np}(y,p')$.
    
In order to show that  $NP$ is $\mathbf{K}$-strategy-proof, we need to show that   
\begin{equation}\label{NP-stra}
B'\not \succ_{\mathbf{K}(q)} B.
\end{equation}
If $|B\cap B'|\geq 2$, then, by Proposition \ref{properties}$(i)$, we immediately get \eqref{NP-stra}.
If instead $|B \cap B'|\in \{0,1\}$, then, by Lemma \ref{lemappres}$(ii)$, we know that $|B|\ge 2$ and $|B'|\geq 2$. As a consequence, $B\not\subseteq B'$ and $B'\not\subseteq B$. We divide the argument into the two cases $|B \cap B'| = 0$ and $|B \cap B'| = 1$.
	
Assume first that $|B \cap B'| = 0$. Suppose that there exists $x'\in B'$ such that $\mathrm{rank}_q(x')=n.$ Pick $x\in B$. Since $B\cap B'=\varnothing$, we deduce $x\succ_qx'.$ Thus, by Proposition \ref{properties}$(iii)$, we deduce \eqref{NP-stra}.
Suppose instead that, for every $x'\in B'$, $\mathrm{rank}_q(x')\leq n-1$.
Let $x'\in B'$ and $x\in B$. Since $B\cap B'=\varnothing$, we have that $x'\notin B$ and then $N(x')<N(x)$.
Thus, we deduce	$N'(x')\leq N(x')<N(x)$, and then $N'(x')\leq N(x)-1\leq N'(x)$. Since $x'\in B'$, we conclude that $x\in B'$, a contradiction.
		
Assume now that $|B \cap B'| = 1$ and let $ B \cap B'=\{x\}.$ Since $B,B'$ are not included one in the other and have size at least $2$, there exist $z\in B\setminus\{x\}$ and $z'\in B'\setminus\{x\}$. Then $z'\notin B$ and we have 
	\begin{equation}\label{inizio2}
		N(x)=N(z)>N(z').
	\end{equation}
	Moreover, $z\notin B'$ and we have 
	\begin{equation}\label{inizio3}
		N'(x)=N'(z')>N'(z).
	\end{equation}
	Assume first that $\mathrm{rank}_q(x)=n$ and $\mathrm{rank}_{q'}(x)\leq n-1.$ Then we have $N'(x)>N(x)$	and thus $N(z')\geq N'(z')=N'(x)>N(x)$,	against the fact that $x\in B$.
	Assume next that $\mathrm{rank}_q(x)\leq n-1$ and $\mathrm{rank}_{q'}(x)= n.$ 
	Then, we have $N(x)>N'(x)$ and thus $N'(z)\geq N(z)=N(x)>N'(x)$,
	against the fact $x\in B'$.
	Assume  now that we have both $\mathrm{rank}_q(x)\leq n-1$ and $\mathrm{rank}_{q'}(x)\leq n-1$, or both $\mathrm{rank}_q(x)=n$ and $\mathrm{rank}_{q'}(x)= n.$
	Then, we have $N(x)=N'(x)$.
	As a consequence, using \eqref{inizio3}, we get
	$N(z)=N(x)=N'(x)>N'(z)$, and so we deduce that $z$ is not the worst alternative for $q$; 
using also \eqref{inizio2}, we get $N'(z')=N'(x)=N(x)>N(z')$, and so we deduce that $z'$ is the worst alternative for $q$. We then conclude that  $z\succ_qz'$ and, by Proposition \ref{properties}$(iii)$, we finally obtain \eqref{NP-stra}.
\end{proof}

\begin{proposition} \label{propNPKM}
Assume that $m\geq n-1$. Then $NP$ is $\mathbf{K}$-manipulable.
\end{proposition}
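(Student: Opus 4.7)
The plan is to exhibit an explicit profile and a one-step misreport witnessing $\mathbf{K}$-manipulability, relying on the observation that an individual's last-ranked alternative is the only one whose $\mathrm{np}$-score is reduced by her ballot. Concretely, I aim to produce a profile $p$ whose outcome is a two-element tie $NP(p)=\{1,2\}$, where voter $1$'s truthful preference is $q=[1,2,3,\ldots,n]$, and then show that by reporting $q'=[1,3,4,\ldots,n,2]$, which simply pushes $2$ to the last position, voter $1$ can force $NP(p')=\{1\}$. Since $1\succ_q 2$, Proposition~\ref{properties}$(iv)$ yields $\{1\}\succ_{\mathbf{K}(q)}\{1,2\}$, so this constitutes a manipulation.

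The construction of the remaining ballots is controlled by the numbers $L(k)\coloneq|\{i\in I:\mathrm{rank}_{p(i)}(k)=n\}|$, since $\mathrm{np}(k,p)=m-L(k)$. To have $NP(p)=\{1,2\}$ and to ensure that inflating $L(n)$ by one after voter $1$'s deviation does not create new winners, I require $L(1)=L(2)=0$, $L(k)\ge 1$ for $k\in\{3,\ldots,n-1\}$, and $L(n)\ge 2$. An explicit assignment that realizes these values: voter $j\in\{2,\ldots,n-2\}$ uses a linear order ending with $j+1$ (so $L(j+1)=1$), while every voter $j\in\{n-1,\ldots,m\}$ uses $[1,2,\ldots,n]$ (so that together with voter $1$ we obtain $L(n)=m-n+3\ge 2$). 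Voters $1$ and $2$ are never in the bottom position, so $L(1)=L(2)=0$. A direct computation gives $\mathrm{np}(1,p)=\mathrm{np}(2,p)=m$, $\mathrm{np}(k,p)=m-1$ for $3\le k\le n-1$, and $\mathrm{np}(n,p)=n-3$, confirming $NP(p)=\{1,2\}$. After voter $1$'s deviation to $q'$, only $L(2)$ gains one and $L(n)$ loses one, yielding $\mathrm{np}(2,p')=m-1$ and $\mathrm{np}(n,p')=n-2<m$ by the hypothesis, with the other scores unchanged; hence $NP(p')=\{1\}$, as required.

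The main obstacle, and the only place where the hypothesis $m\ge n-1$ is actually used, is the feasibility of the partition of last-place slots. Since $\sum_{k}L(k)=m$, the minimal requirement $L(n)\ge 2$ together with $L(k)\ge 1$ for $3\le k\le n-1$ forces $m\ge(n-3)+2=n-1$; conversely, any $m\ge n-1$ admits such an $L$-distribution, which is exactly what the construction above exploits. Beyond this counting, the rest of the argument reduces to routine arithmetic on the $\mathrm{np}$-scores and a direct appeal to Proposition~\ref{properties}$(iv)$, so no further subtlety is involved.
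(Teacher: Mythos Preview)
Your proof is correct and follows essentially the same approach as the paper: construct a profile with $NP(p)=\{1,2\}$ in which some voter's bottom alternative $z^*$ is ranked last by at least two individuals, and then have that voter swap $2$ into last place to obtain $NP(p')=\{1\}\succ_{\mathbf{K}(q)}\{1,2\}$. The only stylistic difference is that the paper argues existentially (pick any $p$ with $Z(p)=A\setminus\{1,2\}$ and use pigeonhole to find a suitable $z^*$), whereas you give a fully explicit construction that forces $z^*=n$ and the manipulator to be voter~$1$; both routes hinge on exactly the same counting constraint $m\ge(n-3)+2=n-1$.
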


\begin{proof} For every $p\in \mathcal{L}(A)^I$, we set $Z(p)\coloneq\{z\in A: \exists i\in I \hbox{ such that } \mathrm{rank}_{p(i)}( z)=n\}$. 
	In order to prove that $NP$ is $\mathbf{K}$-manipulable, we exhibit $p \in \mathcal{L}(A)^I$, $i \in I$ and $q' \in \mathcal{L}(A)$ such that
	\begin{equation}\label{caso1}
		NP(p_{|_{I \setminus \left\{i\right\}}},q'\left[i\right])\succ_{\mathbf{K}(p(i))} NP(p).
	\end{equation}
    Since $n\geq 3$ and $m\geq n-1$, we have $A\setminus \{1,2\}\neq \varnothing$ and $m\geq n-2$.
Thus, there exists $p \in \mathcal{L}(A)^I$ such that $Z(p)=A\setminus \{1,2\}$ and $1\succ_{p(i)}2$ for all $i\in I$. Of course,  $NP(p) = \{1,2\}$.
	Let $z^*\in Z(p)$ be an alternative that is the worst alternative for the maximum number of individuals according to $p$. Since $m> n-2$, we have that $z^*$ is the worst alternative for at least two individuals. 
	
	Let $i$ be one of the individuals that considers $z^*$ her worst alternative and let $\psi$ be the transposition that exchanges $2$ and $z^*$. Define $q'\coloneq \psi p(i)\in \mathcal{L}(A)$ and $p'\coloneq  (p_{|_{I \setminus \left\{i\right\}}},q'\left[i\right])$.
We have that $NP(p')=\{1\}$ since $\mathrm{np}(1,p')=m$, $\mathrm{np}(2,p')=m-1$, $\mathrm{np}(z^*,p')=\mathrm{np}(z^*,p)+1\leq (m-2)+1=m-1$, and $\mathrm{np}(x,p')=\mathrm{np}(x,p)\leq m-1$ for all $x\in A\setminus\{1,2,z^*\}$. By Proposition \ref{properties}$(iv)$, $1\succ_{p(i)}2$ implies 
	$\{1\}\succ_{\mathbf{K}(p(i))}\{1,2\}$, and hence  \eqref{caso1} is finally shown.
\end{proof}

\begin{lemma} \label{lemquasiGSP}
	Let $i \in I$, $q,q' \in \mathcal{L}(A)$ and $\omega \in \Omega^{NP}_i(q)$. Let $B\in P_0(A)$ be the set such that 
	$$\omega=\{\overline{p}\in \mathcal{L}(A)^{I\setminus\{i\}}: NP(\overline{p},q[i])=B\}.$$ Let $z$ be the worst alternative in $q$ and $z'$ be the worst alternative in $q'$. If  $z = z'$, or 
	 $z \in B$, or $|B| \le n-2$,
	then one of the following facts hold:
	\begin{itemize}
			\item[$(i)$] for every $\overline{p}\in\omega$, $NP(\overline{p},q'[i])\not\succ_{\mathbf{K}(q)}B$;
			\item[$(ii)$] there exists $\overline{p}'\in\omega$ such that $NP(\overline{p}',q'[i])\not\succeq_{\mathbf{K}(q)}B.$
	\end{itemize}
\end{lemma}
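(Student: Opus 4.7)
The plan is to proceed by cases on the three hypotheses, using the common preliminary observation that passing from $(\overline{p},q[i])$ to $(\overline{p},q'[i])$ affects the negative plurality scores in a very controlled way: $\mathrm{np}(z,\cdot)$ rises by one while $\mathrm{np}(z',\cdot)$ drops by one, and all other scores stay the same (both movements being zero when $z=z'$). Throughout the argument I write $M=M(\overline{p})$ for the maximal score at $(\overline{p},q[i])$.

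The first two hypotheses will dispatch to (i) quickly. When $z=z'$, no score changes, so $NP(\overline{p},q'[i])=B$ for every $\overline{p}\in\omega$. When $z\in B$ and $z\neq z'$ (the case $z=z'$ being already covered), $z$ has score $M$ at $(\overline{p},q[i])$ and thus reaches $M+1$ at $(\overline{p},q'[i])$, making $z$ the unique winner; since $z$ is worst in $q$, $\{z\}\succeq_{\mathbf{K}(q)}B$ is possible only if $B=\{z\}$, in which case the comparison is an equality, while otherwise $\{z\}\not\succeq_{\mathbf{K}(q)}B$. Either way (i) holds.

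For the third hypothesis I assume $z\neq z'$, $z\notin B$, and $|B|\leq n-2$. For each $\overline{p}\in\omega$, $NP(\overline{p},q'[i])$ turns out to be one of the four sets $B$, $B\cup\{z\}$, $B\setminus\{z'\}$, $(B\setminus\{z'\})\cup\{z\}$, according to whether $z'\in B$ and whether $\mathrm{np}(z,(\overline{p},q[i]))$ equals $M-1$ or is at most $M-2$. I will verify that the shape $B$ yields no strict inequality; that the shapes $B\cup\{z\}$ and $(B\setminus\{z'\})\cup\{z\}$ both fail $\succeq_{\mathbf{K}(q)}B$ because $z$ is worst in $q$ and $B$ contains an element distinct from $z$; that $B\setminus\{z'\}$ with $|B|\geq 3$ fails $\succeq_{\mathbf{K}(q)}B$ by Proposition~\ref{properties}(ii), since a proper subset of size at least two cannot be $\mathbf{K}(q)$-comparable to $B$; and that $B\setminus\{z'\}$ with $|B|=2$, say $B=\{y,z'\}$, fails $\succeq_{\mathbf{K}(q)} B$ exactly when $z'\succ_q y$. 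Combining these observations, (i) or (ii) follows in every configuration except possibly the one where $|B|=2$, $B=\{y,z'\}$, $y\succ_q z'$, and every $\overline{p}\in\omega$ falls in the $B\setminus\{z'\}$ shape.

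The delicate remaining configuration is exactly this last one, in which $NP(\overline{p},q'[i])=\{y\}\succ_{\mathbf{K}(q)}B$ for all $\overline{p}\in\omega$. To secure (ii) here I will construct $\overline{p}^*\in\omega$ with $\mathrm{np}(z,(\overline{p}^*,q[i]))=M^*-1$, which lands in the $(B\setminus\{z'\})\cup\{z\}$ shape and hence satisfies $NP(\overline{p}^*,q'[i])\not\succeq_{\mathbf{K}(q)}B$. Starting from an arbitrary $\overline{p}_0\in\omega$, I will exploit $|B|\leq n-2$, which forces $n\geq 4$ and hence the existence of $w\in A\setminus\{y,z',z\}$, then iteratively identify an individual $j\in I\setminus\{i\}$ with $z$ last in the current $\overline{p}(j)$ and swap $z$ with $w$ in $\overline{p}(j)$. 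Each such swap raises $\mathrm{np}(z,\cdot)$ by one and lowers $\mathrm{np}(w,\cdot)$ by one while leaving the scores of $y$ and $z'$, and hence the winner set $B$, untouched. A counting argument on the number of individuals still ranking $z$ last guarantees enough candidates at every step until $\mathrm{np}(z,\cdot)=M-1$ is reached. This constructive step is the main obstacle of the proof; the remainder is routine score-shift bookkeeping combined with the Kelly comparisons of Proposition~\ref{properties}.
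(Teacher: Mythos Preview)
Your handling of the first two hypotheses ($z=z'$ and $z\in B$) is correct and matches the paper.

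For the third hypothesis your four-shape classification is false when $|B|=1$ and $z'\in B$. In that situation $B=\{z'\}$, and if $\mathrm{np}(z,(\overline{p},q[i]))\le M-2$ your asserted outcome would be $B\setminus\{z'\}=\varnothing$; in reality the maximum score drops to $M-1$ and $NP(\overline{p},q'[i])$ consists of $z'$ together with all other alternatives that originally had score $M-1$. Concretely, take $n=4$, $m=6$, $q=[1,2,3,4]$, $z'=3$, and let the worst alternatives of the five individuals in $I\setminus\{i\}$ be $1,2,2,4,4$. Then $NP(\overline{p},q[i])=\{3\}=B$ with $M=6$ and $\mathrm{np}(4,(\overline{p},q[i]))=3=M-3$, yet $NP(\overline{p},q'[i])=\{1,3\}\succ_{\mathbf{K}(q)}\{3\}$. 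Thus the ``delicate remaining configuration'' is not confined to $|B|=2$; the case $|B|=1$ with $z'\in B$ is equally delicate, and your outline does not cover it. Your swap construction could be adapted to this case (one only needs some $w\in A\setminus\{z',z\}$, available since $n\ge 3$), but as written the argument has a genuine gap.

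The paper sidesteps the entire shape analysis: under the third hypothesis it proves (ii) directly via a single explicit witness. With $C\coloneq A\setminus B$ one has $|C\setminus\{z\}|=n-|B|-1\le m-1$ by Lemma~\ref{lemappres}(i), so one may choose $\overline{p}'\in\mathcal{L}(A)^{I\setminus\{i\}}$ whose set of last-ranked alternatives is exactly $C\setminus\{z\}$. At $(\overline{p}',q[i])$ every element of $B$ then has score $m$, $z$ has score $m-1$, and each element of $C\setminus\{z\}$ has score at most $m-1$, so $NP(\overline{p}',q[i])=B$ and $\overline{p}'\in\omega$. Passing to $q'$ raises $z$ to score $m$, hence $z\in NP(\overline{p}',q'[i])$; since $z$ is $q$-worst and $z\notin B$, this gives $NP(\overline{p}',q'[i])\not\succeq_{\mathbf{K}(q)}B$. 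This single construction replaces both your case analysis and your iterative swap in one stroke.
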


\begin{proof}
    Assume first that $z = z'$. We show that $(i)$ holds.
    Let $\overline{p}\in\omega$.  For every $x\in A$,  we have $\mathrm{np}(x,(\overline{p},q\left[i\right]))= \mathrm{np}(x,(\overline{p},q'\left[i\right]))$, thus $NP(\overline{p},q'\left[i\right]) =B\not\succ_{\mathbf{K}(q)}B$.
    
    Assume next that $z \neq z'$ and $z \in B$. We show that $(i)$ holds. 
    Let $\overline{p}\in\omega$.  Define $p \coloneq (\overline{p},q\left[i\right])$ and $p' \coloneq (\overline{p},q'\left[i\right])$. Since $z \in B = NP(p)$ we have that, for every $x \in A \setminus \{z\}$, 
	\begin{equation}\label{z maximize}
		\mathrm{np}(z,p) \geq \mathrm{np}(x,p).
	\end{equation}
	Moreover, since $z$ is not the worst alternative in $q'$, we have
	\begin{equation}\label{z not-worst}
		\mathrm{np}(z,p') =\mathrm{np}(z,p)+1
	\end{equation}
	and, for every $x \in A \setminus \{z\}$, 
	\begin{equation}\label{z not-worst2}
		\mathrm{np}(x,p') \leq \mathrm{np}(x,p).
	\end{equation}
    As a consequence, by \eqref{z not-worst},  \eqref{z maximize},  \eqref{z not-worst2}, we get $\mathrm{np}(z,p')>\mathrm{np}(z,p)\geq  \mathrm{np}(x,p)\geq \mathrm{np}(x,p')$ for all $x \in A \setminus \{z\}$, which gives $NP(p')=\{z\}\not\succ_{\mathbf{K}(q)}B$.
    
    Assume finally that $z \neq z'$, $z \notin B$ and $|B| \le n-2$. We show that $(ii)$ holds. Consider $C\coloneq  A \setminus B$. Thus, $z \in C$ and $|C| \ge 2$. For every $\overline{p}\in \mathcal{L}(A)^{I \setminus \left\{i\right\}}$, define $$Z(\overline{p})\coloneq\{w\in A: \exists j\in I\setminus \left\{i\right\} \hbox{ such that } \mathrm{rank}_{p(j)}( w)=n\}.$$
	By Lemma \ref{lemappres} $(i)$, we have $|B| \ge n-m$, hence $$|C \setminus \{z\}| = |C| - 1 = n-|B|-1 \le n+m-n-1=m-1.$$ Thus, we can construct $\overline{p}'\in \mathcal{L}(A)^{I \setminus \left\{i\right\}}$ such that $Z(\overline{p}')=C\setminus\{z\}.$ Clearly we have $NP(\overline{p}', q[i])=B$ and hence $\overline{p}'\in \omega.$
	Set now $p\coloneq (\overline{p}', q[i])$ and $p'\coloneq (\overline{p}', q'[i])$. 
	Recalling that $z$  is not the worst alternative in $q'$, we have $\mathrm{np}(z,p') =\mathrm{np}(z,p)+1=(m-1)+1=m.$
	As a consequence, $z\in NP(p')$. Since $z \notin B$ we have $NP(\overline{p}',q'[i])\not\succeq_{\mathbf{K}(q)}B.$
\end{proof}

\begin{proposition} \label{propNPomegaKSP}
Assume that $n$ divides $m-1$ or $n \ge 4$. Then $NP$ is $\Omega^{NP}$-$\mathbf{K}$-strategy-proof.
\end{proposition}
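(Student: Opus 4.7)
The plan is to reduce to a single residual case via Lemma \ref{lemquasiGSP}, and then dispose of that case by a direct count of ``last-place'' scores.

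Fix an arbitrary quadruple $(i, q, q', \omega)$ and let $B$, $z$, $z'$ be as in Lemma \ref{lemquasiGSP}. Whenever $z = z'$, $z \in B$, or $|B| \le n-2$ holds, the lemma already yields the required strategy-proofness condition. The uncovered case is $z \neq z'$, $z \notin B$, $|B| \ge n-1$, which forces $B = A \setminus \{z\}$. Letting $t$ denote the top alternative in $q$, I first observe that $B' \succ_{\mathbf{K}(q)} B$ holds only for $B' = \{t\}$, since any $x \in B'$ must $\mathbf{K}(q)$-dominate $t \in B$. Thus it suffices to control whether $NP(\overline{p}, q'[i]) = \{t\}$ is possible.

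Next I plan to bookkeep the last-place counts. For any $\overline{p} \in \omega$, the equality $NP(\overline{p}, q[i]) = A \setminus \{z\}$ forces a common value $L$ with $L(x) = L$ for every $x \neq z$ and $L(z) = m - (n-1)L$ (so $L < m/n$). Switching individual $i$'s report from $q$ to $q'$ gives the adjustments $L'(z) = L(z) - 1$, $L'(z') = L+1$, and $L'(x) = L$ for $x \in A \setminus \{z, z'\}$. From here the two hypotheses of the proposition split naturally.

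If $n \ge 4$, then $|A \setminus \{z, z'\}| \ge 2$, so at least two distinct elements attain the value $L$ under $L'$. In the subcase $t \neq z'$, $t$ shares the value $L$ with another element of $A \setminus \{z, z', t\}$; in the subcase $t = z'$ one has $L'(t) = L+1$, exceeding the minimum. Either way $t$ is never the unique $L'$-minimizer, so $NP(\overline{p}, q'[i]) \neq \{t\}$ uniformly on $\omega$, yielding condition (b) of Definition \ref{defpimanip}.

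If instead $n = 3$ and $3 \mid m-1$, write $m = 3k+1$ with $k \ge 1$. I plan to construct one distinguished $\overline{p}^* \in \omega$ by placing $k+1$ individuals (including $i$) in the ``$z$-last'' block and $k$ individuals each in the ``$z'$-last'' and ``$u$-last'' blocks, where $u$ is the unique element of $A \setminus \{z, z'\}$. For this profile $L = k$, hence $L'(z) = L'(u) = k$ while $L'(z') = k+1$, giving $NP(\overline{p}^*, q'[i]) = \{z, u\}$; since $z$ is the $q$-worst, one checks $\{z, u\} \not\succeq_{\mathbf{K}(q)} A \setminus \{z\}$, which is condition (a). The delicate point, and the main obstacle, is exactly here: the divisibility $3 \mid m-1$ is what makes $L = k$ realizable while maintaining $L(z) = k+1 > L$. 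Without it, the extremal profile $\overline{p}^*$ would fail to sit in $\omega$, and manipulation would genuinely occur, matching the complementary half of Theorem \ref{main-neg-plurality}.
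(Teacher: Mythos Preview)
Your argument is substantively correct, but your handling of the $n\ge 4$ case genuinely differs from the paper's. After reducing via Lemma~\ref{lemquasiGSP} to $B=A\setminus\{z\}$, the paper disposes of \emph{both} hypotheses by exhibiting a single witness $\overline{p}'\in\omega$ with $NP(\overline{p}',q'[i])\not\succeq_{\mathbf{K}(q)}B$: for $n\mid m-1$ it spreads the last places uniformly (your $n=3$ construction is the special case of this), and for $n\ge4$ it takes the profile where every other individual also puts $z$ last, obtaining $NP'=A\setminus\{z,z'\}$, a proper non-singleton subset of $B$ and hence $\mathbf{K}(q)$-incomparable with $B$. You instead prove something stronger in the $n\ge4$ case: by bookkeeping the last-place counts you show that \emph{for every} $\overline{p}\in\omega$ the set $NP(\overline{p},q'[i])$ can never equal $\{t\}$, so strict $\mathbf{K}(q)$-improvement is impossible across the whole information set. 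This is a clean uniform argument that the paper does not give; the paper's witness construction is shorter but yields the weaker existential conclusion. One correction: your labels are inverted. In the $n\ge4$ case you are negating the \emph{first} bullet of Definition~\ref{defpimanip} (equivalently, establishing condition~(i) of Lemma~\ref{lemquasiGSP}), and in the $n=3$ case you are negating the \emph{second} bullet (condition~(ii)); your references to ``condition~(b)'' and ``condition~(a)'' should be swapped.
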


\begin{proof}
Let us consider $i \in I$, $q,q' \in \mathcal{L}(A)$ and $\omega \in \Omega_i^{NP}(q)$ and prove that one of the following facts hold: 
\begin{itemize}
\item[$(a)$] for every $\overline{p}\in\omega$, $NP(\overline{p},q'[i])\not\succ_{\mathbf{K}(q)}B$,
\item[$(b)$] there exists $\overline{p}'\in\omega$ such that $NP(\overline{p}',q'[i])\not\succeq_{\mathbf{K}(q)}B,$
\end{itemize}
where $B\in P_0(A)$ is such that $\omega=\{\overline{p}\in \mathcal{L}(A)^{I\setminus\{i\}}: NP(\overline{p},q[i])=B\}$.
Let $z$ be the worst alternative in $q$, and $z'$ be the worst alternative in $q'$. 

If $z = z'$ or $z \in B$ or $|B| \le n-2$, then, by Lemma \ref{lemquasiGSP}, we know that one between $(a)$ and $(b)$ holds.

Assume next that $z \neq z'$,  $z \notin B$ and $|B| \ge n-1$. Since $z \notin B$, we actually have $|B|= n-1$. Note also that, under these assumptions, $B = A \setminus \{z\}$. 
\begin{itemize}
\item Assume that $n$ divides $m-1$. We prove that $(b)$ holds.  Defining $c \coloneq \frac{m-1}{n}$, we can consider $\overline{p}'\in \mathcal{L}(A)^{I \setminus \left\{i\right\}}$ such that, for every $x \in A$, $x$ is the worst alternative of exactly $c$ individuals. Since $z$ is the worst alternative in $q$, we have $NP(\overline{p}',q[i]) = B$ and then $\overline{p}'\in \omega$.
Since $z \neq z'$, we have $NP(\overline{p}',q'[i]) = A\setminus \{z'\}$. Since $A\setminus \{z'\}\nsucceq_{\mathbf{K}(q)} B$, we conclude that $NP(\overline{p}',q'[i]) \nsucceq_{\mathbf{K}(q)} B$.
\item Assume that $n \ge 4$. We prove that $(b)$ holds. Consider $\overline{p}' \in \mathcal{L}(A)^{I \setminus \{i\}}$ such that $z$ is the worst alternative of all individuals.
Of course, $NP(\overline{p}',q[i]) = B$ and then $\overline{p}'\in \omega$. Since $z'\neq z$, we have that $NP(\overline{p}',q'[i]) = A \setminus \{z,z'\}\subsetneq B$. Note that $|A \setminus \{z,z'\}|=n-2\geq 2$. By Proposition \ref{properties}$(ii)$, we deduce that $B$ and $NP(p')$ are not $\mathbf{K}(q)$-comparable. It follows that $NP(\overline{p}',q'[i])\not\succeq_{\mathbf{K}(q)}B.$
\end{itemize}
\end{proof}
	
\begin{proposition} \label{prop3notdivideNP}
Assume that $n = 3$ and $3$ does not divide $m-1$. Then $NP$ is $\Omega^{NP}$-$\mathbf{K}$-manipulable;
\end{proposition}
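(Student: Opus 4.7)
The plan is to exhibit an explicit manipulation, adapting the Fishburn-style construction of Proposition \ref{finale} but arranging the false report to yield a \emph{singleton} outcome — which is necessary to get a strict Kelly improvement out of a two-element winning set. I would fix any individual $i$, true preference $q \coloneq [1,2,3]$, and false preference $q' \coloneq [1,3,2] = \psi q$ with $\psi$ the transposition that exchanges $2$ and $3$, and consider
\[
\omega \coloneq \{\overline{p} \in \mathcal{L}(A)^{I\setminus\{i\}} : NP(\overline{p}, q[i]) = \{1,2\}\}.
\]
Nonemptiness of $\omega$ is immediate: taking every other voter's ballot equal to $[1,2,3]$ makes $3$ last for everyone, so $NP = \{1,2\}$, whence $\omega \in \Omega^{NP}_i(q)$.

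For an arbitrary $\overline{p} \in \omega$, set $p \coloneq (\overline{p}, q[i])$ and let $L(x)$ count the voters ranking $x$ last in $p$. The condition $NP(p) = \{1,2\}$ forces $L(1) = L(2) = a$ and $L(3) = b$ for some nonnegative integers with $a < b$ and $2a + b = m$. The key step is purely arithmetic: if $b = a+1$, then $m = 3a+1$ and so $3 \mid m-1$, contradicting the hypothesis. Therefore $b \ge a + 2$ for every $\overline{p} \in \omega$ — this is the unique place where the divisibility assumption enters.

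Finally, I would observe that replacing $q$ by $q'$ changes only which alternative $i$ ranks last, shifting her last-place ballot from $3$ to $2$. So the new last-place counts at $p' \coloneq (\overline{p}, q'[i])$ are $L'(1) = a$, $L'(2) = a+1$, and $L'(3) = b-1 \ge a+1$, making $1$ the unique minimizer and giving $NP(p') = \{1\}$. By Proposition \ref{properties}$(iv)$, $\{1\} \succ_{\mathbf{K}(q)} \{1,2\}$, so every $\overline{p} \in \omega$ actually delivers a strict Kelly improvement — a uniform conclusion stronger than what Definition \ref{defpimanip} requires — establishing $\Omega^{NP}$-$\mathbf{K}$-manipulability. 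The only real obstacle is anticipating that the manipulated outcome must collapse to a singleton (contrast with the Fishburn setting of Proposition \ref{finale}, where $\{1,3\}$ would already suffice); the hypothesis $3 \nmid m-1$ is in place precisely to exclude the degenerate configuration $b=a+1$ that would instead produce the Kelly-incomparable output $\{1,3\}$.
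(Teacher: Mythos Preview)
Your proof is correct and essentially identical to the paper's own argument: same choice of $q=[1,2,3]$, same information set $\omega=\{\overline{p}:NP(\overline{p},q[i])=\{1,2\}\}$, same nonemptiness witness (all voters rank $3$ last), the same last-place counting with the arithmetic observation that $b=a+1$ would force $3\mid m-1$, and the same manipulation $q'=[1,3,2]$ yielding $NP(\overline{p},q'[i])=\{1\}\succ_{\mathbf{K}(q)}\{1,2\}$ for every $\overline{p}\in\omega$.
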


\begin{proof}
    Let $i \in I$, $q\coloneq [1,2,3]$ and $\omega \coloneq\{\overline{p} \in \mathcal{L}(A)^{I \setminus \{i\}}: NP(\overline{p},q[i]) = \{1,2\}\}$. 
    Let $\overline{p}'\in \mathcal{L}(A)^{I \setminus \{i\}}$ be such that $3$ is the worst alternative of all individuals. Thus we have $\overline{p}'\in \omega$, and hence $\omega\neq\varnothing$ and $\omega\in\Omega_i^{NP}(q)$.
    
    Consider now $\overline{p} \in \omega$. Set $p \coloneq (\overline{p},q[i])$  and, for every $x \in A$, $N(x) \coloneq \mathrm{np}(x,p)$ and 
    \[
    L(x)\coloneq |\{i\in I:\mathrm{rank}_{p(i)}(x)=3\}|.
    \]
    Note that $L(x) = m - N(x)$ for all $x\in A$,  $\sum_{x \in A} L(x) = m$, and $L(1) = L(2)< L(3)$. Assume by contradiction $L(3) = L(1) + 1$. Then, we get 
    \[
    m = \sum_{x \in A} L(x) = L(1) + L(2) + L(3) = 2L(1) + L(1) + 1 = 3L(1) + 1.
    \]
    Thus $n = 3$ divides $m-1$, a contradiction. Hence, $L(3) \ge L(1) +2$ and $L(3) \ge L(2) +2$, and so $N(3) \le N(1) - 2$ and $N(3) \le N(1) - 2$. Let $\psi$ be the transposition that exchanges $3$ and $2$ and set, for every $x \in A$, $N'(x) \coloneq \mathrm{np}(x,(\overline{p},\psi q[i]))$. We have $N'(3) = N(3) +1$, $N'(2) = N(2) - 1$ and, $N'(1) = N(1)$. Then $NP(\overline{p},\psi q[i]) = \{1\} \succ_{\mathbf{K}(q)} \{1,2\}$. Thus, $NP$ is $\Omega^{NP}$-$\mathbf{K}$-manipulable.
\end{proof}

\begin{proof}[Proof of Theorem \ref{main-neg-plurality}]
    Apply Propositions \ref{lemappsp}, \ref{propNPKM}, \ref{propNPomegaKSP} and \ref{prop3notdivideNP}.
\end{proof}

\subsection{Proof of Theorem \ref{Copeland-main}}\label{appendix-COP}

\begin{lemma} \label{lemsumscore}
Let $p \in \mathcal{L}(A)^I$. Then $\sum_{x\in A} \mathrm{co}(x,p) = 0$.
\end{lemma}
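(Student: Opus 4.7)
The plan is to show that the two positive and negative contributions to the total Copeland score cancel exactly, by re-indexing each sum as a count over ordered pairs of distinct alternatives.

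First I would rewrite $\sum_{x \in A} w_p(x)$ as a double count. By definition,
\[
\sum_{x \in A} w_p(x) = \sum_{x \in A} |\{y \in A \setminus \{x\} : c_p(x,y) > c_p(y,x)\}| = |W|,
\]
where $W = \{(x,y) \in A^2 : x \neq y,\ c_p(x,y) > c_p(y,x)\}$. Similarly,
\[
\sum_{x \in A} l_p(x) = |L|, \quad L = \{(x,y) \in A^2 : x \neq y,\ c_p(x,y) < c_p(y,x)\}.
\]

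Next I would exhibit a bijection between $W$ and $L$ given by the swap $(x,y) \mapsto (y,x)$. Indeed, $(x,y) \in W$ iff $x \neq y$ and $c_p(x,y) > c_p(y,x)$, which is exactly $(y,x) \in L$. This bijection forces $|W| = |L|$, and therefore
\[
\sum_{x \in A} \mathrm{co}(x,p) = \sum_{x \in A} w_p(x) - \sum_{x \in A} l_p(x) = |W| - |L| = 0.
\]

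There is no real obstacle here: the statement is a bookkeeping identity, and the only subtlety is making sure that the set of unordered pairs on which $x$ and $y$ tie pairwise is ignored consistently on both sides, which is automatic from the strict inequalities in the definitions of $w_p$ and $l_p$.
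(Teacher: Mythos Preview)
Your proof is correct and follows essentially the same approach as the paper: both argue that $\sum_{x\in A} w_p(x) = \sum_{x\in A} l_p(x)$ and conclude immediately. The paper simply asserts this equality without justification, whereas you spell it out via the bijection $(x,y)\mapsto(y,x)$ between the sets $W$ and $L$; your version is thus a more detailed rendering of the same one-line idea.
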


\begin{proof} Since $\sum_{x\in A} w_p(x) = \sum_{x\in A} l_p(x)$, we immediately obtain that 
$\sum_{x\in A} \mathrm{co}(x,p) = \sum_{x\in A} (w_p(x) - l_p(x)) =0$.
\end{proof}

\begin{proposition} \label{34}
$CO$ satisfies {\sc wsm}.
\end{proposition}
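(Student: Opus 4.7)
The plan is to track how the pairwise comparison matrix changes when individual $i$ flips the adjacent pair $(x,y)$, and exploit the fact that this local change can affect the Copeland scores of $x$ and $y$ only. Set $p=(\overline{p},q[i])$ and $p'=(\overline{p},\psi q[i])$, and assume $CO(p)=\{z\}$. Since $\mathrm{rank}_q(y)<\mathrm{rank}_q(z)$, in particular $z\notin\{x,y\}$.

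The first step is to observe that the preference of individual $i$ between any pair $\{u,v\}\neq\{x,y\}$ is unaffected by the transposition. Using the definition $u\succeq_{\psi q}v\iff\psi(u)\succeq_q\psi(v)$, together with the fact that $x$ and $y$ are consecutive in $q$, a short inspection shows that for every $v\in A\setminus\{x,y\}$ the preferences of $i$ between $\{x,v\}$ and between $\{y,v\}$ are the same in $q$ and in $\psi q$; only the order of $x$ and $y$ themselves flips. Hence, for every pair $\{u,v\}\neq\{x,y\}$ one has $c_{p'}(u,v)=c_p(u,v)$, whereas $c_{p'}(x,y)=c_p(x,y)-1$ and $c_{p'}(y,x)=c_p(y,x)+1$.

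The second step is to infer that only the majority comparison between $x$ and $y$ can change, so the Copeland scores of any $u\in A\setminus\{x,y\}$ satisfy $\mathrm{co}(u,p')=\mathrm{co}(u,p)$. In particular $\mathrm{co}(z,p')=\mathrm{co}(z,p)$. A short case analysis on whether $x$ defeats $y$, ties with $y$, or is defeated by $y$ under $p$ (i.e.\ whether $c_p(x,y)-c_p(y,x)$ is at least $3$, equal to $2$, equal to $1$, equal to $0$, or negative) then shows that the flip of individual $i$'s vote can only move the duel $(x,y)$ against $x$: the duel either remains stable, or degrades from a victory of $x$ to a tie, or to a victory of $y$, or from a tie to a victory of $y$. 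In every sub-case $\mathrm{co}(x,p')\leq\mathrm{co}(x,p)$ (and, symmetrically, $\mathrm{co}(y,p')\geq\mathrm{co}(y,p)$, although the latter inequality will not be needed).

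The third step is to conclude. Since $CO(p)=\{z\}$, we have $\mathrm{co}(z,p)>\mathrm{co}(u,p)$ for every $u\in A\setminus\{z\}$. For every $u\in A\setminus\{x,y,z\}$, $\mathrm{co}(u,p')=\mathrm{co}(u,p)<\mathrm{co}(z,p)=\mathrm{co}(z,p')$, so $u\notin CO(p')$. For $x$, $\mathrm{co}(x,p')\leq\mathrm{co}(x,p)<\mathrm{co}(z,p)=\mathrm{co}(z,p')$, so $x\notin CO(p')$. Therefore $CO(p')\subseteq\{y,z\}$, which is the required conclusion of \textsc{wsm}. The only delicate point in the argument is the third-step case analysis on how the duel between $x$ and $y$ can shift; once the localization property of the second step is in place, everything else is automatic, and indeed the hypothesis $\mathrm{rank}_q(y)<\mathrm{rank}_q(z)$ is used only to guarantee $z\notin\{x,y\}$.
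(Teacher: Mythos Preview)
Your proof is correct and follows essentially the same approach as the paper's: both arguments rest on the observation that swapping the adjacent pair $(x,y)$ in individual $i$'s preference leaves $\mathrm{co}(u,\cdot)$ unchanged for every $u\in A\setminus\{x,y\}$ and can only decrease $\mathrm{co}(x,\cdot)$, from which $CO(p')\subseteq\{y,z\}$ follows immediately. Your version is more explicit in tracking the pairwise counts $c_{p'}(\cdot,\cdot)$ and in the case analysis on the duel between $x$ and $y$, whereas the paper simply asserts the score inequalities without spelling out the underlying computation.
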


\begin{proof}
    Consider $i\in I$, $\overline{p}\in \mathcal{L}(A)^{I\setminus\{i\}}$, $q\in\mathcal{L}(A)$ and $x,y,z \in A$ such that 
    \begin{itemize}
    \item $CO(\overline{p},q[i])=\{z\}$,  
    \item $\mathrm{rank}_{q}(x)+1=\mathrm{rank}_{q}(y)<\mathrm{rank}_{q}(z)$,
    \end{itemize}
    and let $\psi\in \mathrm{Sym}(A)$ be the transposition that exchanges $x$ and $y$. We must prove that $CO(\overline{p},\psi q[i]) \subseteq \{z,y\}$. Since $CO(\overline{p},q[i])=\{z\}$, we have that, for every $w \in A \setminus \{z\}$, $\mathrm{co}(z,(\overline{p},q[i])) > \mathrm{co}(w,(\overline{p},q[i]))$. Observe that  
 $\mathrm{co}(z,(\overline{p},\psi q[i])) = \mathrm{co}(z,(\overline{p},q[i]))$, $\mathrm{co}(x,(\overline{p},\psi q[i])) \le \mathrm{co}(x,(\overline{p},q[i]))$
    and, for every $w \in A \setminus \{x,y,z\}$, $\mathrm{co}(w,(\overline{p},\psi q[i])) = \mathrm{co}(w,(\overline{p},q[i]))$. Therefore, we deduce that, for every $w \in A \setminus \{y,z\}$, $\mathrm{co}(z,(\overline{p},\psi q[i])) > \mathrm{co}(w,(\overline{p},\psi q[i]))$. Thus, we conclude that $CO(\overline{p},\psi q[i]) \subseteq \{z,y\}$.
\end{proof}

\begin{proposition}\label{35}
    Assume that $n \ge 4$ and $m = 2$. Then, $CO$ is {\sc us}. 
\end{proposition}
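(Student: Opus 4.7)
The plan is to exhibit an explicit witness to upward sensitivity. I take $i=2$, $q\coloneq[1,2,3,4,5,\dots,n]$ and let $\overline{p}\in \mathcal{L}(A)^{I\setminus\{2\}}$ be defined by $\overline{p}(1)\coloneq[3,4,2,1,5,6,\dots,n]$. With $x=1$, $y=2$, $z=3$ the rank condition $\mathrm{rank}_q(x)+1=\mathrm{rank}_q(y)<\mathrm{rank}_q(z)$ is clearly satisfied, so the work reduces to verifying that $CO(\overline{p},q[2])=\{3\}$ and that $2\in CO(\overline{p},\psi q[2])$, where $\psi$ is the transposition exchanging $1$ and $2$.

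The verification is a bookkeeping of pairwise majority comparisons. Since $m=2$, each unordered pair $\{a,b\}\subseteq A$ is either a tie (the two voters disagree on the order of $a$ and $b$, so $c_p(a,b)=c_p(b,a)=1$) or agreed, in which case one alternative beats the other by $2$--$0$. In the profile $p\coloneq(\overline{p},q[2])$, pairs entirely within $\{5,\dots,n\}$ and cross pairs between $\{1,2,3,4\}$ and $\{5,\dots,n\}$ are all agreed, since the tail $5,\dots,n$ appears identically in both rankings and both voters rank every element of $\{1,2,3,4\}$ above every element of $\{5,\dots,n\}$. Among $\{1,2,3,4\}$, inspection shows that only the pair $\{3,4\}$ is agreed (both voters have $3\succ 4$); the remaining five pairs are ties.

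Aggregating these contributions yields
\[
\mathrm{co}(1,p)=\mathrm{co}(2,p)=n-4,\quad \mathrm{co}(3,p)=n-3,\quad \mathrm{co}(4,p)=n-5,
\]
and $\mathrm{co}(k,p)=n+1-2k\le n-9$ for $k\in\{5,\dots,n\}$; hence $3$ is the unique Copeland winner and $CO(p)=\{z\}$. The swap of $1$ and $2$ in $q$ affects only the pairwise comparison on $\{1,2\}$: previously a tie, it becomes an agreed win for $2$, because $\overline{p}(1)$ already has $2\succ 1$ and $\psi q$ now has $2\succ 1$ as well. Consequently $\mathrm{co}(2,(\overline{p},\psi q[2]))=n-3$ and $\mathrm{co}(1,(\overline{p},\psi q[2]))=n-5$, while all other Copeland scores are unchanged. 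Therefore $CO(\overline{p},\psi q[2])=\{2,3\}$, so $y=2\in CO(\overline{p},\psi q[2])$, completing the verification.

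The only delicate design choice is arranging $\overline{p}(1)$ so that within $\{1,2,3,4\}$ exactly the pair $\{3,4\}$ is agreed and so that the swap in $q$ converts the $\{1,2\}$ tie into an agreement. This puts $z=3$ precisely one Copeland point ahead of the tied pair $\{1,2\}$, which is the smallest gap that the swap, which can shift $\mathrm{co}(y)$ by at most $+1$ and leaves $\mathrm{co}(z)$ untouched, is able to close so that $y$ joins $z$ among the winners.
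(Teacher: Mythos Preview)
Your proof is correct and takes essentially the same approach as the paper: both construct a two-voter profile in which one agreed pairwise comparison gives the designated alternative $z$ a one-point Copeland lead over $y$, and the swap of $x$ and $y$ in $q$ converts a tie on $\{x,y\}$ into an agreement in favor of $y$, closing that gap. The paper uses $\overline{p}(1)=[1,2,3,4,\dots]$, $q=[4,3,1,2,\dots]$ with $(x,y,z)=(4,3,1)$, while you use $\overline{p}(1)=[3,4,2,1,\dots]$, $q=[1,2,3,4,\dots]$ with $(x,y,z)=(1,2,3)$; the two witnesses are structurally identical up to relabeling.
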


\begin{proof}
    Since $m = 2$ we have $I = \{1,2\}$. Let $\overline{p} \in \mathcal{L}(A)^{I\setminus\{2\}}$ be such that $\overline{p}(1) \coloneq [1,2,3,4,(5),\dots,(n)]$ and $q \coloneq [4,3,1,2,(5),\dots,(n)]$. We have  
    $\mathrm{co}(1,(\overline{p}, q[2])) = n-3$, $\mathrm{co}(2,(\overline{p}, q[2])) = n-5$, $\mathrm{co}(3,(\overline{p}, q[2])) = n-4$, $\mathrm{co}(4,(\overline{p}, q[2])) = n-4$, and, for every $w \in A \setminus \{1,2,3,4\}$, $\mathrm{co}(w,(\overline{p}, q[2])) = n - 2w + 1$. Thus, $CO(\overline{p}, q[2]) = \{1\}$ and $\mathrm{rank}_q(4) + 1 = \mathrm{rank}_q(3) < \mathrm{rank}_q(1)$. Let now $\psi$ be the permutation that exchanges $3$ and $4$. We have, $\mathrm{co}(3,(\overline{p}, \psi q[2])) = \mathrm{co}(3,(\overline{p}, q[2])) + 1$, $\mathrm{co}(4,(\overline{p}, \psi q[2])) = \mathrm{co}(4,(\overline{p}, q[2])) - 1$, and, for every $w \in A \setminus \{3,4\}$, $\mathrm{co}(w,(\overline{p}, \psi q[2])) = \mathrm{co}(w,(\overline{p}, q[2]))$. Thus, $CO(\overline{p}, \psi q[2]) = \{1,3\}$. We then conclude that $CO$ is {\sc us}. 
\end{proof}

\begin{proposition}\label{36}
    Assume that $n \ge 3$ and $m \ge 4$ is even. Then, $CO$ is {\sc us}. 
\end{proposition}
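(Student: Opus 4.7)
The plan is to produce an explicit witness of {\sc us} when $n \geq 3$ and $m \geq 4$ is even. Take $i = m$, $x = 2$, $y = 3$, $z = 1$, and set $q \coloneq [2,3,1,4,5,\ldots,n]$, so that $\mathrm{rank}_q(x)+1 = 2 = \mathrm{rank}_q(y) < 3 = \mathrm{rank}_q(z)$ automatically holds; the transposition $\psi$ exchanging $x$ and $y$ then gives $\psi q = [3,2,1,4,5,\ldots,n]$. I will build $\overline{p}\in\mathcal{L}(A)^{I\setminus\{m\}}$ such that $CO(\overline{p},q[m])=\{1\}$ and $3\in CO(\overline{p},\psi q[m])$.

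The construction uses a four-voter \emph{core} at positions $1, 2, 3, m$: let $\overline{p}(1)\coloneq [1,3,2,4,\ldots,n]$, $\overline{p}(2)\coloneq [3,1,2,4,\ldots,n]$, $\overline{p}(3)\coloneq [1,2,3,4,\ldots,n]$, with voter $m$ receiving $q$. When $m \geq 6$, the remaining $m-4$ slots, which form an even number, are filled with pairs of voters having preferences $[1,2,3,4,\ldots,n]$ and $[3,2,1,4,\ldots,n]$. The key observation is that such a pair contributes equally to $c_p(a,b)$ and to $c_p(b,a)$ for every $a,b\in A$; hence it leaves every difference $c_p(a,b)-c_p(b,a)$ invariant and does not affect any Copeland score. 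Consequently, the Copeland scores of $(\overline{p},q[m])$ and $(\overline{p},\psi q[m])$ are entirely determined by the four-voter core together with the uniform tail behaviour of the alternatives $4,\ldots,n$.

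A direct pairwise count on the core yields, for $(\overline{p},q[m])$, the differences $c_p(1,2)-c_p(2,1)=2$, $c_p(1,3)-c_p(3,1)=0$, $c_p(2,3)-c_p(3,2)=0$, so the partial Copeland scores restricted to $\{1,2,3\}$ are $1$, $-1$, $0$. Since every voter ranks $4,5,\ldots,n$ at the bottom in increasing order, each $j\in\{1,2,3\}$ beats every alternative in $\{4,\ldots,n\}$ (adding $n-3$ wins), while each $j\in\{4,\ldots,n\}$ has Copeland score $n-2j+1\leq n-7$. Thus $\mathrm{co}(1,(\overline{p},q[m]))=n-2$ strictly exceeds $\mathrm{co}(2)=n-4$, $\mathrm{co}(3)=n-3$, and $\mathrm{co}(j)\leq n-7$ for all $j\geq 4$, giving $CO(\overline{p},q[m])=\{1\}$. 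After the swap, only voter $m$'s ordering of $2$ and $3$ changes: the differences $c_p(1,2)-c_p(2,1)$ and $c_p(1,3)-c_p(3,1)$ are unchanged, while $c_p(2,3)-c_p(3,2)$ flips from $0$ to $-2$. The new partial scores on $\{1,2,3\}$ are $1$, $-2$, $1$, so $\mathrm{co}(1)=\mathrm{co}(3)=n-2$, $\mathrm{co}(2)=n-5$, and scores of $4,\ldots,n$ are unchanged. Therefore $CO(\overline{p},\psi q[m])=\{1,3\}\ni y$, which establishes {\sc us}.

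The main obstacle is tuning the four-voter core so that the initial Copeland scores are just close enough for a single swap to promote $y$ into the winning set: the tied pairwise contests $c_{13}=c_{23}=0$ are what make $3$ enter after the flip. Once that core is chosen, the invariance under reversed pairs and the uniform tail behaviour of alternatives $4,\ldots,n$ reduce the general case $(n,m)$ to the base $(3,4)$ computation with no further arithmetic.
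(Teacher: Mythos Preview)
Your proof is correct and follows the same strategy as the paper: exhibit an explicit profile witnessing {\sc us}. The paper parameterizes all $m$ voters directly (with $q=[3,2,1,\ldots]$ and $y=2$), whereas you isolate a four-voter \emph{core} and pad with reversal pairs, reducing the general case to the base case $m=4$. Your modular decomposition is arguably cleaner and makes the arithmetic more transparent.

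One inaccuracy to correct: the sentence ``such a pair contributes equally to $c_p(a,b)$ and to $c_p(b,a)$ for every $a,b\in A$'' is false as written. For $a\in\{1,2,3\}$ and $b\in\{4,\ldots,n\}$, both voters in your pair prefer $a$ to $b$, so the pair contributes $2$ to $c_p(a,b)$ and $0$ to $c_p(b,a)$. The cancellation only holds for $a,b\in\{1,2,3\}$. Your subsequent clause ``together with the uniform tail behaviour of the alternatives $4,\ldots,n$'' shows you are aware of this and handle the tail separately, and your conclusion that the Copeland scores are unaffected is correct (adding the pairs preserves the \emph{sign} of every pairwise margin, hence the win/tie/loss pattern). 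Simply restrict the cancellation claim to $a,b\in\{1,2,3\}$ and the argument is airtight.
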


\begin{proof}
    Let $\overline{p} \in \mathcal{L}(A)^{I\setminus\{m\}}$ be defined as follows: 
    \begin{itemize}
        \item for every $j \in \ldbrack\frac{m}{2}-1\rdbrack$, $\overline{p}(j) \coloneq [1,2,3,(4),\dots,(n)]$;
        \item $\overline{p}\left(\frac{m}{2}\right) \coloneq [1,3,2,(4),\dots,(n)]$;
        \item $\overline{p}\left(\frac{m}{2}+1\right) \coloneq [2,1,3,(4),\dots,(n)]$;
        \item for every $j \in \left\{l\in\mathbb{N}: \frac{m}{2}+2\le l\leq m-1\right\}$, $\overline{p}(j) \coloneq [3,2,1,(4),\dots,(n)]$.
    \end{itemize}
    Consider $q \coloneq [3,2,1,(4),\dots,(n)]$. We have $\mathrm{co}(1,(\overline{p}, q[m])) = n-2$, $\mathrm{co}(2,(\overline{p}, q[m])) = n-3$, $\mathrm{co}(3,(\overline{p}, q[m])) = n-4$, and, for every $w \in A\setminus\{1,2,3\}$, $\mathrm{co}(w,(\overline{p}, q[m])) = n-2w+1$. Thus, $CO(\overline{p}, q[m]) = \{1\}$ and $\mathrm{rank}_q(3) + 1 = \mathrm{rank}_q(2) < \mathrm{rank}_q(1)$. Let now $\psi$ be the permutation that exchanges $2$ and $3$. We have $\mathrm{co}(2,(\overline{p}, \psi q[m])) = \mathrm{co}(2,(\overline{p}, q[m])) + 1 = n-2$, $\mathrm{co}(3,(\overline{p}, \psi q[m])) = \mathrm{co}(3,(\overline{p}, q[m])) - 1 = n-5$, and,  for every $w \in A \setminus \{2,3\}$, $\mathrm{co}(w,(\overline{p}, \psi q[m])) = \mathrm{co}(w,(\overline{p}, q[m]))$. Thus, $CO(\overline{p}, \psi q[m]) = \{1,2\}$.  We then conclude that $CO$ is {\sc us}. 
\end{proof} 

\begin{proposition}\label{40}
Assume that $n \ge 5$ and $m$ is odd. Then, $CO$ is {\sc us}.
\end{proposition}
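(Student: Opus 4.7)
The plan is to exhibit an explicit profile witnessing {\sc us} for $CO$. Write $m=2k+1$ with $k\ge 1$. The starting observation is a parity one: since $m$ is odd, no pairwise majority can be a tie, so $w_p(a)+l_p(a)=n-1$ for every $a$ and every profile $p$; hence all Copeland scores share the common parity of $-(n-1)$. Consequently, a single voter's swap changes the outcome only when it actually flips a pairwise majority, which requires $x$ to beat $y$ by exactly one vote, i.e.\ $c_p(x,y)=k+1$. Such a flip raises $\mathrm{co}(y,\cdot)$ by $2$ and drops $\mathrm{co}(x,\cdot)$ by $2$, leaving the other scores unchanged. For the raised $y$ to join a previously unique winner $z$ in $CO(p')$, the parity constraint then forces $\mathrm{co}(z,p)-\mathrm{co}(y,p)=2$.

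Guided by this, I would take $z=1$, $y=2$, $x=3$, put $q=[3,2,4,5,1,(6),\dots,(n)]$ on individual $m$, and, among the other $2k$ voters, use $p_A=[4,1,5,3,2,(6),\dots,(n)]$ for $k$ of them and $p_B=[1,2,5,3,4,(6),\dots,(n)]$ for the remaining $k$. The rank conditions $\mathrm{rank}_q(3)+1=\mathrm{rank}_q(2)=2<5=\mathrm{rank}_q(1)$ are immediate from $q$.

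For the pairwise analysis, every alternative in $\{6,\dots,n\}$ sits below $\{1,\dots,5\}$ in every preference, so all pairs between the two blocks are unanimously decided in favour of $\{1,\dots,5\}$ and the pairs inside $\{6,\dots,n\}$ follow the natural order. For the ten pairs inside $\{1,\dots,5\}$ the idea is a clean dichotomy: on pairs where $p_A$ and $p_B$ agree, that side accrues at least $2k\ge k+1$ votes and wins regardless of $q$, while on pairs where they disagree, $q$ breaks the $k$-$k$ tie. Direct inspection yields the tournament in which $1$ beats $\{2,3,5\}$ and loses only to $4$, $2$ beats $\{4,5\}$, $3$ beats $\{2,4\}$, $4$ beats $\{1,5\}$ and $5$ beats $\{3\}$; crucially $(3,2)$ is a disagreement, so $3$ beats $2$ by exactly one vote. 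The Copeland scores at $p=(\overline{p},q[m])$ then work out to $\mathrm{co}(1,p)=n-3$, $\mathrm{co}(2,p)=\mathrm{co}(3,p)=\mathrm{co}(4,p)=n-5$, $\mathrm{co}(5,p)=n-7$ and $\mathrm{co}(j,p)=n-2j+1$ for $j\ge 6$, so $CO(p)=\{1\}=\{z\}$. Applying the transposition $\psi$ that exchanges $2$ and $3$ to $q$ flips only the pair $(2,3)$, raising $\mathrm{co}(2,\cdot)$ to $n-3$ and dropping $\mathrm{co}(3,\cdot)$ to $n-7$; hence $CO(\overline{p},\psi q[m])=\{1,2\}\ni y$, which is exactly the {\sc us} condition.

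The main obstacle is just the bookkeeping of the ten pairwise comparisons within $\{1,\dots,5\}$; the agreement/disagreement dichotomy above makes it mechanical. Checking that no alternative in $\{6,\dots,n\}$ threatens alternative $1$ reduces to the trivial inequality $n-2j+1\le n-11<n-3$ for $j\ge 6$.
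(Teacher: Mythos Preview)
Your proof is correct and follows essentially the same approach as the paper: both construct an explicit profile with $m-1$ voters split into two halves of size $k$ plus the manipulating voter $m$, engineered so that the Copeland scores on $\{1,\dots,5\}$ are $(n-3,n-5,n-5,n-5,n-7)$ and one specific pairwise margin equals exactly $1$, so that a single swap by voter $m$ flips it and brings the raised alternative into a tie with the unique winner. The only differences are cosmetic---the paper uses the blocks $[1,2,3,4,5,\dots]$ and $[3,4,1,5,2,\dots]$ with $q=[5,2,4,1,3,\dots]$ and swaps $2$ and $4$, whereas you use different concrete orders and swap $2$ and $3$---and your exposition is arguably cleaner, since the parity remark and the agreement/disagreement dichotomy make the verification of the ten pairwise comparisons conceptual rather than purely computational.
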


\begin{proof}
    Let $\overline{p} \in \mathcal{L}(A)^{I\setminus\{m\}}$ be defined as follows: 
    \begin{itemize}
        \item for every $j \in \ldbrack \frac{m-1}{2} \rdbrack$, $\overline{p}(j) \coloneq[1,2,3,4,5,(6)\dots,(n)]$;
        \item for every $j \in \{l\in\mathbb{N}:\frac{m-1}{2}+1\leq l\leq m-1\}$, $\overline{p}(j) \coloneq[3,4,1,5,2,(6),\dots,(n)]$.
    \end{itemize}
    Consider $q \coloneq [5,2,4,1,3,(6),\dots,(n)]$. We have
    $\mathrm{co}(1,(\overline{p}, q[m])) = n-3$, $\mathrm{co}(2,(\overline{p}, q[m])) = n-5$, $\mathrm{co}(3,(\overline{p}, q[m])) = n-5$, $\mathrm{co}(4,(\overline{p}, q[m])) = n-5$, $\mathrm{co}(5,(\overline{p}, q[m])) = n-7$, and,  for every $w \in A\setminus\{1,2,3,4,5\}$, $\mathrm{co}(w,(\overline{p}, q[m])) = n-2w+1$. Thus, $CO(\overline{p}, q[m]) = \{1\}$ and $\mathrm{rank}_q(2) + 1 = \mathrm{rank}_q(4) < \mathrm{rank}_q(1)$. Let now $\psi$ be the permutation that exchanges $2$ and $4$. We have $\mathrm{co}(4,(\overline{p}, \psi q[m])) = \mathrm{co}(4,(\overline{p}, q[m])) + 2 = n-3$, $\mathrm{co}(2,(\overline{p}, \psi q[m])) = \mathrm{co}(2,(\overline{p}, q[m])) - 2 = n-7$, and,  for every $w \in A \setminus \{2,4\}$, $\mathrm{co}(w,(\overline{p}, \psi q[m])) = \mathrm{co}(w,(\overline{p}, q[m]))$. Thus, $CO(\overline{p}, \psi q[m]) = \{1,4\}$.  We then conclude that $CO$ is {\sc us}. 
\end{proof}

\begin{proposition}\label{37}
Assume that $n = 3$ and $m = 2$. Then, $CO$ is $\Omega^{CO}$-$\mathbf{K}$-manipulable.
\end{proposition}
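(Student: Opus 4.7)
The plan is to exhibit an explicit manipulating instance and verify it by direct computation. Because $n=3$ and $m=2$, the entire space of preference profiles has only $36$ elements, so pinning down a concrete example is feasible and avoids any general structural argument. Moreover, neither of the sufficient conditions of Theorem~\ref{general} can be used here: a routine check suggests $CO$ fails upward sensitivity when $n=3$ and $m=2$, so the manipulation must be produced by hand.

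The instance I would use is $i=2$ with true preference $q=[1,2,3]$, candidate misreport $q'=[1,3,2]$, and information set
\[
\omega \;\coloneq\; \{\overline{p}\in \mathcal{L}(A)^{\{1\}} : CO(\overline{p},q[2])=\{1,2\}\}.
\]
Since $m=2$, an element $\overline{p}\in \mathcal{L}(A)^{\{1\}}$ is just a single ranking $\overline{p}(1)$, and I would verify three things according to Definition~\ref{defpimanip}: (a)~$\omega\neq \varnothing$, so that $\omega\in \Omega_2^{CO}(q)$; (b)~$CO(\overline{p},q'[2])\succeq_{\mathbf{K}(q)} CO(\overline{p},q[2])$ for every $\overline{p}\in\omega$; and (c)~strict improvement holds for at least one element of $\omega$.

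The key shortcut I expect to exploit is that $\omega$ is in fact a singleton. By enumerating the six possible rankings that individual~$1$ could report and computing the Copeland scores against $q$, I expect to see that the outcome $\{1,2\}$ arises for exactly one choice, namely $\overline{p}^{\ast}(1)=[2,1,3]$: there $1$ and $2$ each beat $3$ pairwise while the $1$-versus-$2$ comparison is tied, giving scores $(1,1,-2)$ and hence $CO(\overline{p}^{\ast},q[2])=\{1,2\}$. For every other ranking of individual~$1$, the Copeland outcome is either $\{1\}$ or $\{2\}$. Once $\omega = \{\overline{p}^{\ast}\}$ is established, conditions (b) and (c) collapse into a single strict-improvement check.

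It then remains to compute $CO(\overline{p}^{\ast},q'[2])$ with $\overline{p}^{\ast}(1)=[2,1,3]$ and $q'[2]=[1,3,2]$. Here alternative~$1$ beats $3$ (both voters prefer $1$ to $3$), while the comparisons $1$ versus $2$ and $2$ versus $3$ are each tied, yielding scores $(1,0,-1)$ and hence $CO(\overline{p}^{\ast},q'[2])=\{1\}$. Since $1\succ_q 2$, Proposition~\ref{properties}$(iv)$ gives $\{1\}\succ_{\mathbf{K}(q)}\{1,2\}$, which completes the verification. The only potential obstacle is the bookkeeping in the six-case enumeration needed to confirm that $\omega$ is a singleton; this is purely routine computation, and no conceptual difficulty is expected.
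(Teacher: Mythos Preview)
Your approach is correct and essentially identical to the paper's: both exhibit a singleton information set $\omega$ corresponding to the outcome $\{1,2\}$ and then verify a single strict improvement, with your instance $(q,q',\overline{p}^{\ast}(1))=([1,2,3],[1,3,2],[2,1,3])$ being the paper's instance $(q,q',\overline{p}^{\ast}(j))=([2,1,3],[2,3,1],[1,2,3])$ up to relabeling alternatives $1\leftrightarrow 2$. One small inaccuracy: in your enumeration you assert that for every other ranking of individual~$1$ the Copeland outcome is $\{1\}$ or $\{2\}$, but the reverse ranking $[3,2,1]$ yields all-ties and hence $CO=\{1,2,3\}$; this does not affect the conclusion that $\omega=\{\overline{p}^{\ast}\}$, but you should correct the claim.
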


\begin{proof}
Let $i \in I$, and let $j$ be the other element of $I$. Thus, $ \mathcal{L}(A)^{I \setminus \{i\}}=\mathcal{L}(A)^{\{j\}}$. 
Let $q \coloneq [2,1,3]$ and $\omega \coloneq \left\{\overline{p} \in \mathcal{L}(A)^{\{j\}}: CO(\overline{p},q[i]) = \{1,2\}\right\}$.
By considering all the six elements in $\mathcal{L}(A)^{\{j\}}$, it is easy to show that 
$\omega = \{\overline{p}^* \}$, where $\overline{p}^* \in \mathcal{L}(A)^{\{j\}}$ is such that $\overline{p}^*(j)=[1,2,3]$. In particular, $\omega \neq \varnothing$ and $\omega \in \Omega^{CO}_i(q)$.
Let now set  $q' \coloneq [2,3,1]$. A simple calculation shows that $CO(\overline{p}^*,q'[i]) = \{2\}$, and $\{2\} \succ_{\mathbf{K}(q)} \{1,2\}$. Thus, $CO$ is $\Omega^{CO}$-$\mathbf{K}$-manipulable.
\end{proof}

\begin{proposition}\label{38}
Assume that $n = 3$ and $m$ is odd. Then, $CO$ is $\mathbf{K}$-strategy-proof.
\end{proposition}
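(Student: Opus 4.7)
The plan is to exploit the rigidity of the Copeland output when $n=3$ and $m$ is odd: because $c_p(x,y)+c_p(y,x)=m$ is odd for any distinct $x,y\in A$, every pairwise comparison is strict, so the pairwise tournament on $A$ is either transitive (with a unique Condorcet winner) or a $3$-cycle. Consequently $CO(p)\in\{\{1\},\{2\},\{3\},A\}$ for every $p\in\mathcal{L}(A)^I$. The key auxiliary observation is that, if $q$ ranks $u$ above $v$, then for every $\overline{p}\in\mathcal{L}(A)^{I\setminus\{i\}}$ and $q'\in\mathcal{L}(A)$, one has $c_{p'}(u,v)-c_{p'}(v,u)\le c_p(u,v)-c_p(v,u)$, where $p=(\overline{p},q[i])$ and $p'=(\overline{p},q'[i])$. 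Intuitively, switching voter $i$'s ballot can only move her contribution away from the $(u,v)$ count and toward the $(v,u)$ count, so the pairwise standing of $u$ against $v$ cannot improve.

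Suppose for contradiction that $CO$ is $\mathbf{K}$-manipulable, witnessed by some $i\in I$, $\overline{p}\in\mathcal{L}(A)^{I\setminus\{i\}}$ and $q,q'\in\mathcal{L}(A)$ such that $B'\coloneq CO(\overline{p},q'[i])\succ_{\mathbf{K}(q)}B\coloneq CO(\overline{p},q[i])$. Proposition~\ref{properties}$(i)$ gives $|B\cap B'|\le 1$, which together with the four-element range of $CO$ leaves exactly three possibilities: (i) $B=\{x\}$ and $B'=\{y\}$ with $x\neq y$ and $y\succ_q x$; (ii) $B=A$ and $B'=\{y\}$, forcing $y$ to be the top alternative of $q$; (iii) $B=\{x\}$ and $B'=A$, forcing $x$ to be the bottom alternative of $q$.

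I then apply the auxiliary observation to each case. In (i), $q$ has $y$ above $x$ and $y$ loses to $x$ in $p$ (since $x$ is the Condorcet winner), so $y$ still loses to $x$ in $p'$, contradicting $B'=\{y\}$. In (ii), the $3$-cycle at $p$ has one opponent $b$ of $y$ beating $y$; since $y$ is at the top of $q$ we have $y\succ_q b$, and the observation forces $b$ to still beat $y$ in $p'$, so $y$ cannot be the Condorcet winner of $p'$. In (iii), $x$ beats each of its two opponents $a$ in $p$; since $x$ is at the bottom of $q$ we have $a\succ_q x$ for each such $a$, and the observation forces each $a$ to still lose to $x$ in $p'$, making $x$ the Condorcet winner of $p'$ and contradicting $B'=A$.

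The main conceptual step is the reduction to these three cases: it is Proposition~\ref{properties}$(i)$, together with the structural description of $CO(p)$ under the hypothesis that $m$ is odd, that pins down the relative positions of the alternatives in $B$ and $B'$ inside $q$. Once this is in place, the verification of each case is elementary, as it amounts to the one-line auxiliary observation combined with the fact that, for odd $m$, a strict pairwise majority cannot collapse to a tie.
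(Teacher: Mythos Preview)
Your proof is correct and follows essentially the same approach as the paper: both arguments first use the oddness of $m$ to conclude that $CO(p)$ is always a singleton or all of $A$, and then perform a case analysis driven by the observation that, for alternatives $u\succ_q v$, the pairwise margin of $u$ over $v$ can only weakly decrease when voter $i$ deviates. The only cosmetic difference is that you organize the cases by the pair $(B,B')$, whereas the paper organizes them by $B=CO(p)$ alone and, for each $B$, identifies the unique $B'$ that could Kelly-dominate it; the underlying logic is identical.
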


\begin{proof}
    Since $m$ is odd, for $x,y \in A$ and $p \in \mathcal{L}(A)^I$ we cannot have $c_p(x,y) = c_p(y,x)$. Let $A = \{x,y,z\}$ and take any $p \in \mathcal{L}(A)^I$. The only possible values of $\mathrm{co}(x,p)$ are $2,0,-2$: if $c_p(x,y) > c_p(y,x)$ and $c_p(x,z) > c_p(z,x)$, then $\mathrm{co}(x,p) = 2$; if $c_p(x,y) > c_p(y,x)$ and $c_p(x,z) < c_p(z,x)$, then $\mathrm{co}(x,p) = 0$; if $c_p(x,y) < c_p(y,x)$ and $c_p(x,z) < c_p(z,x)$, then $\mathrm{co}(x,p) = -2$. By Lemma \ref{lemsumscore}, $\sum_{x\in A} \mathrm{co}(x,p) = 0$, which implies that we have $\mathrm{co}(x,p) = 2$ if and only if $CO(p) = \{x\}$. Moreover,  if $\mathrm{co}(x,p) = -2$, then $\mathrm{co}(y,p) = 2$ or $\mathrm{co}(z,p) = 2$, and the alternative with score $2$ is the unique winner. If otherwise we have $\mathrm{co}(x,p) = \mathrm{co}(y,p) = \mathrm{co}(z,p) = 0$, then $CO(p) = \{x,y,z\}$. Thus $|CO(p)| = 1$ or $|CO(p)| = 3$. Let now $i \in I$, $q,q' \in \mathcal{L}(A)$, and  suppose $q = [x,y,z]$. Take $\overline{p} \in \mathcal{L}(A)^{I\setminus \{i\}}$, set $p \coloneq (\overline{p},q[i])$ and $p'=(\overline{p},q'[i])$. We consider the two possibilities for the size of $CO(p)$ separately. 
    \begin{itemize}
        \item If $|CO(p)| = 1$, then we have three possibilities. If $CO(p) = \{x\}$, then there is no $B \subseteq A$ such that $B \succ_{\mathbf{K}(q)} CO(p)$; if $CO(p) = \{y\}$, then $c_p(y,x) > c_p(x,y)$. Note that the only $B \subseteq A$ such that $B \succ_{\mathbf{K}(q)} \{y\}$ is $B = \{x\}$. Since $\mathrm{rank}_q(x) = 1$, we have $c_{p'}(y,x) > c_{p'}(x,y)$, thus $CO(p') \neq \{x\}$. If $CO(p) = \{z\}$, then $c_p(z,x) > c_p(x,z)$ and $c_p(z,y) > c_p(y,z)$. Since $\mathrm{rank}_q(z) = 3$, we have $c_{p'}(z,x) > c_{p'}(x,z)$ and $c_{p'}(z,y) > c_{p'}(y,z)$. Thus $CO(p') = \{z\}$. In any case, it is not possible to have $CO(p') \succ_{\mathbf{K}(q)} CO(p)$.
        \item If $|CO(p)| = 3$, we have  $CO(p) = \{x,y,z\}$ and thus the only $B \subseteq A$ such that $B \succ_{\mathbf{K}(q)} \{y\}$ is $B = \{x\}$. Since $\mathrm{co}(x,p) = \mathrm{co}(y,p) = \mathrm{co}(z,p) = 0$, we have $c_p(x,y) < c_p(y,x)$ or $c_p(x,z) < c_p(z,x)$. Now,  $CO(p') = \{x\}$ holds if and only if $c_{p'}(x,y) > c_{p'}(y,x)$ and $c_{p'}(x,z) > c_{p'}(z,x)$. But since $\mathrm{rank}_q(x) = 1$, that is not possible. As a consequence we do not have $CO(p') \succ_{\mathbf{K}(q)} CO(p)$. 
    \end{itemize}
    We conclude that $CO$ is $\mathbf{K}$-strategy-proof.
\end{proof}

\begin{proposition}\label{39}
    Assume that $n = 4$ and $m$ is odd. Then, $CO$ is $\Omega^{CO}$-$\mathbf{K}$-manipulable.
\end{proposition}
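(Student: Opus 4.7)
The plan is to mirror the strategy of Proposition \ref{37}, adapted to the four-alternative setting. I pick the manipulator $i=m$, the truthful preference $q=[3,4,1,2]$, and the misreport $q'=[3,1,4,2]$, which differ only by transposing alternatives $1$ and $4$. I then set $\omega=\{\overline{p}\in\mathcal{L}(A)^{I\setminus\{i\}}:CO(\overline{p},q[i])=\{1,2\}\}$. The goal is to exhibit an explicit witness $\overline{p}'\in\omega$ with $CO(\overline{p}',q'[i])=\{1\}$, which together with $\{1\}\succ_{\mathbf{K}(q)}\{1,2\}$ from Proposition \ref{properties}$(iv)$ supplies the strict inequality, and then show that every $\overline{p}\in\omega$ satisfies $CO(\overline{p},q'[i])\succeq_{\mathbf{K}(q)}\{1,2\}$.

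For the witness, letting $k=(m-1)/2$, I assign $k$ of the $m-1$ other voters the preference $[1,2,3,4]$ and the remaining $k$ the preference $[2,4,1,3]$. A direct pairwise tally produces the tournament $1>2,\ 1>3,\ 4>1,\ 2>3,\ 2>4,\ 3>4$ with counts $(c_p(1,2),c_p(1,3),c_p(4,1),c_p(2,3),c_p(2,4),c_p(3,4))=(k+1,2k,k+1,2k,2k,k+1)$, hence Copeland scores $(1,1,-1,-1)$ and $CO(\overline{p}',q[i])=\{1,2\}$. The critical feature is $c_p(4,1)=(m+1)/2$, the minimum value compatible with $4$ beating $1$. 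Under the misreport, individual $i$'s $1$ versus $4$ contribution switches sides, so $c_{p'}(4,1)=(m-1)/2$ and the edge reverses to $1>4$; all other edges are preserved. The new tournament $1>2,\ 1>3,\ 1>4,\ 2>3,\ 2>4,\ 3>4$ has scores $(3,1,-1,-3)$, giving $CO(\overline{p}',q'[i])=\{1\}$.

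The main technical step is the universal direction. Since $m$ is odd no pairwise comparison can tie, so Copeland scores for $n=4$ lie in $\{-3,-1,1,3\}$. Combined with Lemma \ref{lemsumscore} and the requirement that $\{1,2\}$ be the set of maximizers, the only feasible score profile for $\overline{p}\in\omega$ is $(s(1),s(2),s(3),s(4))=(1,1,-1,-1)$. A finite enumeration by the identity of $1$'s unique loss (to $2$, to $3$, or to $4$, with two distinct cyclic completions when the loss is to $2$) yields exactly four tournament structures realising this profile; three of them satisfy $1>4$, and only the fourth (the one used in the witness above) satisfies $4>1$.

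The move $q\to q'$ flips only individual $i$'s ranking of $1$ versus $4$, so at most the $1$ versus $4$ edge can change. In the three tournaments with $1>4$ the bound $c_p(4,1)\le(m-1)/2$ gives $c_{p'}(4,1)\le(m-3)/2<(m+3)/2\le c_{p'}(1,4)$, so the edge is preserved and $CO(\overline{p},q'[i])=\{1,2\}$. In the fourth tournament $c_p(4,1)\ge(m+1)/2$, and a short calculation shows the $1$ versus $4$ edge flips iff $c_p(4,1)=(m+1)/2$; if it flips, the resulting tournament coincides with the witness tournament and $CO(\overline{p},q'[i])=\{1\}$, otherwise the scores are unaffected and $CO(\overline{p},q'[i])=\{1,2\}$. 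In every case $CO(\overline{p},q'[i])\in\{\{1\},\{1,2\}\}$, and since $1\succ_q 2$ in $q=[3,4,1,2]$ both of these sets are $\succeq_{\mathbf{K}(q)}\{1,2\}$. The main obstacle is precisely this universal verification, but the restriction imposed by odd $m$ on the Copeland scores keeps the case analysis finite and manageable.
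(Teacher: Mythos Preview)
Your proof is correct and follows essentially the same strategy as the paper's: pick a truthful ranking $q$ and a misreport $q'$ differing by a single adjacent transposition, take $\omega$ to be the preimage of a two-element winner set, exhibit a witness where the misreport shrinks the output to the better singleton, and for the universal step argue that only the one affected pairwise edge can change and that any change is in the favourable direction. The paper's universal verification is somewhat leaner: rather than classifying the four tournaments compatible with the score vector, it simply observes that from $CO(\overline{p},q[i])=\{3,4\}$ one has $\mathrm{co}(3)=\mathrm{co}(4)>\mathrm{co}(1),\mathrm{co}(2)$, and that if the single relevant edge flips then the raised alternative's score goes up by $2$ while its counterpart's drops by $2$, forcing the singleton output directly---no enumeration needed.
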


\begin{proof}
    Let $q \coloneq [1,2,3,4]$ and $\omega \coloneq \{\overline{p} \in \mathcal{L}(A)^{I \setminus \{m\}}: CO(\overline{p},q[m]) = \{3,4\}\}$. Consider $\overline{p} \in \mathcal{L}(A)^{I \setminus \{m\}}$ defined as follows: 
    \begin{itemize}
        \item for every $j \in \ldbrack\frac{m-1}{2}\rdbrack$, $\overline{p}(j) \coloneq [4,2,3,1]$;
        \item for every $j \in \{\frac{m-1}{2},\dots,m-1\}$, $\overline{p}(j) \coloneq [3,4,1,2]$.
    \end{itemize}
    We have 
   $\mathrm{co}(1,(\overline{p}, q[m])) = -1;\ \mathrm{co}(2,(\overline{p}, q[m])) = -1;\ \mathrm{co}(3,(\overline{p}, q[m])) = 1;\
\mathrm{co}(4,(\overline{p}, q[m])) = 1.$
    Thus, $CO(\overline{p}, q[m]) = \{3,4\}$. Hence, $\omega \neq \varnothing$ and we have $\omega \in \Omega^{CO}_i(q)$. Let now $\psi$ be the permutation that exchanges $2$ and $3$ and $q'=\psi q$. We have 
    $\mathrm{co}(1,(\overline{p}, q'[m])) = -1;\ \mathrm{co}(2,(\overline{p}, q'[m])) = -3;\ \mathrm{co}(3,(\overline{p}, q'[m])) = 3;\ \mathrm{co}(4,(\overline{p}, q'[m])) = 1.$
    Thus $CO(\overline{p}, q'[m]) = \{3\} \succ_{\mathbf{K}(q)} \{3,4\} = CO(\overline{p}, q[m])$. Take now any $\overline{p}' \in \omega$. By definition we have $CO(\overline{p}, q[m]) = \{3,4\}$. Thus $\mathrm{co}(3,(\overline{p}', q[m])) = \mathrm{co}(4,(\overline{p}', q[m])) > \mathrm{co}(1,(\overline{p}', q[m]))$ and $\mathrm{co}(3,(\overline{p}', q[m])) > \mathrm{co}(2,(\overline{p}', q[m]))$. Consider now $(\overline{p}',q'[m])$. If $c_{(\overline{p}', q[m])}(3,2) > c_{(\overline{p}', q[m])}(2,3)$ or $c_{(\overline{p}', q[m])}(2,3) > c_{(\overline{p}', q[m])}(3,2) + 1$, then, for every $x \in A$, $\mathrm{co}(x,(\overline{p}', q[m])) = \mathrm{co}(x,(\overline{p}', q'[m]))$, thus $CO((\overline{p}', q'[m])) = CO(\overline{p}', q[m])$. If otherwise $c_{(\overline{p}', q[m])}(2,3) = c_{(\overline{p}', q[m])}(3,2) + 1$, then we have 
    $\mathrm{co}(1,(\overline{p}', q'[m])) = \mathrm{co}(1,(\overline{p}', q[m]));\ 
        \mathrm{co}(2,(\overline{p}', q'[m])) = \mathrm{co}(2,(\overline{p}', q[m])) - 2;\ 
        \mathrm{co}(3,(\overline{p}', q'[m])) = \mathrm{co}(3,(\overline{p}', q[m])) + 2;\ 
        \mathrm{co}(4,(\overline{p}', q'[m])) = \mathrm{co}(4,(\overline{p}', q[m])).$
    Thus $CO((\overline{p}', q'[m])) = \{3\}$. In any case, we have $CO((\overline{p}', q'[m])) \succeq_{\mathbf{K}(q)} CO((\overline{p}', q[m]))$. Then $CO$ is $\Omega^{CO}$-$\mathbf{K}$-manipulable.
\end{proof}
We are now ready to prove Theorem \ref{Copeland-main}.
\begin{proof}[Proof of Theorem \ref{Copeland-main}] Consider Propositions \ref{34}, \ref{35}, \ref{36}, and \ref{40}, and apply Theorem \ref{general}, and Propositions \ref{37}, \ref{38}, and \ref{39}.
\end{proof}

\section*{References}

\noindent Andjiga, N.G., Mbih, B., Moyouwou, I., 2008. Manipulation of voting schemes with restricted beliefs
Journal of Mathematical Economics 44, 1232-1242.
\vspace{2mm}

\noindent Bandyopadhyay, T., 1982. Threats, counter-threats and strategic manipulation for non-binary group decision rules. Mathematical Social Sciences 2, 145-155.
\vspace{2mm}

\noindent Bandyopadhyay, T., 1983. Manipulation of non-imposed, non-oligarchic, non-binary group decision rules. Economics Letters 11, 69-73.
\vspace{2mm}

\noindent Barberà, S., 1977a. Manipulation of social decision functions. Journal of Economic Theory 15, 266-278.
\vspace{2mm}

\noindent  Barberà, S., 1977b. The manipulation of social choice mechanisms that do not leave “too much” to chance. Econometrica 45, 1573-1588.
\vspace{2mm}

\noindent Barberà, S., Dutta, B., Sen, A., 2001. Strategy-proof social choice correspondences. Journal of Economic Theory 101, 374-394.
\vspace{2mm}

\noindent Barberà, S., Bossert, W., Pattanaik, P.K., 2004. Ranking sets of objects. In: Barberà, S., Hammond, P.J., Seidl, C. (eds) Handbook of Utility Theory. Springer, Boston, MA.
\vspace{2mm}

\noindent Barberà, S., 2011. Strategy-proof social choice. Handbook of Social Choice and Welfare, volume 2, 731-831.
\vspace{2mm}

\noindent Bebchuk, L.A., 1980. Ignorance and manipulation. Economics Letters 5, 119–123.
\vspace{2mm}

\noindent Brandt, F., Brill, M., 2011. Necessary and sufficient conditions for the strategy-proofness of irresolute social choice functions. In Proceedings of the 13th Conference on Theoretical Aspects of Rationality and Knowledge (TARK XIII). Association for Computing Machinery, New York, NY, USA, 136-142.
\vspace{2mm}

\noindent Brandt, F., 2015. Set-monotonicity implies Kelly-strategyproofness. Social Choice and Welfare 45, 793-804.
\vspace{2mm}

\noindent Brandt, F., Bullinger, M., Lederer, P., 2022a. On the indecisiveness of Kelly-strategyproof social choice functions. Journal of Artificial Intelligence Research, 1093-1130.
\vspace{2mm}

\noindent Brandt F., Saile, C., Stricker C., 2022b. Strategyproof social choice when preferences and outcomes may contain ties. Journal of Economic Theory 202, 105447.
\vspace{2mm}

\noindent Bubboloni, D., Gori, M., 2016. Resolute refinements of social choice correspondences. Mathematical Social Sciences 84, 37-49.
\vspace{2mm}

\noindent Bubboloni, D., Gori, M., 2021. Breaking ties in collective decision-making. Decisions in Economics and Finance 44, 411-457.
\vspace{2mm}

\noindent Campbell, D.E., Kelly, J.S., Qi, S., 2018. A stability property in social choice theory. International Journal of Economic Theory 14, 85-95.
\vspace{2mm}

\noindent Ching, S., Zhou, L., 2002. Multi-valued strategy-proof social choice rules. Social Choice and Welfare 19, 569-580.
\vspace{2mm}

\noindent Conitzer, V., Walsh, T., Xia, L., 2011. Dominating manipulations in voting with partial information. In: Proceedings
of the Twenty-Fifth AAAI Conference on Artificial Intelligence, 638-643.
\vspace{2mm}

\noindent Copeland, A.H., 1951. A reasonable social welfare function. In: University of Michigan
Seminar on Applications of Mathematics to the social sciences.
\vspace{2mm}

\noindent Duggan, J., Schwartz, T., 2000. Strategic manipulability without resoluteness
or shared beliefs: Gibbard-Satterthwaite generalized. Social Choice and Welfare 17, 85-93.
\vspace{2mm}

\noindent Endriss, U., Obraztsova, S., Polukarov, M., Rosenschein, J.S., 2016. Strategic voting with incomplete information. In: Proceedings of the Twenty-Fifth International Joint Conference on Artificial Intelligence. IJCAI-16, pp. 236-242.
\vspace{2mm}

\noindent Farquharson, R., 1969. Theory of Voting. Yale University Press, New Haven.
\vspace{2mm}

\noindent Fishburn, P.C., 1977. Condorcet social choice functions. SIAM J. Appl. Math. 33,
469–489.
\vspace{2mm}

\noindent G\"ardenfors, P., 1976. Manipulation of social choice functions. Journal of Economic Theory 13, 217-228.
\vspace{2mm}

\noindent Gibbard, A., 1973. Manipulation of voting schemes: A general result. Econometrica 41, 587-601.
\vspace{2mm}

\noindent Gori, M., 2021. Manipulation of social choice functions under incomplete information. Games and Economic Behavior 129, 350-369.
\vspace{2mm}

\noindent Kelly, J.S., 1977. Strategy-proofness and social choice functions without singlevaluedness. Econometrica 45, 439-446.
\vspace{2mm}

\noindent MacIntyre, I., Pattanaik, P.K., 1981. Strategic voting under minimally binary group decision functions. Journal of Economic Theory 25, 338-352. 
\vspace{2mm}

\noindent Moulin, H., 1981. Prudence versus sophistication in voting strategy. Journal of Economic Theory 24, 398-412.
\vspace{2mm}

\noindent Nurmi, H., 1987. Comparing Voting Systems. D. Reidel Publishing Company, Dordrecht.
\vspace{2mm}

\noindent Nehring, K., 2000. Monotonicity implies generalized strategy-proofness for correspondences. Social Choice and Welfare 17, 367-375.
\vspace{2mm}

\noindent Pattanaik, P.K., 1975. Strategic voting without collusion under binary and democratic group decision rules. The Review of Economic Studies 42, 93-103.
\vspace{2mm}

\noindent     Sanver, M.R.,  Zwicker, W.S., 2012. Monotonicity properties and their adaptation to irresolute social choice rules. Social Choice and Welfare 39, No. 2/3, Special Issue in Honour of Maurice Salles on Developments in Social Choice and Welfare Theories, 371--398.
\vspace{2mm}

\noindent Reijngoud, A., Endriss, U., 2012. Voter response to iterated poll information. In: Proceedings of the 11th International Conference on Autonomous Agents and Multiagent Systems. AAMAS, pp. 635-644.
\vspace{2mm}

\noindent Sato, S., 2008. On strategy-proof social choice correspondences. Social Choice and Welfare 31, 331-343.
\vspace{2mm}

\noindent Satterthwaite, M.A., 1975. Strategy-proofness and Arrow's conditions: Existence and correspondence theorems for voting procedures and social welfare functions. Journal of Economic Theory 10, 187-217. 
\vspace{2mm}

\noindent Sengupta, M., 1978. On a difficulty in the analysis of strategic voting. Econometrica 46, 331-343.
\vspace{2mm}

\noindent Sengupta, M., 1980. The knowledge assumption in the theory of strategic voting. Econometrica 48, 1301-1304 (Notes and comments).
\vspace{2mm}

\noindent Taylor, A.D., 2002. The manipulability of voting systems. The American Mathematical Monthly 109, 321-337.
\vspace{2mm}

\noindent Taylor, A.D., 2005. Social choice and the mathematics of manipulation. Cambridge: Cambridge University Press.
\vspace{2mm}

\noindent Terzopoulou, Z., Endriss, U., 2019. Strategyproof judgment aggregation under partial information. Social Choice and Welfare 53, 415-442.
\vspace{2mm}

\noindent Tsiaxiras, J., 2021. Strategic voting under incomplete information in approval-based committee elections. MSc Thesis. University of Amsterdam. Available at \url{https://eprints.illc.uva.nl/id/eprint/1799/1/MoL-2021-09.text.pdf}
\vspace{2mm}

\noindent Veselova, Y.A., 2020. Does incomplete information reduce manipulability?. Group Decision and Negotiation 29, 523-548. 
\vspace{2mm}

\noindent Veselova, Y.A.,  Karabekyan, D., 2023. Simultaneous manipulation under incomplete information. Working paper. Available at \url{https://api.semanticscholar.org/CorpusID:262231173}
\vspace{20mm}

\end{document}